\newtheorem{assumption}{Assumption}
\newtheorem{lemma}{Lemma}
\newtheorem{theorem}{Theorem}
\newtheorem{definition}{Definition}
\newtheorem{remark}{Remark}
\newtheorem{corollary}{Corollary}
\title{\bf Aggregating Conformal Prediction Sets via $\alpha$-Allocation}
\author{Congbin Xu$^{*1}$,\ Yue Yu$^{*1}$,\ Haojie Ren$^{\text{†}2}$, Zhaojun Wang$^{1}$,\  and Changliang Zou$^{1}$              \\
	{$^1${\small\it School of Statistics and Data Science, Nankai University,Tianjin, China}} \\
	{$^2${\small\it{School of Mathematical Sciences, Shanghai Jiao Tong University, Shanghai, China}}}\\
}
\date{}
\def\alphabold{\boldsymbol{\alpha}}
\def\betabold{\boldsymbol{\beta}}
\def\vzero{\boldsymbol{0}}
\def\vone{\boldsymbol{1}}
\def\vt{\mathbf{t}}
\def\vv{\mathbf{v}}
\def\vw{\mathbf{w}}
\def\vx{\mathbf{x}}
\def\mW{\mathbf{W}}
\def\mX{\mathbf{X}}
\def\Dcal{{\mathcal{D}}}
\def\Gcal{{\mathcal{G}}}
\def\Ical{{\mathcal{I}}}
\def\Lcal{{\mathcal{L}}}
\def\Ncal{{\mathcal{N}}}
\def\Scal{{\mathcal{S}}}
\def\Xcal{{\mathcal{X}}}
\def\Ycal{{\mathcal{Y}}}
\def\Op{O_{p}}
\def\op{o_{p}}
\newcommand{\indicator}{\mathbbm{1}}
\def\normal{\mathcal{N}}
\def\P{\mathbb{P}}
\def\E{\mathbb{E}}
\def\real{\mathbb{R}}
\newcommand{\norm}[1]{\lVert #1\rVert}
\newcommand{\abs}[1]{|#1|}
\newcommand{\bigabs}[1]{\bigl| #1\bigr|}
\newcommand{\Bigabs}[1]{\Bigl| #1\Bigr|}
\newcommand{\lrs}[1]{\left(#1\right)}
\newcommand{\lrm}[1]{\left[#1\right]}
\newcommand{\lrl}[1]{\left\{#1\right\}}
\newcommand{\Rmnum}[1]{{\rm\expandafter\@slowromancap\romannumeral #1@}}
\newcommand{\refeq}[1]{Eq.(\ref{#1})}
\begin{document}
\maketitle
\def\thefootnote{*}\footnotetext{The first two authors equally contributed to this work.}\def\thefootnote{\arabic{footnote}}
\def\thefootnote{†}\footnotetext{Corresponding author: \href{haojieren@sjtu.edu.cn}{haojieren@sjtu.edu.cn}}.
\begin{abstract}
Conformal prediction offers a distribution-free framework for constructing prediction sets with finite-sample coverage. Yet, efficiently leveraging multiple conformity scores to reduce prediction set size remains a major open challenge. Instead of selecting a single best score, this work introduces a principled aggregation strategy, COnfidence-Level Allocation (COLA), that optimally allocates confidence levels across multiple conformal prediction sets to minimize empirical set size while maintaining provable coverage. Two variants are further developed, COLA-s and COLA-f, which guarantee finite-sample marginal coverage via sample splitting and full conformalization, respectively. In addition, we develop COLA-l, an individualized allocation strategy that promotes local size efficiency while achieving asymptotic conditional coverage. Extensive experiments on synthetic and real-world datasets demonstrate that COLA achieves considerably smaller prediction sets than state-of-the-art baselines while maintaining valid coverage.
\end{abstract}
\noindent\textit{Keywords:} Conditional coverage, conformal inference, individualized decision, model averaging, optimization 
\section{Introduction}
Conformal prediction provides a general framework for constructing prediction sets with a finite-sample coverage guarantee, irrespective of the data distribution, the prediction model, or the choice of nonconformity score. Let $\Dcal=\lrl{\lrs{\mX_i,Y_i}}_{i=1}^n$ be a set of independent and identically distributed (i.i.d.) 
hold-out data from some joint distribution $P$, where $\mX_i \in\Xcal \subset\mathbb{R}^d$ represents features, and $Y_i \in\mathcal{Y}\subset\mathbb{R}$ is the response. For a test point $(\mX_{n+1},Y_{n+1})$ independently drawn from $P$, with $\mX_{n+1}$ observed but $Y_{n+1}$ unobserved, conformal prediction constructs a prediction set $\widehat{C}(\mX_{n+1};\alpha)$ such that $\P( Y_{n+1}\in\widehat{C}(\mX_{n+1};\alpha)) \geq 1-\alpha$
for a specified confidence level $1-\alpha\in(0,1)$. In the widely used split conformal prediction framework \citep{vovk2005algorithmic}, one assumes access to a nonconformity score function $S(\cdot,\cdot): \mathcal{X}\times\mathcal{Y}\rightarrow\mathbb{R}$, typically derived from a pretrained model, that quantifies the discrepancy between a hypothetical value $y$ and the model's prediction. The $(1-\alpha)$-level conformal prediction set for $Y_{n+1}$ then takes the form as $\widehat{C}(\mX_{n+1};\alpha) =\{ y : S(\mX_{n+1}, y) \leq Q_\alpha(\{S(\mX_i,Y_i)\}_{i= 1}^n\cup\{\infty\})\}$, where $Q_\alpha(\cdot)$ denotes the empirical $(1-\alpha)$ quantile.

In many applications, it is common to have access to multiple nonconformity score functions $\{S_k: \Xcal \times \Ycal \rightarrow \real\}_{k=1}^K$ with $K\geq 2$. Such scores may arise from predictive models trained on distinct data sources, implemented with different learning algorithms or hyperparameter configurations, or defined via heterogeneous nonconformity measures. The efficiency of a conformal prediction set--—typically measured by its size \citep{sadinle2019least,kiyani2024length}—--can vary markedly across scores. This raises a natural question: how to effectively leverage multiple scores to achieve smaller sets while maintaining the desired coverage? This problem has attracted considerable attention. One common strategy is to select a single best score from candidates by minimizing the set size \citep{yang2024selection,liang2024conformal}. However, there may be no clear winner among the candidate scores in complex real-world data. In such scenarios, an alternative strategy is to integrate information from multiple scores, possibly more efficient and stable than relying solely on a single selected one. The benefits of analogous principles have been supported in related domains, such as ensemble learning \citep{breiman1996bagging}, model averaging \citep{claeskens2008model}, and mixture of experts \citep{jacobs1991adaptive}. Motivated by this perspective, we aim to explore an aggregation-based approach to form conformal prediction sets. 

Recent works have begun to address this multi-score challenge, but important gaps remain. First, \textit{set-level combination} strategies \citep{wu2023multi,gasparin2024merging,qin2024sat} merge $K$ raw prediction sets at level $1-\alpha$. While these approaches guarantee (or approximately) marginal coverage under arbitrary dependence, they ignore the conformal machinery and therefore do not optimize the weights of the constituent scores to minimize the final set size. Second, \textit{score-level combination} methods instead attempt to form a single scalar score by a weighted average of the $S_k$'s or their underlying prediction models \citep{luo2025weighted,rivera2025conformal,yang2024selection}. This route, however, implicitly presumes different scores are directly commensurable. In practice, this assumption is often violated: residual-based scores, quantile-based scores \citep{romano2019conformalized}, and other heterogeneous scores may differ both in scale and interpretation.

This paper introduces a new method that is closely related to, but distinct from, existing approaches previously discussed. Our approach integrates information from multiple scores by allocating different confidence levels across the sets induced by candidate score functions, and then intersecting these sets to form the final prediction set. Although the general idea of intersecting level-adjusted conformal prediction sets traces back to Bonferroni-type constructions (e.g., \cite{lei2018distribution}), our contribution lies in constructing data-driven confidence level allocations. We formalize this idea in a framework of COnfidence-Level Allocation (COLA), which adaptively aggregates conformal prediction sets via optimally assigned confidence levels.

The core idea of the proposed method is outlined here, with precise definitions and notation deferred to Section~\ref{sec:COLA-marginal}. Let $\widehat{C}_{k}(\vx;\alpha_k)=\lrl{y:S_k(\vx,y)\leq Q_{\alpha_k} (\{ S_{k,i} \}_{i = 1}^n \cup \{\infty\}}$ denote the conformal prediction set induced by score function $S_k$, where $S_{k,i} = S_k(\mX_i,Y_i)$ for brevity.
We treat the overall confidence level $\alpha$ as a budget, allocated across the $K$ candidate sets to minimize the average size of the aggregated prediction set. Specifically, the allocation $\widehat{\alphabold}=\lrs{\widehat{\alpha}_1,\dots,\widehat{\alpha}_K}$ is obtained by solving the optimization problem:
$$\widehat{\alphabold}\in \arg \min_{\alphabold\in \boldsymbol{\Theta}} \frac{1}{n} \sum_{i=1}^n \Big| \cap_{k = 1}^K \widehat{C}_{k} (\mX_{i};\alpha_k) \Big|, $$where $ \boldsymbol{\Theta} = \{\alphabold \in \real^K : \norm{\alphabold}_1 = \alpha,\alphabold\geq\vzero\}$. The aggregated conformal prediction set is accordingly constructed as
\begin{align}\label{cpso}
	\widehat{C}(\mX_{n+1};\alpha) = \cap_{k = 1}^K \widehat{C}_{k} (\mX_{n+1};\widehat{\alpha}_{k}).
\end{align}
We say that a conformal prediction set satisfies asymptotically optimal allocation if it is asymptotically valid and its expected size converges to that of the oracle allocation in the population construction (see Definition~\ref{def:asy_opt}). Under mild conditions, the set in \refeq{cpso} attains this optimality.
This procedure prioritizes set efficiency and is thus referred to as $\mathrm{COLA\text{-}e}$. As it fails to guarantee finite-sample validity, we also consider two variants: $\mathrm{COLA\text{-}s}$, which employs sample splitting and is straightforward to implement, and $\mathrm{COLA\text{-}f}$, which adopts a full-conformal strategy to enhance efficiency at the expense of higher computational cost.

Additionally, existing approaches for score selection or aggregation have primarily targeted at improving average size efficiency, failing to adapt choices to individual test points. Individualized procedures are increasingly crucial, particularly in contexts of personalized adaptation such as precision medicine, where predictive information from multiple hospitals—--each with varying patient demographics and diagnostic equipment—--must be carefully aggregated to guide treatment decisions for a specific patient. Motivated by this need, we propose $\mathrm{COLA\text{-}l}$, an individualized allocation strategy that effectively utilizes multiple score functions to achieve size efficiency at the individual level while guaranteeing asymptotic conditional coverage \citep{lei2014distribution,guan2022localized}.

The main contributions of this work are summarized as follows:
\begin{itemize}
	\item We introduce COLA, a general framework for aggregating multiple conformal prediction sets through data-driven confidence-level allocation, yielding more efficient prediction regions.
	\item Four principled variants are developed: $\mathrm{COLA\text{-}e}$, which emphasizes efficiency; $\mathrm{COLA\text{-}s}$ and $\mathrm{COLA\text{-}f}$, which ensure finite-sample marginal validity; and $\mathrm{COLA\text{-}l}$, which achieves individualized efficiency.
	\item Under mild conditions, we show COLA satisfies asymptotically optimal allocation in both marginal and individualized regimes.
	\item Extensive empirical experiments on synthetic and real-world datasets corroborate the theory and demonstrate the flexibility and performance advantages of COLA.
\end{itemize}

\subsection{Related Work}\label{subsec:related work}
We begin by elaborating on the score selection methods. \cite{yang2024selection} proposed two selection algorithms: one achieves finite-sample efficiency but only asymptotic validity, while the other uses sample splitting to guarantee finite-sample validity at the cost of reduced efficiency. \cite{liang2024conformal} further improved upon these methods by leveraging the full conformal prediction procedure. These approaches are easy to implement and supported by comprehensive theoretical guarantees, making them attractive in practice. However, restricting attention to a single score from the candidates may overlook important information, particularly when candidate scores capture different aspects of the data or when none is well aligned with the data distribution.

Set-level combination strategies aim to merge $K$ individual prediction sets, each calibrated at level $1 - \alpha$, possibly under arbitrary dependence. \cite{gasparin2024merging} proposed a majority vote rule that includes labels covered by more than half of the sets, achieving marginal coverage of at least $1-2\alpha$. Other works, such as \citet{qin2024sat}, \citet{wu2023multi} and \citet{wong2025improving}, considered merging multiple conformal p-values for a hypothesized label $y$, retaining labels whose aggregated p-values exceed $\alpha$. While these methods guarantee valid coverage, they do not explicitly target optimizing the efficiency of conformal prediction sets.

In addition to ensuring coverage, a key objective in conformal prediction is achieving set efficiency, typically in terms of set size. Theoretically, size optimality has been established under structural assumptions such as consistent quantile regression~\citep{sesia2020comparison}, symmetric noise~\citep{lei2018distribution}, and other specific settings~\citep{burnaev2014efficiency,izbicki2019flexible}. Beyond these, recent methods have pursued efficiency improvements in more general settings, either by refining scores~\citep{xie2024boosted}, incorporating conformal training objectives ~\citep{bellotti2021optimized,einbinder2022training,stutz2022learning}, or directly minimizing the set size through a constrained optimization problem~\citep{bai2022efficient,fan2023utopia,kiyani2024length}. However, a unified, data-driven framework for efficiently aggregating multiple scores has not been provided, which motivates our proposed COLA framework.

\subsection{Organization and Notation}
\textbf{Organization.} The remainder of the paper is organized as follows. Section~\ref{sec:COLA-marginal} introduces our proposed level allocation method for combining multiple scores. 
Section~\ref{sec:COLA-conditional} extends this framework to an individualized
allocation strategy while ensuring asymptotic conditional coverage. Section~\ref{sec:numerical_result} presents empirical comparisons with existing methods. Finally, Section~\ref{sec:conclusion} concludes the paper.

\noindent\textbf{Notations.} 
For any positive integer $m$, define $[m]=\{1,\dots,m\}$. For any real number $a \in \real$, $|a|$ denotes its absolute value. For a discrete set $A$, $|A|$ denotes its cardinality; for a continuous set $A \subset \mathbb{R}$, $|A|$ denotes its Lebesgue measure. Bold letters (e.g., $\vv$) denote vectors, and nonbold symbols (e.g., $v_i$) represent their components. For any $\vv \in \real^m$, the support of $\vv$ is defined as $\mathrm{supp}(\vv) = \{i \in [m]: v_i \neq 0\}$ and $\norm{\vv}$ represents its $\ell_2$-norm. $\delta_x$ denotes the Dirac distribution at $x$. For an event $A$, $\indicator\{A\}$ denotes its indicator. $\E_{\mX_{n+1}} [\cdot]$ denotes the expectation with respect to $\mX_{n+1}$.

\section{Aggregation via \texorpdfstring{$\alpha$}{Alpha}-Allocation}\label{sec:COLA-marginal}

This section formalizes the construction of efficient conformal prediction sets using multiple fixed, pre-trained nonconformity score functions $\{S_k: \Xcal \times \Ycal \rightarrow \real\}_{k \in [K]}$. 

Recall that the hold-out data $\Dcal = \{(\mX_i,Y_i): i \in [n]\}$ and the test point $\mX_{n+1}$ are drawn i.i.d. from the same distribution. 
Building on the intersecting level-adjusted sets framework, for any fixed feasible allocation $\alphabold=(\alpha_1,\dots,\alpha_K) \in \boldsymbol{\Theta}$, the aggregated conformal prediction set $\cap_{k = 1}^K \widehat{C}_k(\mX_{n+1};\alpha_k)$ satisfies 
\begin{align}\label{eq:intersection_coverage}
	\P(Y_{n+1} \in \cap_{k = 1}^K \widehat{C}_k(\mX_{n+1};\alpha_k) ) \geq 1 - \sum_{k\in[K]} \P(Y_{n+1} \notin \widehat{C}_k(\mX_{n+1};\alpha_k)) \geq 1 - \alpha,
\end{align}
provided that each prediction set $\widehat{C}_k(\mX_{n+1};\alpha_k)$ controls marginal coverage at level $1 -  \alpha_k$, as ensured by standard conformal prediction. This naturally raises two key questions: what constitutes the optimal allocation, and how can it be estimated from data? Moreover, how can valid coverage be maintained when the allocation depends on the data? These questions are addressed in the following sections.

\subsection{Oracle Confidence Allocation}

We start by defining the oracle allocation as the choice of $\alphabold$ that minimizes the expected size of the aggregated prediction set while satisfying the coverage guarantee at the population level. For each score function $S_k$, the oracle prediction set achieving $1-\alpha_k$ coverage is
\begin{align*}
	C_k(\vx;\alpha_k) = \{y: S_k(\vx,y) \leq q_{k,\alpha_k}\},
\end{align*}
where $q_{k,\alpha}$ denotes the $(1-\alpha)$ quantile of the distribution of $S_k(\mX_{n+1},Y_{n+1})$. 
Given any feasible allocation $\alphabold \in \boldsymbol{\Theta}$, it follows from the definition of the quantile function and \refeq{eq:intersection_coverage} that the intersection $\cap_{k = 1}^K C_k(\mX_{n+1};\alpha_k)$ attains valid marginal coverage. The efficiency of a specific allocation $\alphabold$ is measured by the expected size of the aggregated set,
\begin{align}\label{eq:population_loss}
	\Lcal(\alphabold) & = \E_{\mX_{n+1}}[| \cap_{k = 1}^K C_k(\mX_{n+1};\alpha_k) |].
\end{align}
The oracle allocation is then defined as any minimizer 
$$\alphabold^* \in\arg\min_{\alphabold \in \boldsymbol{\Theta}} \Lcal(\alphabold).$$ 
This criterion coincides with the principle of ambiguity minimization in the multiclass classification problem \citep{sadinle2019least}. The oracle allocation $\alphabold^*$ depends on both the pretrained score functions $\{S_k\}_{k \in [K]}$ and the underlying data distribution. Intuitively, when a score function captures the covariate-response relationship more effectively, its associated prediction set tends to be smaller, and the oracle allocation $\alphabold^*$ assigns it a larger share of the confidence budget. In this sense, $\alphabold^*$ reflects the relative importance of different score functions.

It is worth noting that  $\alphabold^*$ may not be unique, and different oracle allocations $\alphabold^*$ may lead to different intersections $\cap_{k = 1}^K C_k(\mX_{n+1};\alpha_k^*)$. Our focus, however, is on the size $\Lcal(\alphabold^*)$, which depends on the score functions and the underlying data distribution. Throughout, we treat the score functions as fixed and study the optimal value attainable under these given scores.

A desired prediction set is one that satisfies the coverage requirement while achieving an expected size asymptotically equal to $\Lcal(\alphabold^*)$, formally defined as follows.
\begin{definition}[Asymptotically optimal allocation]\label{def:asy_opt}
	A conformal prediction set $\widehat{C}(\mX_{n+1};\alpha) $ satisfies asymptotically optimal allocation if it is asymptotically valid, 
	$$\P(Y_{n+1} \in \widehat{C}(\mX_{n+1};\alpha)) \geq 1 - \alpha + o(1),$$ 
	and its expected size converges to the oracle value,
	\[\E_{\mX_{n+1}}[\abs{\widehat{C}(\mX_{n+1};\alpha)}] - \Lcal(\alphabold^*) = \op(1).\]
\end{definition}
Similar notions of optimality are commonly used in the literature on model selection and model averaging \citep{li1987asymptotic,hansen2007least}.

\subsection{COLA-e: Aggregation via Confidence Allocation}
Our empirical approach approximates the allocation by replacing $C_k(\vx;\alpha_k)$ with their conformal prediction sets in \refeq{eq:population_loss} and performing empirical risk minimization based on the hold-out data $\Dcal$.

For each score function $S_k$, the oracle prediction set $C_k(\vx;\alpha_k)$ is estimated by the conformal prediction set
\begin{align*}
	\widehat{C}_{k} (\vx;\alpha_k) & = \{y: S_k(\vx,y) \leq Q_{\alpha_k}(\lrl{S_{k,i}}_{i=1}^n\cup\{\infty\}) \}.
\end{align*}
The allocation $\alphabold^*$ is then estimated by solving the following empirical risk minimization problem:
\begin{equation}\label{eq:alpha_eff}
	\begin{aligned}
		\Lcal_n(\alphabold) &= \frac{1}{n} \sum_{i \in [n]} | \cap_{k = 1}^K \widehat{C}_{k} (\mX_{i};\alpha_k) |,\\
		\widehat{\alphabold}^{\mathrm{e}} & \in \arg \min_{\alphabold \in \boldsymbol{\Theta}} \Lcal_n(\alphabold).
	\end{aligned}
\end{equation}
The resulting aggregated prediction set, referred to as $\mathrm{COLA\text{-}e}$, is
\[\widehat{C}^{\mathrm{e}}(\mX_{n+1};\alpha) = \cap_{k = 1}^K \widehat{C}_{k} (\mX_{n+1};\widehat{\alpha}_{k}^\mathrm{e}) .\]

$\mathrm{COLA\text{-}e}$ is closely related to Efficiency First Conformal Prediction (EFCP) proposed by \cite{yang2024selection}, but it operates within a more general framework. In particular, if the feasible region $\boldsymbol{\Theta}$ is restricted to $\boldsymbol{\Theta}_{\mathrm{res}} = \{\alphabold \in \boldsymbol{\Theta} : \abs{\mathrm{supp}(\alphabold)} = 1\}$, then $\mathrm{COLA\text{-}e}$ reduces to EFCP. This demonstrates that $\mathrm{COLA\text{-}e}$ can produce smaller prediction sets than EFCP, particularly when multiple scores capture complementary aspects of the response. As illustrated in Figure~\ref{fig:intuition} of the Supplementary Material, it may occur that while each individual conformal prediction set becomes slightly larger after $\alpha$ allocation, their intersection is notably shorter than any single $(1-\alpha)$-level set.

We now examine the theoretical properties of $\mathrm{COLA\text{-}e}$. For clarity of exposition, the main text focuses on the regression setting. Extensions to the classification setting are provided in Section~\ref{sec:classification} of the Supplementary Material. Since the allocation $\widehat{\alphabold}^{\mathrm{e}}$ is estimated from $\Dcal$, it introduces a data-dependent asymmetry: the test point plays no role in the construction of $\widehat{\alphabold}^{\mathrm{e}}$, thus breaking exchangeability. Therefore, the finite-sample coverage guarantee of COLA-e no longer holds. The following theorem establishes its asymptotic validity.
\begin{theorem}[Validity of $\mathrm{COLA\text{-}e}$]\label{thm:valid_cola_e} 
	The $\mathrm{COLA\text{-}e}$ prediction set $\widehat{C}^{\mathrm{e}}(\mX_{n+1};\alpha)$ satisfies
	$$\P(Y_{n+1}\in \widehat{C}^{\mathrm{e}}(\mX_{n+1};\alpha))\geq 1-\alpha-O\left(\sqrt{\frac{\log(Kn)}{n}}\right).$$ 
\end{theorem}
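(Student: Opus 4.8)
The plan is to show that, although the data-driven allocation $\widehat{\alphabold}^{\mathrm e}$ destroys the exchangeability between $\Dcal$ and the test point, the resulting coverage deficit is governed by a single uniform-in-level deviation of each empirical score distribution. I would condition throughout on the calibration data $\Dcal$, under which $\widehat{\alphabold}^{\mathrm e}$ and each threshold $Q_{\widehat{\alpha}_k^{\mathrm e}}(\{S_{k,i}\}_{i=1}^n\cup\{\infty\})$ become fixed. Writing $F_k$ for the distribution function of $S_k(\mX_{n+1},Y_{n+1})$ and $\widehat F_k(t)=\tfrac1n\sum_{i=1}^n\indicator\{S_{k,i}\le t\}$ for its empirical counterpart, the conditional miscoverage of the intersection is bounded, exactly as in \refeq{eq:intersection_coverage}, by a union bound over the scores:
\[
\P\bigl(Y_{n+1}\notin \widehat{C}^{\mathrm e}(\mX_{n+1};\alpha)\mid\Dcal\bigr)\le \sum_{k=1}^K\bigl(1-F_k(Q_{\widehat{\alpha}_k^{\mathrm e}})\bigr).
\]

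The crux is to replace each true tail $1-F_k(Q_{\widehat{\alpha}_k^{\mathrm e}})$ by the nominal level $\widehat{\alpha}_k^{\mathrm e}$ at a cost that does not depend on the random allocation. By the quantile construction one has $\widehat F_k(Q_{\widehat{\alpha}_k^{\mathrm e}})\ge 1-\widehat{\alpha}_k^{\mathrm e}$, so that
\[
1-F_k(Q_{\widehat{\alpha}_k^{\mathrm e}})\le \widehat{\alpha}_k^{\mathrm e}+\sup_{t}\bigl|\widehat F_k(t)-F_k(t)\bigr|+O(1/n),
\]
and, crucially, this holds uniformly over every attainable value of $\widehat{\alpha}_k^{\mathrm e}$. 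This uniformity in the level is exactly what legitimizes substituting the data-dependent allocation, and it is the step I expect to be the main obstacle: a pointwise conformal guarantee at a fixed level is not enough, and one must instead control the whole empirical quantile process. Summing over $k$ and invoking the budget constraint $\sum_k\widehat{\alpha}_k^{\mathrm e}=\alpha$ bounds the conditional miscoverage by $\alpha+\sum_{k=1}^K\sup_t|\widehat F_k(t)-F_k(t)|+O(K/n)$.

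It then remains to control $\max_k\sup_t|\widehat F_k(t)-F_k(t)|$, for which I would invoke the Dvoretzky–Kiefer–Wolfowitz inequality for each score together with a union bound over the $K$ scores: with probability at least $1-\delta$ this maximum is $O(\sqrt{\log(K/\delta)/n})$. Taking $\delta\asymp 1/n$ makes the complementary event contribute only $O(1/n)$ to the unconditional coverage, while on the favorable event each per-score deviation is $O(\sqrt{\log(Kn)/n})$; this choice of failure level is precisely what produces the $\log(Kn)$ factor in the statement. Taking expectation over $\Dcal$ and combining the two events yields $\P(Y_{n+1}\notin\widehat{C}^{\mathrm e}(\mX_{n+1};\alpha))\le \alpha+O(\sqrt{\log(Kn)/n})$, as claimed, where the finite number of active scores is absorbed into the constant. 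A mild continuity assumption on the score distributions removes ties and renders the $O(1/n)$ discretization remainders negligible.
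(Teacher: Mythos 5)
There is a genuine gap in your argument: the union bound applied at the population level accumulates one empirical-process error per score, and the resulting bound carries an extra factor of $K$ that the theorem does not have. Concretely, after you replace each true tail $1-F_k(Q_{\widehat{\alpha}_k^{\mathrm e}})$ by $\widehat{\alpha}_k^{\mathrm e}+\sup_t|\widehat F_k(t)-F_k(t)|+O(1/n)$ and sum over $k$, you obtain
\begin{equation*}
\P\bigl(Y_{n+1}\notin \widehat{C}^{\mathrm e}(\mX_{n+1};\alpha)\mid\Dcal\bigr)\le \alpha+\sum_{k\in\mathrm{supp}(\widehat{\alphabold}^{\mathrm e})}\sup_t\bigl|\widehat F_k(t)-F_k(t)\bigr|+O(K/n),
\end{equation*}
and the middle term is of order $|\mathrm{supp}(\widehat{\alphabold}^{\mathrm e})|\cdot\sqrt{\log(Kn)/n}$, not $\sqrt{\log(Kn)/n}$. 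Your closing remark that ``the finite number of active scores is absorbed into the constant'' is not justified: Theorem~\ref{thm:valid_cola_e} imposes no sparsity assumption on $\widehat{\alphabold}^{\mathrm e}$ (the exact minimizer over $\boldsymbol{\Theta}$ can have full support), and $K$ is an explicit parameter of the result — the surrounding discussion compares the rate to EFCP's $\sqrt{\log(K)/n}$ and Theorem~\ref{thm:asymptotic_opt_cola_e} allows $K$ to grow with $n$. What you have actually proved is a bound of the shape of Theorem~\ref{thm:valid_conditional}-(1), where the $K$ sits outside the square root.

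The paper avoids this accumulation by reorganizing where the union bound is applied. It controls a single \emph{multivariate} empirical process, $W=\sup_{\vt}\bigl|\tfrac1n\sum_{i}\prod_{k}\indicator\{S_{k,i}\le t_k\}-\E\bigl[\prod_k\indicator\{S_k(\mX_{n+1},Y_{n+1})\le t_k\}\bigr]\bigr|$, which bounds the gap between the true conditional coverage of the \emph{intersection} and its in-sample frequency uniformly over all threshold vectors $\vt$ in one shot; the multivariate Dvoretzky--Kiefer--Wolfowitz inequality of \cite{naaman2021tight} gives $\E[W]=O(\sqrt{\log(Kn)/n})$ with the $K$-dependence confined to the logarithm. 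The union bound over scores is then applied only to the \emph{empirical} miscoverage frequency $\tfrac1n\sum_i\indicator\{Y_i\notin\cap_k\widehat C_k(\mX_i;\widehat{\alpha}_k^{\mathrm e})\}$, where each term $\tfrac1n\sum_i\indicator\{S_{k,i}>Q_{\widehat{\alpha}_k^{\mathrm e}}\}\le\widehat{\alpha}_k^{\mathrm e}$ holds deterministically by the definition of the empirical quantile, so the sum is exactly $\le\alpha$ with no per-score stochastic error. To repair your proof you would need to replace your $K$ separate univariate DKW bounds with this single joint deviation bound (or otherwise argue that the per-score errors do not add up), which is precisely the missing idea.
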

Theorem~\ref{thm:valid_cola_e} requires only that the observations $\lrl{(\mX_i,Y_i)}_{i\in[n+1]}$ are i.i.d., but it achieves a slower coverage convergence rate of order $\sqrt{\log(Kn)/n}$ compared to EFCP \citep{yang2024selection}, whose rate is $\sqrt{\log(K)/n}$. This slight degradation reflects the increased complexity of the allocation space $\boldsymbol{\Theta}$ relative to EFCP’s restricted space $\boldsymbol{\Theta}_{\mathrm{res}}$. The proof leverages the multivariate Dvoretzky–Kiefer–Wolfowitz inequality \citep{naaman2021tight}. 

The following assumptions are required to ensure the efficiency of the COLA-e.
\begin{assumption}\label{ass:marginal_ass}
	There exist constants $l_\alpha$, $L_q$ and $M$ such that: 
	\begin{itemize}
		\item[(a).] The distribution functions $F_k(t) = \P(S_k(\mX_{n+1},Y_{n+1}) \leq t)$ are continuous.
		\item[(b).] The quantile functions $\{q_{k,\alpha}\}_{k = 1}^K$ are $L_q$-Lipschitz continuous on $[ l_\alpha - \xi , \alpha +\xi ]$ with $\xi \geq \sqrt{\log(2K)/(2n)} + (2-l_\alpha)/n$, such that $\abs{q_{k,\alpha_1} - q_{k,\alpha_2}} \leq L_q \abs{\alpha_1 - \alpha_2}$ for any $\alpha_1$, $\alpha_2 \in [ l_\alpha - \xi , \alpha +\xi ]$.
		\item[(c).] For any $\alphabold = (\alpha_k)_{k \in [K]} \in \boldsymbol{\Theta}$ and $\alpha_k^\prime = \max(\alpha_k, l_\alpha)$, it holds that		$\cap_{k = 1}^K C_k(\vx;\alpha_k) = \cap_{k = 1}^K C_k(\vx;\alpha_k^\prime)$ and $\cap_{k = 1}^K \widehat{C}_{k}(\vx;\alpha_k)  = \cap_{k = 1}^K \widehat{C}_{k}(\vx;\alpha_k^\prime).$
		
		\item[(d).] The size of the intersection is uniformly bounded: $\sup_{\alphabold\in\boldsymbol{\Theta}, \vx}\abs{\cap_{k=1}^KC_k(\vx;\alpha_k)} \leq M.$
	\end{itemize}
\end{assumption}

Assumption~\ref{ass:marginal_ass}(a) is standard and has been adopted in prior works (see \citep{liang2024conformal}). Assumption~\ref{ass:marginal_ass}(b) ensures that the empirical quantile function $Q_\alpha(\{S_{k,i}\}_{i= 1}^n\cup\{\infty\})$ converges uniformly to its population counterpart $q_{k,\alpha}$, which holds if the density function $f_k(t)$ of $F_k(t)$ is uniformly bounded away from zero for $t \in [q_{k, \alpha + \xi}, q_{k, l_\alpha - \xi}]$. Similar assumptions can be found in \cite{lei2018distribution} and \cite{yang2024selection}. Assumption~\ref{ass:marginal_ass}(c) implies that very small allocations $\alpha_k$ (i.e., $\alpha_k \leq l_\alpha$) yield noninformative prediction sets $C_k(\vx;\alpha_k)$ and  $\widehat{C}_k(\vx;\alpha_k)$, so truncating $\alpha_k$ to $l_\alpha$ does not affect the final prediction sets. This aligns with the observation that $\alphabold^*$ or $\widehat{\alphabold}^{\mathrm{e}}$ typically have sparse support, indicating that only a small subset of prediction sets contributes significantly. Assumption~\ref{ass:marginal_ass}(d) states that the size of the intersection $\cap_{k = 1}^K C_k(\vx;\alpha_k)$ is uniformly bounded.
\begin{assumption}\label{ass:common_ass}
	The set $\{y: S_k(\vx,y) \leq t\}$ satisfies: 
	\begin{itemize}
		\item [(a).] For any $t \in \real$ and $\vx \in \Xcal$,  $\{y: S_k(\vx,y) \leq t\}$ can be represented as a union of at most $m$ mutually disjoint intervals. 
		\item [(b).] There exists a constant $L_s > 0$ such that for any score $S_k$ and $h>0$, 
		\[\sup_{t,\vx} [ \abs{\{y: S_k(\vx,y) \leq t + h\}} - \abs{\{y: S_k(\vx,y) \leq t\}}]\leq L_s h . \]
	\end{itemize}
\end{assumption}
Assumption~\ref{ass:common_ass} is satisfied by many commonly-used scores in regression settings. For instance, the residual score $\abs{y - \widehat{\mu}(\vx)}$ yields $m = L_s = 1$; the conformalized quantile score $\max\lrl{\widehat{\tau}_{\alpha/2}(\vx)-y,y-\widehat{\tau}_{1 - \alpha/2}\lrs{\vx}}$ yields $m = L_s = 1$, where $\widehat{\tau}_{\alpha/2}(\vx)$ and $\widehat{\tau}_{1 - \alpha/2}\lrs{\vx}$ are pretrained estimates of the conditional quantiles of $Y_{n+1}$ given $X_{n+1} = \vx$ at levels $\alpha/2$ and $(1-\alpha/2)$, respectively; and the rescaled residual score $\abs{y - \widehat{\mu}(\vx)}/\widehat{\sigma}(\vx)$ satisfies the assumption with $m = 1$ and $L_s = \sup_{\vx} \widehat{\sigma}(\vx)$ provided that the scaling function $\widehat{\sigma}(\vx)$ is uniformly bounded. Similar conditions are also adopted in \cite{yang2024selection} to derive the convergence of the prediction set from the convergence of the quantile function. However, Assupmtion~\ref{ass:common_ass} dose not hold in classification problems. In Appendix, we present general theoretical guarantees under an alternative formulation that does not rely on this assumption.
\begin{theorem}\label{thm:asymptotic_opt_cola_e}
	Under Assumptions~\ref{ass:marginal_ass} and \ref{ass:common_ass},
	the $\mathrm{COLA\text{-}e}$ prediction set satisfies
	\[\E_{\mX_{n+1}} [\abs{\widehat{C}^{\mathrm{e}}(\mX_{n+1};\alpha)}] - \Lcal(\alphabold^*) = \Op\left( m L_q L_s \sqrt{\frac{\log(K)}{n}}  + M \sqrt{\frac{K \log(n)}{n}}\right).\]
\end{theorem}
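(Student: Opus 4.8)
The plan is to treat this as an excess-risk bound for empirical risk minimization, decoupling the two sources of error: estimating the population quantiles $q_{k,\alpha_k}$ by the empirical quantiles $Q_{\alpha_k}$, and replacing the population functional $\Lcal$ by an empirical average. Write $\bar{\Lcal}_n(\alphabold)=\E_{\mX_{n+1}}[|\cap_{k} \widehat{C}_k(\mX_{n+1};\alpha_k)|]$ for the test-point size built from the empirical sets, and $\widetilde{\Lcal}_n(\alphabold)=\frac1n\sum_{i}|\cap_k C_k(\mX_i;\alpha_k)|$ for the empirical average built from the oracle sets, so that the left-hand side equals $\bar{\Lcal}_n(\widehat{\alphabold}^{\mathrm{e}})-\Lcal(\alphabold^*)$. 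I would split this as $[\bar{\Lcal}_n(\widehat{\alphabold}^{\mathrm{e}})-\Lcal(\widehat{\alphabold}^{\mathrm{e}})]+[\Lcal(\widehat{\alphabold}^{\mathrm{e}})-\Lcal(\alphabold^*)]$. The first bracket is a pure quantile-estimation error at a fixed allocation. For the second, I would add and subtract $\Lcal_n$ and use the defining optimality $\Lcal_n(\widehat{\alphabold}^{\mathrm{e}})\le\Lcal_n(\alphabold^*)$ to get $\Lcal(\widehat{\alphabold}^{\mathrm{e}})-\Lcal(\alphabold^*)\le 2\sup_{\alphabold\in\boldsymbol{\Theta}}|\Lcal_n(\alphabold)-\Lcal(\alphabold)|$, which I would further split through $\widetilde{\Lcal}_n$ into a quantile-error term $\sup_\alphabold|\Lcal_n-\widetilde{\Lcal}_n|$ and an empirical-process term $\sup_\alphabold|\widetilde{\Lcal}_n-\Lcal|$.

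For the quantile-error pieces, I would first control the uniform deviation of the empirical score distributions: by the Dvoretzky--Kiefer--Wolfowitz inequality with a union bound over the $K$ scores, $\sup_k\sup_t|\widehat{F}_k(t)-F_k(t)|=\Op(\sqrt{\log K/n})$, and Assumption~\ref{ass:marginal_ass}(b) converts this into $\sup_k\sup_{\alpha_k}|Q_{\alpha_k}-q_{k,\alpha_k}|=\Op(L_q\sqrt{\log K/n})$ (the margin $\xi$ in the assumption is exactly the room needed so that the perturbed levels stay inside the Lipschitz window). It then remains to pass from threshold error to set-size error, where Assumption~\ref{ass:common_ass} is essential: each level set is a union of at most $m$ intervals (part (a)) whose measure changes by at most $L_s h$ under an $h$-shift of the threshold (part (b)). The key point—which is what produces $\sqrt{\log K/n}$ rather than $\sqrt{K/n}$—is that the endpoints of the intersection $\cap_k C_k(\vx;\alpha_k)$ are determined by the binding (extremal) constraints, so each of its at most $2m$ boundary points is displaced by at most $L_s\max_k|Q_{\alpha_k}-q_{k,\alpha_k}|$, i.e. by the \emph{maximum} rather than the \emph{sum} of the per-score deviations. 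This makes both $\sup_\alphabold|\Lcal_n-\widetilde{\Lcal}_n|$ and $\sup_\alphabold|\bar{\Lcal}_n-\Lcal|$ of order $mL_qL_s\sqrt{\log K/n}$, the first term of the bound; Assumption~\ref{ass:marginal_ass}(c) ensures the negligible coordinates ($\alpha_k\le l_\alpha$) can be truncated without affecting any intersection, so only informative level sets ever enter.

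For the empirical-process term $\sup_{\alphabold\in\boldsymbol{\Theta}}|\widetilde{\Lcal}_n(\alphabold)-\Lcal(\alphabold)|$, the integrand $\vx\mapsto|\cap_kC_k(\vx;\alpha_k)|$ uses the fixed oracle quantiles and is uniformly bounded by $M$ (Assumption~\ref{ass:marginal_ass}(d)), so for each fixed $\alphabold$ a bounded-differences inequality gives a deviation of order $M\sqrt{\log(1/\delta)/n}$. To make this uniform over the $(K-1)$-dimensional set $\boldsymbol{\Theta}$, I would cover $\boldsymbol{\Theta}$ by an $\varepsilon$-net in $\ell_1$ of cardinality $(C/\varepsilon)^{K}$, apply the pointwise bound with a union bound over the net, and control the discretization error using that both $\Lcal$ and $\widetilde{\Lcal}_n$ are $L_qL_s$-Lipschitz in $\alphabold$ (via Assumptions~\ref{ass:marginal_ass}(b) and~\ref{ass:common_ass}(b) and the same symmetric-difference estimate). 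Taking $\varepsilon\asymp 1/n$ renders the discretization error lower-order and turns the net cardinality into the factor $\sqrt{K\log n/n}$, yielding the second term $M\sqrt{K\log n/n}$.

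I expect the main obstacle to be the quantile-error step, specifically securing the logarithmic—rather than polynomial—dependence on $K$. A naive bound that controls the intersection through the sum of per-score symmetric differences would lose a factor of $K$; the argument must instead exploit the extremal (max/min) structure of the intersection boundary so that only a single uniform-over-$k$ deviation enters, which is where the interval geometry of Assumption~\ref{ass:common_ass} and the truncation device of Assumption~\ref{ass:marginal_ass}(c) do the real work. A secondary technical point is maintaining uniformity over the \emph{data-dependent} minimizer $\widehat{\alphabold}^{\mathrm{e}}$, which is precisely why the second bracket is handled through a supremum over all of $\boldsymbol{\Theta}$ rather than evaluated at $\widehat{\alphabold}^{\mathrm{e}}$ itself. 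Collecting the three pieces then gives the stated rate $\Op(mL_qL_s\sqrt{\log K/n}+M\sqrt{K\log n/n})$.
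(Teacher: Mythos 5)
Your proposal is correct and follows essentially the same route as the paper: the identical decomposition into a quantile-estimation term (controlled via DKW plus a union bound over $K$, Assumption~\ref{ass:marginal_ass}(b)--(c), Assumption~\ref{ass:common_ass}, and the max-over-$k$ control of the intersection boundary, which the paper formalizes as Lemma~\ref{lemma:sym_dif}) and an empirical-process term over the oracle sets (controlled via an $\ell_1$ $\epsilon$-net, McDiarmid's inequality, and Lipschitz continuity of the size in $\alphabold$), combined through the ERM basic inequality $\Lcal(\widehat{\alphabold}^{\mathrm{e}})-\Lcal(\alphabold^*)\le 2\sup_{\alphabold}|\Lcal_n(\alphabold)-\Lcal(\alphabold)|$. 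Your choice $\varepsilon\asymp 1/n$ for the net differs immaterially from the paper's and yields the same rate.
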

This theorem demonstrates that when $K = o(n/\log(n))$, the expected size of the $\mathrm{COLA\text{-}e}$ prediction set converges to that of the oracle set. Taken together, Theorems~\ref{thm:valid_cola_e} and \ref{thm:asymptotic_opt_cola_e} show $\mathrm{COLA\text{-}e}$ satisfies asymptotically optimal allocation.

\subsection{COLA-s: Finite-sample Validity via Sample Splitting}
Since the allocation $\widehat{\alphabold}^{\mathrm{e}}$ in $\mathrm{COLA\text{-}e}$ is estimated from the same data used for calibration, the resulting prediction set $\widehat{C}^{\mathrm{e}}(\mX_{n+1};\alpha)$ does not guarantee finite-sample validity. To address this, we introduce a sample-splitting variant, $\mathrm{COLA\text{-}s}$, which parallels the Validity First Conformal Prediction (VFCP) approach of \cite{yang2024selection}.

Specifically, the hold-out data $\Dcal$ is randomly partitioned into two disjoint subsets $\Dcal_{\mathrm{tu}}$ and $\Dcal_{\mathrm{cal}}$, with indices $\Ical_{\mathrm{tu}}$ and $\Ical_{\mathrm{cal}}$ and sizes $n_{\mathrm{tu}}$ and $n_{\mathrm{cal}}$, respectively. The first part $\Dcal_{\mathrm{tu}}$ is used to estimate the allocation, while the second subset $\Dcal_{\mathrm{cal}}$ is used as an independent sample set for calibration.

The allocation is learned on $\Dcal_{\mathrm{tu}}$ by solving 
\begin{equation*}
	\widehat{\alphabold}^{\mathrm{s}} \in \arg \min_{\alphabold\in \boldsymbol{\Theta}} \frac{1}{n_{\mathrm{tu}}} \sum_{i \in \Ical_{\mathrm{tu}}} | \cap_{k = 1}^K \widehat{C}_{k,\mathrm{tu}} (\mX_{i};\alpha_k) |,
\end{equation*}
where $\widehat{C}_{k,\mathrm{tu}} (\vx;\alpha_k) = \{y: S_k(\vx,y) \leq Q_{\alpha_k} (\{ S_{k,i} \}_{i \in \Ical_{\mathrm{tu}}} \cup \{\infty\}) \}$.
The $\mathrm{COLA\text{-}s}$ prediction set is then constructed on $\Dcal_{\mathrm{cal}}$ as
\[\widehat{C}^{\mathrm{s}}(\mX_{n+1};\alpha) = \cap_{k = 1}^K \widehat{C}_{k,\mathrm{cal}} (\mX_{n+1};\widehat{\alpha}_{k}^{\mathrm{s}}),\]
with 
$\widehat{C}_{k,\mathrm{cal}} (\vx;{\alpha}_{k})  = \{y: S_k(\vx,y) \leq Q_{{\alpha}_{k}} (\{ S_{k,i} \}_{i \in \Ical_{\mathrm{cal}}} \cup \{\infty\}) \}$.

The following theorem shows that $\mathrm{COLA\text{-}s}$ guarantees exact finite-sample coverage under exchangeability. 
\begin{theorem}[Finite-sample validity of $\mathrm{COLA\text{-}s}$]\label{thm:cola_s_coverage}
	If the calibration data and the test point $\{(\mX_i,Y_i): i \in \Ical_{\mathrm{cal}} \cup \{n+1\}\}$ are exchangeable, then the $\mathrm{COLA\text{-}s}$ prediction set satisfies 
	$\P(Y_{n+1}\in\widehat{C}^{\mathrm{s}}\lrs{\mX_{n+1}; \alpha}) \geq 1-\alpha.$
\end{theorem}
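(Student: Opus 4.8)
The plan is to exploit the defining feature of sample splitting: the allocation $\widehat{\alphabold}^{\mathrm{s}}$ is a function of the tuning set $\Dcal_{\mathrm{tu}}$ alone, which is disjoint from both the calibration data and the test point. Consequently, once we condition on $\Dcal_{\mathrm{tu}}$, the learned allocation becomes a fixed vector in $\boldsymbol{\Theta}$, and the standard split-conformal coverage guarantee can be applied to each constituent set without any exchangeability being broken. This is precisely the step that fails for $\mathrm{COLA\text{-}e}$, where the allocation reuses the calibration data.

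First I would condition on $\Dcal_{\mathrm{tu}}$, treating $\widehat{\alphabold}^{\mathrm{s}} = (\widehat{\alpha}_1^{\mathrm{s}}, \dots, \widehat{\alpha}_K^{\mathrm{s}})$ as deterministic. Since $\{(\mX_i,Y_i): i \in \Ical_{\mathrm{cal}} \cup \{n+1\}\}$ is exchangeable and this subset is disjoint from $\Ical_{\mathrm{tu}}$, the exchangeability is preserved under the conditioning, so for each $k \in [K]$ the augmented score collection $\{S_{k,i}\}_{i \in \Ical_{\mathrm{cal}}} \cup \{S_k(\mX_{n+1},Y_{n+1})\}$ remains exchangeable. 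The usual rank argument for split conformal prediction—the rank of the test score among the calibration scores is stochastically no smaller than uniform, with the $\{\infty\}$ augmentation handling ties conservatively—then gives, at the now-fixed level $\widehat{\alpha}_k^{\mathrm{s}}$,
\[\P\big(Y_{n+1} \in \widehat{C}_{k,\mathrm{cal}}(\mX_{n+1}; \widehat{\alpha}_k^{\mathrm{s}}) \mid \Dcal_{\mathrm{tu}}\big) \geq 1 - \widehat{\alpha}_k^{\mathrm{s}}.\]

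Next I would invoke the Bonferroni-type union bound already recorded in \refeq{eq:intersection_coverage}, applied conditionally on $\Dcal_{\mathrm{tu}}$, to obtain
\[\P\big(Y_{n+1} \in \cap_{k=1}^K \widehat{C}_{k,\mathrm{cal}}(\mX_{n+1}; \widehat{\alpha}_k^{\mathrm{s}}) \mid \Dcal_{\mathrm{tu}}\big) \geq 1 - \sum_{k \in [K]} \widehat{\alpha}_k^{\mathrm{s}} = 1 - \alpha,\]
where the final equality uses $\widehat{\alphabold}^{\mathrm{s}} \in \boldsymbol{\Theta}$, so that $\norm{\widehat{\alphabold}^{\mathrm{s}}}_1 = \alpha$. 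The budget constraint holds on every realization of $\Dcal_{\mathrm{tu}}$, so the conditional bound is uniform in the realized allocation. Taking expectation over $\Dcal_{\mathrm{tu}}$ via the tower property then removes the conditioning and yields the unconditional $\P(Y_{n+1} \in \widehat{C}^{\mathrm{s}}(\mX_{n+1}; \alpha)) \geq 1 - \alpha$.

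The proof is short precisely because splitting decouples the learning of $\widehat{\alphabold}^{\mathrm{s}}$ from calibration, and no density, Lipschitz, or convergence-rate condition is required. The only point demanding care is the verification that conditioning on $\Dcal_{\mathrm{tu}}$ genuinely preserves the exchangeability of the calibration-plus-test subset, so that each per-score guarantee may legitimately be invoked at the \emph{random-but-conditionally-fixed} level $\widehat{\alpha}_k^{\mathrm{s}}$; this is the sole conceptual obstacle, and the disjointness of $\Ical_{\mathrm{tu}}$ and $\Ical_{\mathrm{cal}}$ makes it immediate.
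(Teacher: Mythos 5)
Your proposal is correct and is essentially identical to the paper's argument: condition on $\Dcal_{\mathrm{tu}}$ so that $\widehat{\alphabold}^{\mathrm{s}}$ is fixed, apply the per-score split-conformal guarantee and the union bound to get conditional coverage $1-\sum_k \widehat{\alpha}_k^{\mathrm{s}} = 1-\alpha$, then marginalize over $\Dcal_{\mathrm{tu}}$. The paper's proof is just a more compressed rendering of exactly these steps.
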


The following corollary, a direct consequence of Theorem~\ref{thm:asymptotic_opt_cola_e}, shows that $\mathrm{COLA\text{-}s}$ also attains asymptotically optimal allocation.
\begin{corollary}\label{thm:optimal_cola_s}
	Under Assumptions~\ref{ass:marginal_ass} and \ref{ass:common_ass} (with Assumption~\ref{ass:marginal_ass}(c) applied separately to  $\widehat{C}_{k,\mathrm{tu}}$ and $\widehat{C}_{k,\mathrm{cal}}$), 
	the $\mathrm{COLA\text{-}s}$ prediction set satisfies
	\[\E_{\mX_{n+1}} [\abs{\widehat{C}^{\mathrm{s}}(\mX_{n+1};\alpha)}] - \Lcal(\alphabold^*) = \Op\left( m L_q L_s \sqrt{\frac{\log(K)}{n_{\mathrm{min}}}} + M \sqrt{\frac{K \log(n_{\mathrm{min}})}{n_{\mathrm{min}}}}\right) ,\]
	where $n_{\mathrm{min}} = \min(n_{\mathrm{tu}}, n_{\mathrm{cal}})$.
\end{corollary}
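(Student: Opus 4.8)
The plan is to reduce the corollary to the uniform-deviation bounds already established in the proof of Theorem~\ref{thm:asymptotic_opt_cola_e}, exploiting the fact that the tuning and calibration splits are independent. Coverage is immediate from Theorem~\ref{thm:cola_s_coverage}, so only the size bound requires work. Write the expected calibration size at a fixed allocation as $\widehat{\Lcal}_{\mathrm{cal}}(\alphabold) = \E_{\mX_{n+1}}[\, | \cap_{k=1}^K \widehat{C}_{k,\mathrm{cal}}(\mX_{n+1};\alpha_k) | \,]$ (conditional on $\Dcal_{\mathrm{cal}}$) and the tuning objective as $\widehat{\Lcal}_{\mathrm{tu}}(\alphabold) = n_{\mathrm{tu}}^{-1} \sum_{i \in \Ical_{\mathrm{tu}}} | \cap_{k=1}^K \widehat{C}_{k,\mathrm{tu}}(\mX_i;\alpha_k) |$. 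Since the $\mathrm{COLA\text{-}s}$ set is built from the calibration sets at the learned allocation, $\E_{\mX_{n+1}}[\, | \widehat{C}^{\mathrm{s}}(\mX_{n+1};\alpha) | \,] = \widehat{\Lcal}_{\mathrm{cal}}(\widehat{\alphabold}^{\mathrm{s}})$, and I would split the target quantity as
\begin{equation*}
	\widehat{\Lcal}_{\mathrm{cal}}(\widehat{\alphabold}^{\mathrm{s}}) - \Lcal(\alphabold^*) = \underbrace{[\widehat{\Lcal}_{\mathrm{cal}}(\widehat{\alphabold}^{\mathrm{s}}) - \Lcal(\widehat{\alphabold}^{\mathrm{s}})]}_{\text{(I)}} + \underbrace{[\Lcal(\widehat{\alphabold}^{\mathrm{s}}) - \Lcal(\alphabold^*)]}_{\text{(II)}}.
\end{equation*}

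For the allocation error (II), I would invoke the optimality of $\widehat{\alphabold}^{\mathrm{s}}$ on the tuning objective, $\widehat{\Lcal}_{\mathrm{tu}}(\widehat{\alphabold}^{\mathrm{s}}) \leq \widehat{\Lcal}_{\mathrm{tu}}(\alphabold^*)$, together with the optimality of $\alphabold^*$ for $\Lcal$. A standard oracle-inequality argument then gives $\Lcal(\widehat{\alphabold}^{\mathrm{s}}) - \Lcal(\alphabold^*) \leq 2\sup_{\alphabold\in\boldsymbol{\Theta}} | \widehat{\Lcal}_{\mathrm{tu}}(\alphabold) - \Lcal(\alphabold) |$. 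The right-hand side is exactly the uniform deviation controlled in the proof of Theorem~\ref{thm:asymptotic_opt_cola_e}, now evaluated on the tuning sample of size $n_{\mathrm{tu}}$; it decomposes into a quantile-estimation piece (empirical versus population conformal sets, bounded via Assumptions~\ref{ass:marginal_ass}(b) and \ref{ass:common_ass} through the multivariate DKW rate) and an empirical-averaging piece (bounded using the uniform size bound $M$ from Assumption~\ref{ass:marginal_ass}(d) and a covering argument over $\boldsymbol{\Theta}$). This yields $\text{(II)} = \Op\big( m L_q L_s \sqrt{\log(K)/n_{\mathrm{tu}}} + M\sqrt{K\log(n_{\mathrm{tu}})/n_{\mathrm{tu}}} \big)$.

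For the calibration error (I), I would bound it uniformly by $\sup_{\alphabold\in\boldsymbol{\Theta}} | \widehat{\Lcal}_{\mathrm{cal}}(\alphabold) - \Lcal(\alphabold) |$, which removes the dependence of $\widehat{\alphabold}^{\mathrm{s}}$ on $\Dcal_{\mathrm{tu}}$. Unlike (II), this term carries no empirical-averaging component, since $\widehat{\Lcal}_{\mathrm{cal}}$ already integrates over $\mX_{n+1}$ and only the calibration quantiles are estimated. Reusing the quantile-estimation half of the Theorem~\ref{thm:asymptotic_opt_cola_e} argument on the calibration split—with Assumption~\ref{ass:marginal_ass}(c) applied to $\widehat{C}_{k,\mathrm{cal}}$ so that truncation of small $\alpha_k$ to $l_\alpha$ is harmless—gives $\text{(I)} = \Op\big( m L_q L_s \sqrt{\log(K)/n_{\mathrm{cal}}} \big)$.

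Combining (I) and (II) and using $\max(1/n_{\mathrm{tu}}, 1/n_{\mathrm{cal}}) \leq 1/n_{\mathrm{min}}$, together with the fact that $x \mapsto \log(x)/x$ is decreasing for large $x$ (so that $\log(n_{\mathrm{tu}})/n_{\mathrm{tu}} \leq \log(n_{\mathrm{min}})/n_{\mathrm{min}}$), lets me replace both $n_{\mathrm{tu}}$ and $n_{\mathrm{cal}}$ by $n_{\mathrm{min}}$ and absorb constants, yielding the stated rate. The main subtlety is not any single bound—both are inherited from Theorem~\ref{thm:asymptotic_opt_cola_e}—but the bookkeeping that cleanly separates the two error sources: the allocation error must be charged to the tuning sample through the ERM oracle inequality (retaining both the quantile and empirical-process terms), whereas the calibration error is charged to the calibration sample and retains only the quantile term. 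Verifying that these uniform suprema remain valid over the entire simplex $\boldsymbol{\Theta}$ after the Assumption~\ref{ass:marginal_ass}(c) truncation, applied separately to $\widehat{C}_{k,\mathrm{tu}}$ and $\widehat{C}_{k,\mathrm{cal}}$, is the step demanding the most care.
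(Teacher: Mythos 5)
Your proposal is correct and follows essentially the same route as the paper: the paper decomposes the excess size into $\E_{\mX_{n+1}}[\,|\cap_k \widehat{C}_{k,\mathrm{cal}}(\mX_{n+1};\widehat{\alpha}_k^{\mathrm{s}})| - |\cap_k C_k(\mX_{n+1};\widehat{\alpha}_k^{\mathrm{s}})|\,]$ plus $\Lcal(\widehat{\alphabold}^{\mathrm{s}}) - \Lcal(\alphabold^*)$, bounding the former by the uniform symmetric-difference (quantile-estimation) term on the calibration split and the latter by rerunning the Theorem~\ref{thm:asymptotic_opt_cola_e} oracle-inequality argument on the tuning split — exactly your terms (I) and (II) with the same charges to $n_{\mathrm{cal}}$ and $n_{\mathrm{tu}}$ respectively. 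Your explicit remark that (I) retains only the quantile term while (II) retains both is precisely what the paper's $\Pi_1^{\mathrm{s}}$ bound encodes.
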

This result illustrates that $\mathrm{COLA\text{-}s}$ guarantees exact finite-sample coverage, but the use of sample-splitting reduces efficiency.

\subsection{COLA-f: Finite-sample Validity via Full Conformalization}
As discussed above, $\mathrm{COLA\text{-}e}$ lacks finite-sample coverage guarantees, while $\mathrm{COLA\text{-}s}$ suffers from efficiency loss due to sample splitting. A potential remedy for both issues is to adopt a full conformal strategy \citep{vovk2005algorithmic,liang2024conformal}, which restores exchangeability by  treating the hold-out data and the test point symmetrically.

Specifically, given a hypothesized value $y$ for $Y_{n+1}$, the test-augmented allocation is as follows:
\begin{align*}
	\widehat{\alphabold}^{(y)}\in\arg\min_{\alphabold \in \boldsymbol{\Theta}} \frac{1}{n+1}\sum_{i\in[n+1]} | \cap_{k=1}^K\widehat{C}_{k}^{(y)}\lrs{\mX_i;\alpha_k}|,
\end{align*}
where $\widehat{C}_{k}^{(y)} (\vx;\alpha_k) 
= \{\tilde{y}: S_k\lrs{\vx,\tilde{y}} \leq Q_{\alpha_k} (\{ S_{k,i} \}_{i = 1}^n \cup \{S_k(\mX_{n+1},y)\}) \}$.
Intuitively, $\widehat{\alphabold}^{(y)}$ identifies the allocation that minimizes the average prediction set size when all $n+1$ data points, including the hypothesized test pair $(\mX_{n+1}, y)$, are used simultaneously for calibration and evaluation, thereby restoring symmetry across the dataset. 

The $\mathrm{COLA\text{-}f}$ prediction set is then defined as:
$$\widehat{C}^{\mathrm{f}}(\mX_{n+1};\alpha)=\{y:y\in\cap_{k=1}^K\widehat{C}_{k}^{(y)}(\mX_{n+1},\widehat{\alpha}_k^{(y)})\}.$$

When the feasible region $\boldsymbol{\Theta}$ is restricted to $\boldsymbol{\Theta}_{\mathrm{res}} = \{\alphabold \in \boldsymbol{\Theta} : \abs{\mathrm{supp}(\alphabold)} = 1\}$, $\mathrm{COLA\text{-}f}$ coincides with the method proposed in \cite{liang2024conformal}. The following theorem establishes the finite-sample coverage of $\widehat{C}^{\mathrm{f}}(\mX_{n+1};\alpha)$ under exchangeability.

\begin{theorem}[Finite-sample validity of $\mathrm{COLA\text{-}f}$]\label{thm:valid_full}
	Suppose that $\{(\mX_i, Y_i)\}_{i=1}^{n+1}$ are exchangeable. Then, $\mathrm{COLA\text{-}f}$ satisfies that
	$\P(Y_{n+1}\in \widehat{C}^{\mathrm{f}}(\mX_{n+1};\alpha) )\geq 1-\alpha.$
\end{theorem}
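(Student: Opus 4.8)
The plan is to run the standard full-conformal exchangeability argument, layering the union (Bonferroni) bound of \refeq{eq:intersection_coverage} on top to handle the $K$ constituent sets. The starting observation is that, by the definition of $\widehat{C}^{\mathrm{f}}$, the event $Y_{n+1}\in\widehat{C}^{\mathrm{f}}(\mX_{n+1};\alpha)$ holds if and only if $Y_{n+1}\in\cap_{k=1}^K\widehat{C}_k^{(Y_{n+1})}(\mX_{n+1};\widehat{\alpha}_k^{(Y_{n+1})})$, i.e. if and only if $S_k(\mX_{n+1},Y_{n+1})\leq Q_{\widehat{\alpha}_k^{(Y_{n+1})}}(\{S_{k,i}\}_{i=1}^{n}\cup\{S_k(\mX_{n+1},Y_{n+1})\})$ for every $k\in[K]$. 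Writing $S_{k,n+1}=S_k(\mX_{n+1},Y_{n+1})$, the augmented score set inside the quantile is precisely $\{S_{k,i}\}_{i=1}^{n+1}$. Complementing and applying a union bound over $k$, it then suffices to bound $\sum_{k\in[K]}\P(S_{k,n+1}>Q_{\widehat{\alpha}_k^{(Y_{n+1})}}(\{S_{k,i}\}_{i=1}^{n+1}))$ by $\alpha$.

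The crucial step is to verify that, once the hypothesized value is set to the truth $y=Y_{n+1}$, the test-augmented allocation $\widehat{\alphabold}^{(Y_{n+1})}$ becomes a permutation-invariant function of the $n+1$ pairs $\{(\mX_i,Y_i)\}_{i=1}^{n+1}$. With $y=Y_{n+1}$ the extra score $S_k(\mX_{n+1},Y_{n+1})$ coincides with $S_{k,n+1}$, so each quantile $Q_{\alpha_k}(\{S_{k,i}\}_{i=1}^n\cup\{S_k(\mX_{n+1},Y_{n+1})\})=Q_{\alpha_k}(\{S_{k,i}\}_{i=1}^{n+1})$ depends only on the unordered multiset of scores, and the objective $\frac{1}{n+1}\sum_{i\in[n+1]}|\cap_{k=1}^K\widehat{C}_k^{(Y_{n+1})}(\mX_i;\alpha_k)|$ sums symmetrically over all $n+1$ points. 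Hence, provided the $\arg\min$ employs a symmetric tie-breaking rule, $\widehat{\alphabold}^{(Y_{n+1})}$ is unchanged under relabeling of the $n+1$ observations and in particular does not privilege the index $n+1$. I expect this to be the main obstacle, in the sense that it is the one place where the specific construction—augmenting both the calibration set and the averaged loss with the test point—is exactly what restores symmetry; everything else is mechanical.

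With symmetry in hand, I would condition on the unordered multiset $\Zcal=\{(\mX_i,Y_i)\}_{i=1}^{n+1}$. Given $\Zcal$, the allocation $\widehat{\alphabold}^{(Y_{n+1})}$ and each empirical quantile $Q_{\widehat{\alpha}_k^{(Y_{n+1})}}(\{S_{k,i}\}_{i=1}^{n+1})$ are fixed, while exchangeability makes the identity of the test point uniform over the $n+1$ positions. For each fixed $k$, the definition of the empirical $(1-\widehat{\alpha}_k^{(Y_{n+1})})$ quantile guarantees that at most $\lfloor(n+1)\widehat{\alpha}_k^{(Y_{n+1})}\rfloor$ of the $n+1$ scores strictly exceed $Q_{\widehat{\alpha}_k^{(Y_{n+1})}}(\{S_{k,i}\}_{i=1}^{n+1})$; since the test score is equally likely to be any one of the $n+1$ entries, $\P(S_{k,n+1}>Q_{\widehat{\alpha}_k^{(Y_{n+1})}}(\{S_{k,i}\}_{i=1}^{n+1})\mid\Zcal)\leq\lfloor(n+1)\widehat{\alpha}_k^{(Y_{n+1})}\rfloor/(n+1)\leq\widehat{\alpha}_k^{(Y_{n+1})}$.

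Finally, summing the per-score bounds and using the feasibility constraint $\widehat{\alphabold}^{(Y_{n+1})}\in\boldsymbol{\Theta}$, which enforces $\sum_{k\in[K]}\widehat{\alpha}_k^{(Y_{n+1})}=\norm{\widehat{\alphabold}^{(Y_{n+1})}}_1=\alpha$, yields $\P(Y_{n+1}\notin\widehat{C}^{\mathrm{f}}(\mX_{n+1};\alpha)\mid\Zcal)\leq\alpha$. Taking expectation over $\Zcal$ removes the conditioning and gives $\P(Y_{n+1}\in\widehat{C}^{\mathrm{f}}(\mX_{n+1};\alpha))\geq 1-\alpha$, as claimed. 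The only care needed beyond the symmetry argument is the bookkeeping for ties in the empirical quantile, which only sharpens the inequality, together with the symmetric resolution of any non-unique minimizer noted above.
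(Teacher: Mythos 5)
Your proposal is correct and follows essentially the same route as the paper's proof: a union bound over the $K$ sets, the observation that $\widehat{\alphabold}^{(Y_{n+1})}$ is a symmetric function of the $n+1$ exchangeable pairs, and the resulting bound $\P(S_{k,n+1}>Q_{\widehat{\alpha}_k^{(Y_{n+1})}}(\{S_{k,i}\}_{i=1}^{n+1}))\leq\E[\widehat{\alpha}_k^{(Y_{n+1})}]$, summed via $\norm{\widehat{\alphabold}^{(Y_{n+1})}}_1=\alpha$. The only cosmetic difference is that you condition on the unordered multiset and use uniformity of the test index, whereas the paper writes the same symmetry step as an average of exceedance probabilities over the $n+1$ positions.
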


Despite ensuring finite-sample coverage, $\mathrm{COLA\text{-}f}$ has two drawbacks. First, it requires an exhaustive search over the entire label space, leading to high computational cost, especially in regression settings. Second, the complex form of $\widehat{C}^{\mathrm{f}}$ makes it difficult to establish theoretical guarantees for size efficiency.  
Empirical results in Section~\ref{subsec:comparison_s_e_f} of the Supplementary Material show that $\mathrm{COLA\text{-}f}$ often yields smaller prediction sets when the sample size is small, but at the expense of substantially higher computational cost than $\mathrm{COLA\text{-}e}$ and $\mathrm{COLA\text{-}s}$. 

\subsection{Stepwise Optimization}

Here, we turn to the practical implementation, focusing on how to efficiently solve the empirical risk minimization problem of $\mathrm{COLA\text{-}e}$. The problem in \refeq{eq:alpha_eff} can be formulated as a linearly constrained optimization problem:
\begin{align*}
	\min_{\alpha_k} \sum_{i\in[n]} \abs{\cap_{k = 1}^K \widehat{C}_k(\mX_i; \alpha_k)}  ~~~~ \text{s.t. } \alpha_k \geq 0 \text{ and } \sum_{k\in[K]} \alpha_k = \alpha.
\end{align*}
Since each $\widehat{C}_k(\vx;\alpha_k)$ changes only at the discrete grid  $\Gcal(n)=\lrl{0,1/n,2/n,\dots,1}$, the overall loss function $\Lcal_n(\alphabold)$ is piecewise constant, making the problem inherently a combinatorial optimization task \citep{korte2008combinatorial}.

To address this issue, we can adopt a grid search strategy over the discretized grid $\boldsymbol{\Theta}_{\mathrm{grid}}(\alpha,n) = \{\alphabold \in \boldsymbol{\Theta}: \alpha_k \in \Gcal(n)\}$ to identify the empirical minimizer $\widehat{\alphabold}$. This naive grid search requires computational complexity $O((\alpha n)^{K-1})$, which is impractical for large $K$. As stated in Assumption~\ref{ass:marginal_ass}(c), the optimal allocation is typically sparse, which implies that one needs to assign nonzero shares of the total confidence budget to only a small subset of prediction sets.

To exploit this sparsity and reduce computational cost, we instead adopt a stepwise optimization algorithm inspired by classical variable selection methods \citep{claeskens2006variable}. At each iteration, the stepwise algorithm performs two operations: a forward step that identifies the score yielding the largest decrease in prediction set size, and a backward step that removes redundant scores.

\textbf{Forward step.} Given the current selected set $\Scal^{(t-1)}$ (with $\Scal^{(0)} = \varnothing$), we check whether adding a new score index $k \in [K]$ leads to a smaller empirical loss 
\[\min_{\substack{\alphabold \in \boldsymbol{\Theta}_{\mathrm{grid}}(\alpha,n)\\ \mathrm{supp}(\alphabold) \subset \Scal^{(t-1)} \cup \{k\}}} \Lcal_n(\alphabold) < \min_{\substack{\alphabold \in \boldsymbol{\Theta}_{\mathrm{grid}}(\alpha,n)\\ \mathrm{supp}(\alphabold) \subset \Scal^{(t-1)}}} \Lcal_n(\alphabold).\] 
If so, we add the index $k^\prime$ that achieves the largest reduction, updating $\Scal^{(t)} = \Scal^{(t-1)} \cup \{k^\prime\}$. If no such $k^\prime$ exists, the algorithm stops and solves the optimization problem subject to the support constraint $\mathrm{supp}(\alphabold) \subset \Scal^{(t-1)}$.

\textbf{Backward step.} After the forward step, we re-optimize with the set $\Scal^{(t)}$ 
\[\widehat{\alphabold}^{(t)} = \arg \min_{\substack{\alphabold \in \boldsymbol{\Theta}_{\mathrm{grid}}(\alpha,n)\\ \mathrm{supp}(\alphabold) \subset \Scal^{(t)}}} \Lcal_n(\alphabold).\]
Any selected index $k \in \Scal^{(t)}$ with $\widehat{\alpha}^{(t)}_k = 0$ is eliminated, and we update $\Scal^{(t)} = \mathrm{supp}(\widehat{\alphabold}^{(t)}).$

The process continues until the set size $\abs{\Scal^{(t)}}$
exceeds a pre-specified $k_{\max}$ or a maximum number of iterations is reached. This reduces complexity to $O(K (\alpha n)^{k_{\max}-1})$, making the procedure scalable. This algorithm is summarized in Algorithm~\ref{alg:stepwise}. Empirical results in Section~\ref{sec:optimization_comparision} of the Supplementary Material show that stepwise optimization outperforms smoothing-based relaxations \citep{xie2024boosted,kiyani2024length}, particularly when $K$ is large, by producing smaller sets while avoiding the pitfalls of smooth approximations such as parameter tuning, approximation error, and poor local minima. 

\section{Individualized Level Allocation}
\label{sec:COLA-conditional}

While Section~\ref{sec:COLA-marginal} focused on the average-case allocation that minimizes the expected prediction set size, many downstream tasks demand decisions tailored to the individual test point $\mX_{n+1}$. 
To meet this need, we consider individualized confidence-level allocation, which learns a vector $\alphabold(\mX_{n+1})$ depending on the test point $\mX_{n+1}$. The primary objective is to minimize the size of the aggregated intersection set specifically at $\mX_{n+1}$. The key enabler for such local optimization is a conditional coverage guarantee
\begin{align*}
\P(Y_{n+1} \in \widehat{C}(\mX_{n+1};\alpha) | \mX_{n+1}) \geq 1 - \alpha,
\end{align*}
which is stronger than the marginal guarantee considered earlier. However, finite-sample conditional coverage is impossible in a distribution-free setting \citep{foygel2021limits}. This motivates the study on asymptotic conditional validity, a line of research that has received growing attention in recent years \citep{guan2022localized,chernozhukov2021distributional,gibbs2025conformal}.

Following this perspective, we 
construct each prediction set via localized conformal prediction \citep{guan2022localized}.
Concretely, we employ kernel-weighted quantiles to build $K$ sets $\widehat{C}_{k}^{\mathrm{loc}} (\mX_{n+1};\alpha_k)$ that are (asymptotically) conditionally valid at $\mX_{n+1}$. Within this framework, the individual allocation $\alphabold(\mX_{n+1})$ is then optimized exactly as before, but now targeting the conditional set size rather than the global expectation. This leads to a finer-grained decision rule, which we refer to as $\mathrm{COLA\text{-}l}$.

\subsection{COLA-l: COLA with Individualized Allocation}

Equipped with a kernel function $H: \Xcal \times \Xcal \rightarrow \real$ and a score $S_k$, the localized conformal prediction set \citep{guan2022localized} is defined as
\begin{align*}
\widehat{C}_{k}^{\mathrm{loc}} (\mX_{n+1};\alpha_k) = \{y: S_k(\mX_{n+1},y) \leq Q_{\alpha_k} (\{ S_{k,i} \}_{i = 1}^n , \vw) \},
\end{align*}
where $Q_\alpha(\{ S_{k,i} \}_{i = 1}^n, \vw)$ denotes the $(1-\alpha)$ quantile of the weighted empirical distribution $\sum_{i\in [n]} w_i \delta_{S^k_i} $ with weights $\vw = (w_1,\dots,w_n)$ defined by $w_i = H(\mX_i, \mX_{n+1})/\sum_{j\in[n]} H(\mX_j, \mX_{n+1})$. The weights capture the similarity between $\mX_i$ and the test point $\mX_{n+1}$, so that nearby samples contribute more strongly to calibration.  

Following the same principle as in Section~\ref{sec:COLA-marginal}, the individualized allocation is then determined by solving
\begin{align}\label{eq:alpha_local}
\widehat{\alphabold} (\mX_{n+1}) \in  \arg \min_{\alphabold \in \boldsymbol{\Theta}} | \cap_{k = 1}^K \widehat{C}_{k}^{\mathrm{loc}} (\mX_{n+1};\alpha_k) |,
\end{align}
and the aggregated prediction set is 
\[	\widehat{C}^{\mathrm{loc}}(\mX_{n+1};\alpha) = \cap_{k = 1}^K \widehat{C}_{k}^{\mathrm{loc}} (\mX_{n+1};\widehat{\alpha}_k(\mX_{n+1})).\]
In contrast to Eq.(\ref{eq:alpha_eff}), which optimizes a global average, $\mathrm{COLA\text{-}l}$ focuses solely on minimizing the size of the prediction set at the specific test point $\mX_{n+1}$.

While we focus on localized conformal prediction to construct the individual sets $\{\widehat{C}^{\mathrm{loc}}_k\}_{k=1}^K$, the proposed framework is applicable beyond this setting. In principle, it can incorporate any collection of conformal prediction sets that satisfy (asymptotic) conditional coverage, including alternatives such as \citet{chernozhukov2021distributional,gibbs2025conformal}.

\begin{remark}
The stepwise optimization strategy in Algorithm~\ref{alg:stepwise} also applies to problem~(\ref{eq:alpha_local}). Although $\widehat{C}^{\mathrm{loc}}_k(\mX_{n+1};\alpha_k)$ changes at the jump points of the weighted empirical distribution $\sum_{i\in [n]} w_i \delta_{S^k_i}$ rather than the uniform grid points $\Gcal(n)$, the discretized feasible set $\boldsymbol{\Theta}_{\mathrm{grid}}(\alpha, n)$ is sufficiently fine such that the algorithm remains effective for problem~(\ref{eq:alpha_local}).
\end{remark}

\subsection{Asymptotic Guarantee of COLA-l}
Before proceeding, we introduce the population-level conditional oracle prediction sets. 
For each score function $S_k$, define 
\begin{align*}
	C_{k}^{\mathrm{loc}}(\mX_{n+1};\alpha_k) = \{y: S_k(\mX_{n+1},y) \leq q_{k,\alpha_k}(\mX_{n+1}) \},
\end{align*}
where $q_{k,\alpha}(\mX_{n+1}) = \inf\{t \in \real: \P(S_k(\mX_{n+1},Y_{n+1}) \leq t | \mX_{n+1} ) \geq 1 - \alpha \} $ is the $(1-\alpha)$ quantile of the score $S_k(\mX_{n+1},Y_{n+1})$ conditional on $\mX_{n+1}$.

To analyze the conditional coverage and the size efficiency of $\mathrm{COLA\text{-}l}$, we consider the feature space as $\Xcal \subset \mathbb{R}^d$ and adopt the Laplace kernel $H(\mX_1,\mX_2) = \exp(-{\norm{\mX_1 - \mX_2}}/{h_n} )$ with bandwidth parameter $h_n$. 
Similar results apply to other common kernels. 
The following conditions are needed.

\begin{assumption}	\label{ass:local_ass}
	There exist constants $\tau$, $\rho$, $l_\alpha$  and $L_q$ such that the following conditions hold.
	\begin{itemize}
		\item [(a).] The conditional cumulative distribution functions $F_{k} (q|\vx)$ of $S_k(\mX_{n+1},Y_{n+1})$ given $\mX_{n+1} = \vx$ are continuous, and satisfy $\sup_{k \in [K]} \sup_{q \in \real} \abs{F_{k}(q|\vx) - F_{k}(q|\vx^\prime)} \leq \tau \norm{\vx - \vx^\prime}$.
		\item [(b).] The density function of the covariate $ \mX_{n+1} $ exists and is lower bounded by $\rho$.
		\item [(c).] The bandwidth $h_n$ satisfies $\lim_n h_n = 0$ and $\lim_n nh_n^d = \infty$.
		\item [(d).] The conditional quantile functions $q_{k,\alpha}(\mX_{n+1})$ are $L_q$-Lipschitz continuous over $[ l_\alpha - \xi , \alpha +\xi ]$ with $\xi = O\big((\log(K n \rho h_n) / (n \rho V_d h_n^d))^{1/2} + \tau h_n \log(h_n^{-d})/\rho \big)$, such that $\abs{q_{k,\alpha_1}(\mX_{n+1}) - q_{k,\alpha_2}(\mX_{n+1})} \leq L_q \abs{\alpha_1 - \alpha_2}$ for any $\alpha_1$, $\alpha_2 \in [ l_\alpha - \xi , \alpha +\xi ]$, where $V_d$ is the volume of the unit ball in $\real^d$.
		\item [(e).] For any $\alphabold = (\alpha_k)_{k \in [K]} \in \boldsymbol{\Theta}$ and $\alpha_k^\prime = \max(\alpha_k, l_\alpha)$, it holds that
		$\cap_{k = 1}^K C_{k}^{\mathrm{loc}}(\vx;\alpha_k)  = \cap_{k = 1}^K C_{k}^{\mathrm{loc}}(\vx;\alpha_k^\prime)$ and
		$\cap_{k = 1}^K \widehat{C}_{k}^{\mathrm{loc}}(\vx;\alpha_k) = \cap_{k = 1}^K \widehat{C}_{k}^{\mathrm{loc}}(\vx;\alpha_k^\prime).$
	\end{itemize}
\end{assumption}

Assumptions~\ref{ass:local_ass}(a)-(d) ensure uniform convergence of the conditional quantile estimators, as discussed in \cite{guan2022localized}. Assumption~\ref{ass:local_ass}(e) serves as a localized counterpart of Assumption~\ref{ass:marginal_ass}(c). We now state the main results.
\begin{theorem}\label{thm:valid_conditional}
	\begin{itemize}
		\item [(1)] Under Assumptions~\ref{ass:local_ass}(a)-(c), it follows that
		\[\P(Y_{n+1}\in\widehat{C}^{\mathrm{loc}}(\mX_{n+1}; \alpha) | \mX_{n+1} ) \geq 1-\alpha - K ~ O \left( \sqrt{\frac{\log(K n \rho h_n)}{n \rho h_n^d}} +  \frac{\tau h_n \log(h_n^{-d})}{\rho} \right).\]
		\item [(2)] Suppose Assumptions~\ref{ass:common_ass} and \ref{ass:local_ass} hold. If the conditional density $f(y|\vx)$ of $Y_{n+1}|\mX_{n+1} = \vx$ exists with $\sup_{\vx \in \Xcal,y \in \Ycal} f(y|\vx) \leq M_f$, then it follows that
		\[\P(Y_{n+1}\in\widehat{C}^{\mathrm{loc}}(\mX_{n+1}; \alpha) | \mX_{n+1} ) \geq 1-\alpha - m M_f L_q L_s ~  O\left( \sqrt{\frac{\log(K n \rho h_n)}{n \rho h_n^d}} + \frac{\tau h_n \log(h_n^{-d})}{\rho} \right).\]
	\end{itemize}
\end{theorem}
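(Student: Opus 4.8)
The plan is to prove both parts by conditioning on $\mX_{n+1}$ and on the calibration sample $\Dcal$, and exploiting that $Y_{n+1}$ is independent of $\Dcal$ given $\mX_{n+1}$. Conditionally on $\Dcal$ and $\mX_{n+1}$, the learned allocation $\widehat{\alphabold}(\mX_{n+1})$ and the kernel-localized thresholds are fixed numbers, so the conditional miscoverage of each constituent set reduces to $\P(Y_{n+1}\notin\widehat{C}_k^{\mathrm{loc}}\mid\mX_{n+1},\Dcal)=1-F_k(\widehat{q}_{k,\widehat{\alpha}_k}(\mX_{n+1})\mid\mX_{n+1})$, where I write $\widehat{q}_{k,\alpha}(\mX_{n+1})=Q_\alpha(\{S_{k,i}\},\vw)$ and $F_k(\cdot\mid\mX_{n+1})$ for the population conditional CDF. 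Everything then hinges on comparing the data-driven threshold $\widehat{q}_{k,\widehat{\alpha}_k}$ to the population conditional quantile $q_{k,\widehat{\alpha}_k}(\mX_{n+1})$, and the common engine for both parts is a uniform control of the localized empirical conditional CDF.

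\textbf{Step 1 (shared).} I would first establish, on a high-probability event $\Ecal$, the uniform bound $\sup_{k\in[K]}\sup_t|\widehat{F}_k(t\mid\mX_{n+1})-F_k(t\mid\mX_{n+1})|\le\epsilon_n$, where $\widehat{F}_k(t\mid\mX_{n+1})=\sum_i w_i\,\indicator\{S_{k,i}\le t\}$ and $\epsilon_n=O(\sqrt{\log(Kn\rho h_n)/(n\rho h_n^d)}+\tau h_n\log(h_n^{-d})/\rho)$. This splits into a variance term, a weighted sum of independent Bernoullis with effective sample size $\asymp n\rho V_d h_n^d$, uniformized over $t$ by monotonicity and over $k$ by a union bound (the source of $\log(Kn\rho h_n)$), and a smoothing-bias term controlled through the Lipschitz conditional CDF of Assumption~\ref{ass:local_ass}(a), the Laplace-kernel effective radius $\asymp h_n\log(h_n^{-d})$, and the density lower bound $\rho$ of Assumption~\ref{ass:local_ass}(b). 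Inverting this and invoking the Lipschitz conditional quantile (Assumption~\ref{ass:local_ass}(d)), with the truncation at $l_\alpha$ from Assumption~\ref{ass:local_ass}(e), yields the key estimate $\sup_k|\widehat{q}_{k,\widehat{\alpha}_k}(\mX_{n+1})-q_{k,\widehat{\alpha}_k}(\mX_{n+1})|\le L_q\epsilon_n$ on $\Ecal$. This is the technical core, and I expect it to be delicate: the kernel weights are random and non-exchangeable once we condition on $\mX_{n+1}$, the Laplace kernel has exponential rather than compact support (forcing the $\log(h_n^{-d})$ truncation radius), and uniformity over the continuum of thresholds and over the $K$ scores must be obtained simultaneously to match the advertised rate. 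Crucially, the bound is uniform in the level, so it applies to the data-dependent $\widehat{\alpha}_k$.

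\textbf{Part (1).} On $\Ecal$, the definition of the weighted empirical quantile gives $\widehat{F}_k(\widehat{q}_{k,\widehat{\alpha}_k}\mid\mX_{n+1})\ge 1-\widehat{\alpha}_k$, hence $F_k(\widehat{q}_{k,\widehat{\alpha}_k}\mid\mX_{n+1})\ge 1-\widehat{\alpha}_k-\epsilon_n$ and $\P(Y_{n+1}\notin\widehat{C}_k^{\mathrm{loc}}\mid\mX_{n+1},\Dcal)\le\widehat{\alpha}_k+\epsilon_n$. A Bonferroni bound over the $K$ sets together with $\sum_k\widehat{\alpha}_k=\alpha$ gives conditional miscoverage at most $\alpha+K\epsilon_n$ on $\Ecal$; integrating out $\Dcal$ and absorbing $\P(\Ecal^c)$ into the $O(\cdot)$ term (by calibrating the deviation level in Step~1) yields the claim. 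The factor $K$ is exactly the price of the union bound, and this part needs only Assumptions~\ref{ass:local_ass}(a)--(c).

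\textbf{Part (2).} Here I would avoid the union bound by comparing $\widehat{A}=\cap_k\widehat{C}_k^{\mathrm{loc}}(\mX_{n+1};\widehat{\alpha}_k)$ directly to the oracle intersection $A^{*}=\cap_k C_k^{\mathrm{loc}}(\mX_{n+1};\widehat{\alpha}_k)$ built with the \emph{same} allocation. Since, conditionally on $\Dcal$, the identity $F_k(q_{k,\alpha}(\mX_{n+1})\mid\mX_{n+1})=1-\alpha$ holds at every level, $A^{*}$ retains the oracle guarantee $\P(Y_{n+1}\in A^{*}\mid\mX_{n+1})\ge 1-\alpha$ even though $\widehat{\alphabold}$ is data-driven. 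Thus $\P(Y_{n+1}\notin\widehat{A}\mid\mX_{n+1})\le\alpha+M_f\,|A^{*}\setminus\widehat{A}|$ by the bounded conditional density. It then remains to bound the Lebesgue measure of the shell $A^{*}\setminus\widehat{A}$: since the two intersections use identical thresholds up to the perturbation $|\widehat{q}_{k,\widehat{\alpha}_k}-q_{k,\widehat{\alpha}_k}|\le L_q\epsilon_n$ from Step~1, and each sublevel set $\{y:S_k(\mX_{n+1},y)\le t\}$ is a union of at most $m$ intervals whose measure is $L_s$-Lipschitz in $t$ (Assumption~\ref{ass:common_ass}), every moving boundary piece of the intersection contributes measure at most $L_sL_q\epsilon_n$, giving $|A^{*}\setminus\widehat{A}|=O(mL_sL_q\epsilon_n)$ and hence the stated $mM_fL_qL_s\,O(\epsilon_n)$ bound. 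The main obstacle is precisely this shell estimate, which is where the improvement from $K$ to the constant $mM_fL_qL_s$ is earned: the argument is transparent for single-interval scores ($m=1$, as for the residual, conformalized-quantile, and rescaled-residual scores discussed after Assumption~\ref{ass:common_ass}), where $A^{*}$ and $\widehat{A}$ are intervals with two moving endpoints and the shell has measure at most $2L_sL_q\epsilon_n$; the delicate point for multi-interval scores is to keep the number of moving boundary pieces at $O(m)$ rather than $O(K)$, which is exactly the role played by the interval structure in Assumption~\ref{ass:common_ass}(a).
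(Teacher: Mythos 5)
Your proposal is correct and follows essentially the same route as the paper's proof: Part (1) is the Bonferroni bound plus a uniform (bias-plus-variance) control of the kernel-weighted empirical conditional CDF, with the bias handled via the Lipschitz conditional CDF and an annulus decomposition of the Laplace kernel's tail and the variance via Hoeffding with a union bound over $K$ scores and a discretized level grid; Part (2) compares the data-driven intersection to the population intersection at the same allocation $\widehat{\alphabold}$, invokes the bounded conditional density, and bounds the symmetric difference through the $L_q\epsilon_n$ quantile perturbation, Assumption~\ref{ass:common_ass}, and the $O(m)$-interval structure (the paper's Lemma~\ref{lemma:sym_dif}). The only cosmetic difference is that the paper bounds the CDF deviation directly in Part (1) without explicitly inverting to quantiles there, deferring the quantile-level inversion to the set-difference estimate used in Part (2) and Theorem~\ref{thm:optimal_conditional}.
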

The dependence on $K$ in Theorem~\ref{thm:valid_conditional}-(1) appears outside the square root. This linear dependence on $K$ stems from the accumulation of estimation errors across the $K$ conditional quantiles. Under stronger conditions, the rate in Theorem~\ref{thm:valid_conditional}-(2) improves upon that in Theorem~\ref{thm:valid_conditional}-(1), reducing the dependence on $K$ to $\log(K)$.

Next, we turn to the size efficiency of COLA-l. 
\begin{theorem}\label{thm:optimal_conditional}
	Under Assumptions~\ref{ass:common_ass} and \ref{ass:local_ass}, conditional on $\mX_{n+1}$, it follows that 
	\begin{align*}
		\abs{\widehat{C}^{\mathrm{loc}}(\mX_{n+1}; \alpha)} - \Lcal^*(\mX_{n+1})
		= m L_q L_s ~ O\left( \sqrt{\frac{\log(K n \rho h_n)}{n \rho h_n^d}} + \frac{\tau h_n \log(h_n^{-d})}{\rho} \right),
	\end{align*}
	where $\Lcal^*(\mX_{n+1})=\min_{\alphabold \in \boldsymbol{\Theta}}| \cap_{k = 1}^K C_{k}^{\mathrm{loc}}(\mX_{n+1};\alpha_k)|$.
\end{theorem}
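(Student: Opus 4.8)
The plan is to reduce the size gap to a uniform deviation between the estimated and oracle conditional size functionals, and then to control that deviation by converting the localized quantile estimation error into a set-size error. Throughout I condition on $\mX_{n+1}$ and treat the randomness as arising from the hold-out sample $\Dcal$ through the kernel-weighted quantiles. Write $\widehat{\Lcal}^{\mathrm{loc}}(\alphabold) = |\cap_{k=1}^K \widehat{C}_{k}^{\mathrm{loc}}(\mX_{n+1};\alpha_k)|$ and $\Lcal^{\mathrm{loc}}(\alphabold) = |\cap_{k=1}^K C_{k}^{\mathrm{loc}}(\mX_{n+1};\alpha_k)|$, so that $\widehat{\alphabold}(\mX_{n+1})$ minimizes $\widehat{\Lcal}^{\mathrm{loc}}$, the oracle minimizer $\alphabold^*$ gives $\Lcal^*(\mX_{n+1}) = \Lcal^{\mathrm{loc}}(\alphabold^*)$, and the target equals $\widehat{\Lcal}^{\mathrm{loc}}(\widehat{\alphabold}(\mX_{n+1})) - \Lcal^{\mathrm{loc}}(\alphabold^*)$.

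\textbf{Step 1 (reduction to a uniform deviation).} Let $D := \sup_{\alphabold\in\boldsymbol{\Theta}}|\widehat{\Lcal}^{\mathrm{loc}}(\alphabold) - \Lcal^{\mathrm{loc}}(\alphabold)|$. A standard two-sided comparison of minimizers yields $|\widehat{\Lcal}^{\mathrm{loc}}(\widehat{\alphabold}) - \Lcal^{\mathrm{loc}}(\alphabold^*)| \le D$: the upper direction uses $\widehat{\Lcal}^{\mathrm{loc}}(\widehat{\alphabold}) \le \widehat{\Lcal}^{\mathrm{loc}}(\alphabold^*) \le \Lcal^{\mathrm{loc}}(\alphabold^*) + D$, and the lower direction uses $\widehat{\Lcal}^{\mathrm{loc}}(\widehat{\alphabold}) \ge \Lcal^{\mathrm{loc}}(\widehat{\alphabold}) - D \ge \Lcal^{\mathrm{loc}}(\alphabold^*) - D$. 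In contrast to COLA-e, there is no empirical-average/generalization term here because the objective is evaluated at the single point $\mX_{n+1}$ rather than averaged over a sample; this is precisely why no $M\sqrt{K\log n/n}$-type contribution appears, leaving only the quantile-estimation error.

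\textbf{Step 2 (size-to-quantile conversion).} I bound $D$ by the uniform quantile error $\Delta := \sup_{k}\sup_{\alpha\in[l_\alpha,\alpha]}|\widehat{q}_{k,\alpha}(\mX_{n+1}) - q_{k,\alpha}(\mX_{n+1})|$, where $\widehat{q}_{k,\alpha}(\mX_{n+1})$ is the kernel-weighted $(1-\alpha)$ quantile. By Assumption~\ref{ass:local_ass}(e) small allocations may be truncated to $l_\alpha$ without altering either intersection, so only $\alpha_k\in[l_\alpha,\alpha]$ is relevant; I then sandwich both $\widehat{C}_k^{\mathrm{loc}}$ and $C_k^{\mathrm{loc}}$ between the level sets at thresholds $q_{k,\alpha_k}(\mX_{n+1})\pm\Delta$. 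Using Assumption~\ref{ass:common_ass}(a) (each level set is a union of at most $m$ intervals, so each intersection is a finite union of intervals whose endpoints are endpoints of the constituent level sets) together with the $1$-Lipschitzness of the binding $\min$/$\max$ endpoints, the change in the intersection's measure is governed by the movement of its boundary; Assumption~\ref{ass:common_ass}(b) bounds each level set's total boundary movement under a threshold shift of size $\delta$ by $L_s\delta$, which yields $D = O(m L_s \Delta)$ and avoids a linear-in-$K$ blow-up. Finally Assumption~\ref{ass:local_ass}(d) converts the probability-scale fluctuation to the quantile scale, $\Delta \le L_q\,\xi$, over the valid Lipschitz window $[l_\alpha-\xi,\alpha+\xi]$, so that $D = O(m L_q L_s\, \xi)$.

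\textbf{Step 3 (localized quantile convergence).} It remains to verify that the probability-scale error $\xi$ attains the stated rate. I decompose the kernel-weighted empirical CDF $\widehat{F}_k(q|\mX_{n+1}) = \sum_{i} w_i \indicator\{S_{k,i}\le q\}$ into its conditional mean $\sum_i w_i F_{k}(q|\mX_i)$ and a stochastic fluctuation. The bias obeys $|\sum_i w_i F_{k}(q|\mX_i) - F_{k}(q|\mX_{n+1})| \le \tau\sum_i w_i\norm{\mX_i-\mX_{n+1}}$ by Assumption~\ref{ass:local_ass}(a), and controlling the weighted average distance for the Laplace kernel gives the $\tau h_n \log(h_n^{-d})/\rho$ term; the fluctuation is controlled uniformly in $q$ and $k$ by a weighted empirical-process (DKW-type) bound with effective sample size of order $n\rho h_n^d$ (via the density lower bound, Assumption~\ref{ass:local_ass}(b)), with a union bound over the $K$ scores producing the $\log(Kn\rho h_n)$ factor. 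This is exactly the uniform conditional-quantile convergence of localized conformal prediction established in \citet{guan2022localized} under Assumptions~\ref{ass:local_ass}(a)--(d), yielding $\xi = O(\sqrt{\log(Kn\rho h_n)/(n\rho h_n^d)} + \tau h_n\log(h_n^{-d})/\rho)$; substituting into Steps~1--2 completes the proof.

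\textbf{Main obstacle.} The crux is Step~3: establishing uniform (over the $K$ scores and over the quantile level) convergence of the kernel-localized conditional quantiles conditional on $\mX_{n+1}$, where the bias--variance tradeoff in $h_n$ and the exponential kernel tail must be handled with care. A secondary subtlety lies in Step~2, where the interval structure of Assumption~\ref{ass:common_ass}(a) and the truncation of Assumption~\ref{ass:local_ass}(e) must be exploited to keep the prefactor at $m$, rather than incurring a linear factor of $K$ from naively unioning the $K$ level-set perturbations.
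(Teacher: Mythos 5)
Your proposal is correct and follows essentially the same route as the paper: a two-sided minimizer comparison reduces the size gap to $\sup_{\alphabold}|\,|\cap_k \widehat{C}_k^{\mathrm{loc}}|-|\cap_k C_k^{\mathrm{loc}}|\,|$, which is then bounded by $O(mL_qL_s c_n)$ via the truncation in Assumption~\ref{ass:local_ass}(e), the interval/Lipschitz structure of Assumption~\ref{ass:common_ass} (the paper's Lemma~\ref{lemma:sym_dif}), and a sandwich argument converting the uniform weighted-CDF error $c_n$ (bias plus fluctuation, exactly the $\Xi_1,\Xi_2$ terms reused from Theorem~\ref{thm:valid_conditional}) into a quantile-scale error $L_qc_n$. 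The only cosmetic difference is that the paper writes out only the upper direction of the minimizer comparison, whereas you state both; the substance is identical.
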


Assuming $K = O(n^C)$ for constant $C>0$ and choosing $h_n \asymp n^{-{1}/{(d + 2)}}$, COLA-l achieves asymptotic conditional coverage and locally oracle size $\Lcal^*(\mX_{n+1})$ at the rate $n^{-{1}/{(d + 2)}} \log(n)$. Table~\ref{tab:cola_comparison} in the Supplementary Material summarizes the characteristics of the four COLA variants: $\mathrm{COLA\text{-}e}$, $\mathrm{COLA\text{-}s}$, $\mathrm{COLA\text{-}f}$, and $\mathrm{COLA\text{-}l}$.

\section{Numerical Result}\label{sec:numerical_result}
In this section, we examine the finite-sample performance of COLA, focusing on COLA-e and COLA-s for average-case allocation and COLA-l for individualized allocation. The goals are to empirically (i) verify the nominal coverage guarantees, and (ii) assess the efficiency. Throughout, we fix the nominal miscoverage level $\alpha=0.1$. The stepwise optimization procedure in Algorithm~\ref{alg:stepwise} is applied with $k_{\mathrm{max}} = 4$. Those methods that require data splitting use an equal split of the hold-out set $\Dcal$, i.e., $n_{\mathrm{tu}}=n_{\mathrm{cal}}=n/2$.

\subsection{Marginal-level Performance}\label{sec:numerical_marginal}
A full comparison among COLA-e, COLA-s, and COLA-f (including coverage, set size, and runtime) is presented in Section~\ref{subsec:comparison_s_e_f} of the Supplementary Material. Due to the high computational cost of COLA-f and its empirical similarity to COLA-e and COLA-s, the main numerical studies below concentrate on COLA-e and COLA-s. In this section, we examine three representative cases where the data are generated from $Y=\mu(\mX) + \varepsilon$ with $\mX = (X_j)_{j \in [d]}$:
\begin{itemize}
\item Case 1 (different learning algorithms): 
$ \mu(\mX) = \indicator\left\{X_1+X_2+X_3 > 0\right\}$, $\mX \sim \normal(0,\Sigma)$ with $d=5$ and $\Sigma_{i,j}=(0.5)^{\abs{i-j}}$, and $ \varepsilon \sim \normal(0,0.01).$  $\widehat{\mu}$ is trained using four distinct algorithms including ordinary least squares (OLS), random forest (RF), neural network, and multivariate adaptive regression splines (MARS), producing $K = 4$ residual scores.
\item Case 2 (different nonconformity measures): Same $\mu(\mX)$ and $\mX$ as Case 1 with $\varepsilon \sim \normal(0,(0.03 X_1)^2).$
Three types of scores are considered: residual scores using $\widehat{\mu}$ from MARS; rescaled residual scores using $\widehat{\mu}$ and $\widehat{\sigma}$ from RF fitted separately; and conformalized quantile regression scores using $\widehat{\tau}_{\alpha/2}$ and $\widehat{\tau}_{1-\alpha/2}$ from quantile regression forests, yielding $K = 3$.
\item Case 3 (different feature subsets):
$\mu(\mX) = \mX^\top \betabold $, where $ \beta_j = \indicator\{j \bmod 20 = 0\}$ for $j\in[d]$ with $d=100$.  $\mX\sim \mathcal{N}(0,I_{d})$ and  $\varepsilon \sim \normal(0,1).$ 
Four ridge regression submodels are fitted, each using a randomly selected subset of $20$ features and a regularization parameter of $0.1$, producing $K = 4$ residual scores.
\end{itemize}

We benchmark COLA-e and COLA-s against five alternatives: efficiency-first (EFCP) and validity-first (VFCP) conformal prediction algorithms from \cite{yang2024selection}; the Majority Vote algorithm from \cite{gasparin2024merging}; the Synthetic, Aggregation, and Test inversion (SAT) approach from \cite{qin2024sat}; and a naive baseline (Random) that selects one score uniformly at random. In Majority Vote, each of the $K$ individual sets is constructed at level $1-\alpha/2$, meeting the nominal coverage guarantee. SAT aggregates conformal p-values using the Cauchy combination test \citep{liu2020cauchy} as recommended in \cite{qin2024sat}. 

For each experimental setting, we repeat $200$ independent trials, each involving an independent draw of a training set of size $150$, a test set of size $40$, and a hold-out set $\mathcal{D}$ of size $n$. The test set is used to evaluate empirical average coverage and set size, while the hold-out set $\mathcal{D}$ is used to construct prediction sets. Figure~\ref{fig:lineplot_n} reports the results as $n$ varies from $100$ to $600$, showing average coverage (top row) and average set size (bottom row) scaled by that of COLA-e. 

\begin{figure}[t!]
\centering
\begin{minipage}{0.8\textwidth}
	\centering
	\includegraphics[trim=4.9cm 14.4cm 4.9cm 0cm, clip, height=0.4cm]{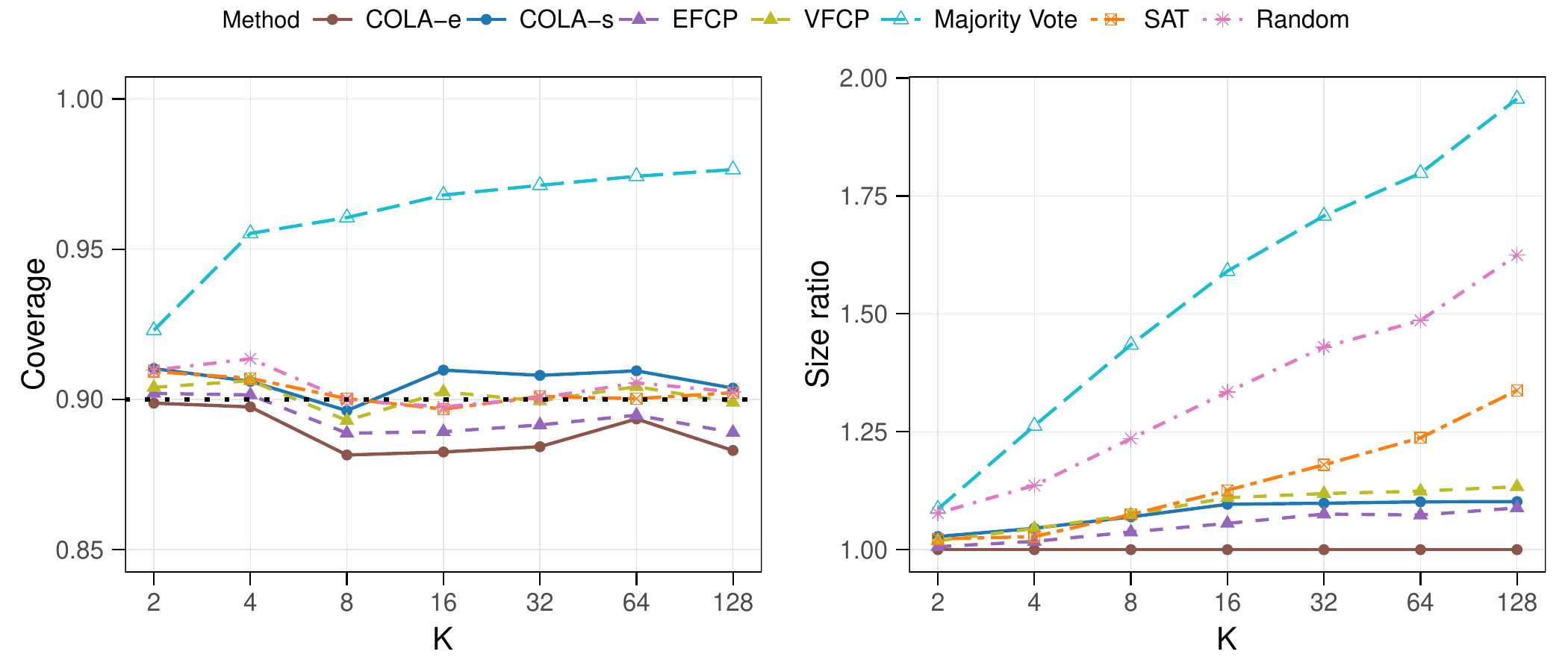}
\end{minipage}
\vspace{0.1em}
\begin{minipage}{0.32\textwidth}
	\includegraphics[trim=1cm 0 7.3cm 0, clip, height=9cm]{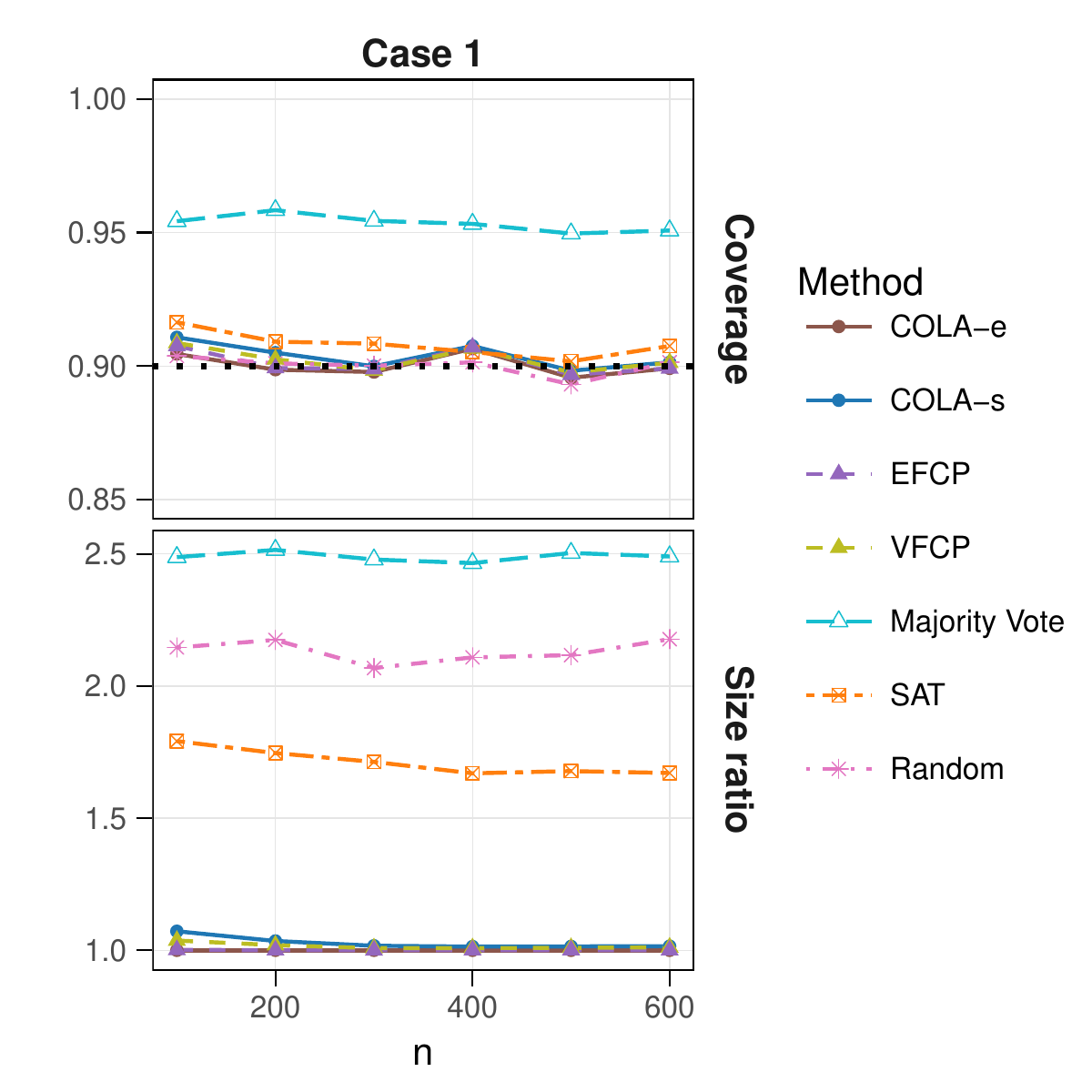}
\end{minipage}%
\hfill
\begin{minipage}{0.32\textwidth}
	\includegraphics[trim=1cm 0 7.3cm 0, clip, height=9cm]{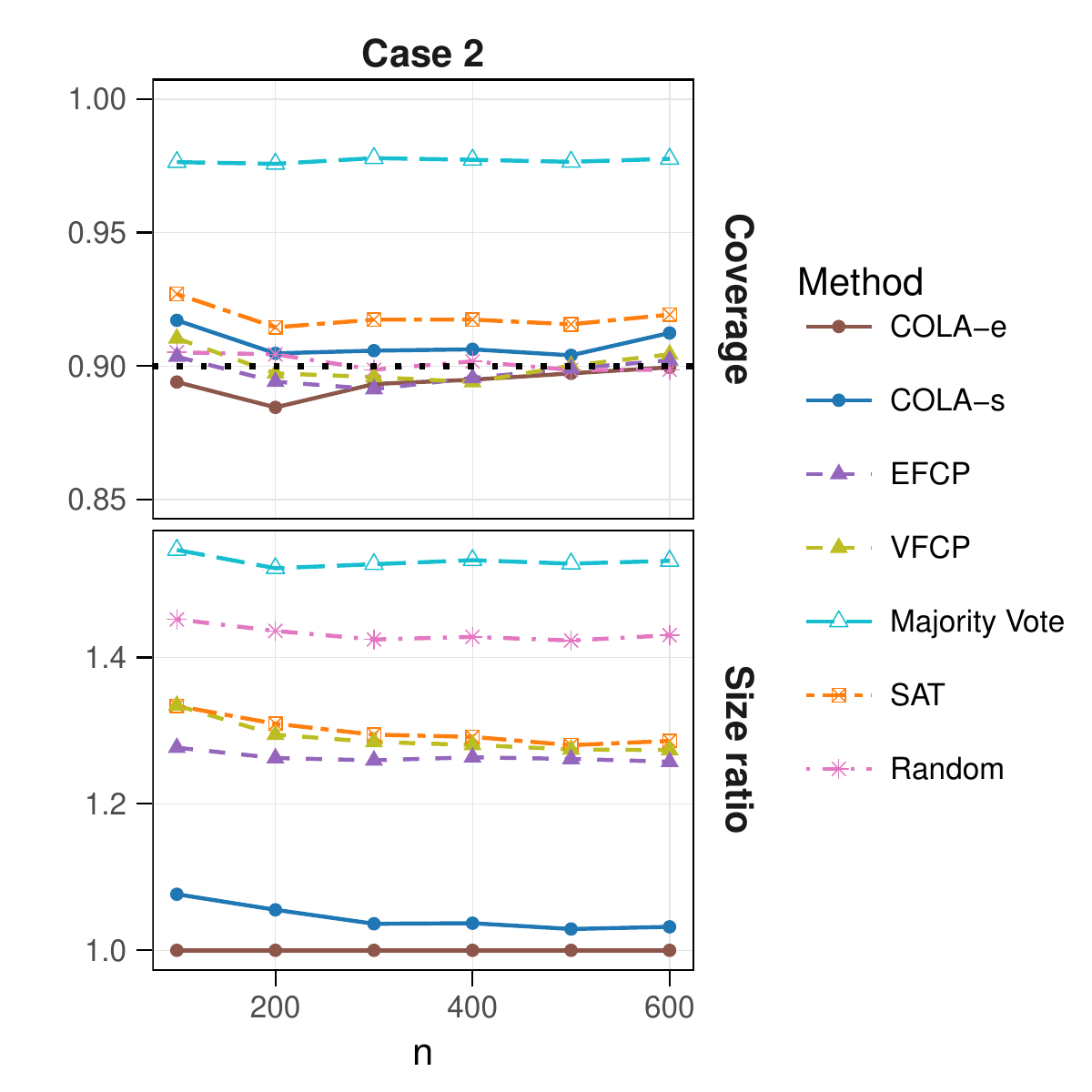}
\end{minipage}%
\hfill
\begin{minipage}{0.34\textwidth}
	\includegraphics[trim=1cm 0 5.5cm 0, clip, height=9cm]{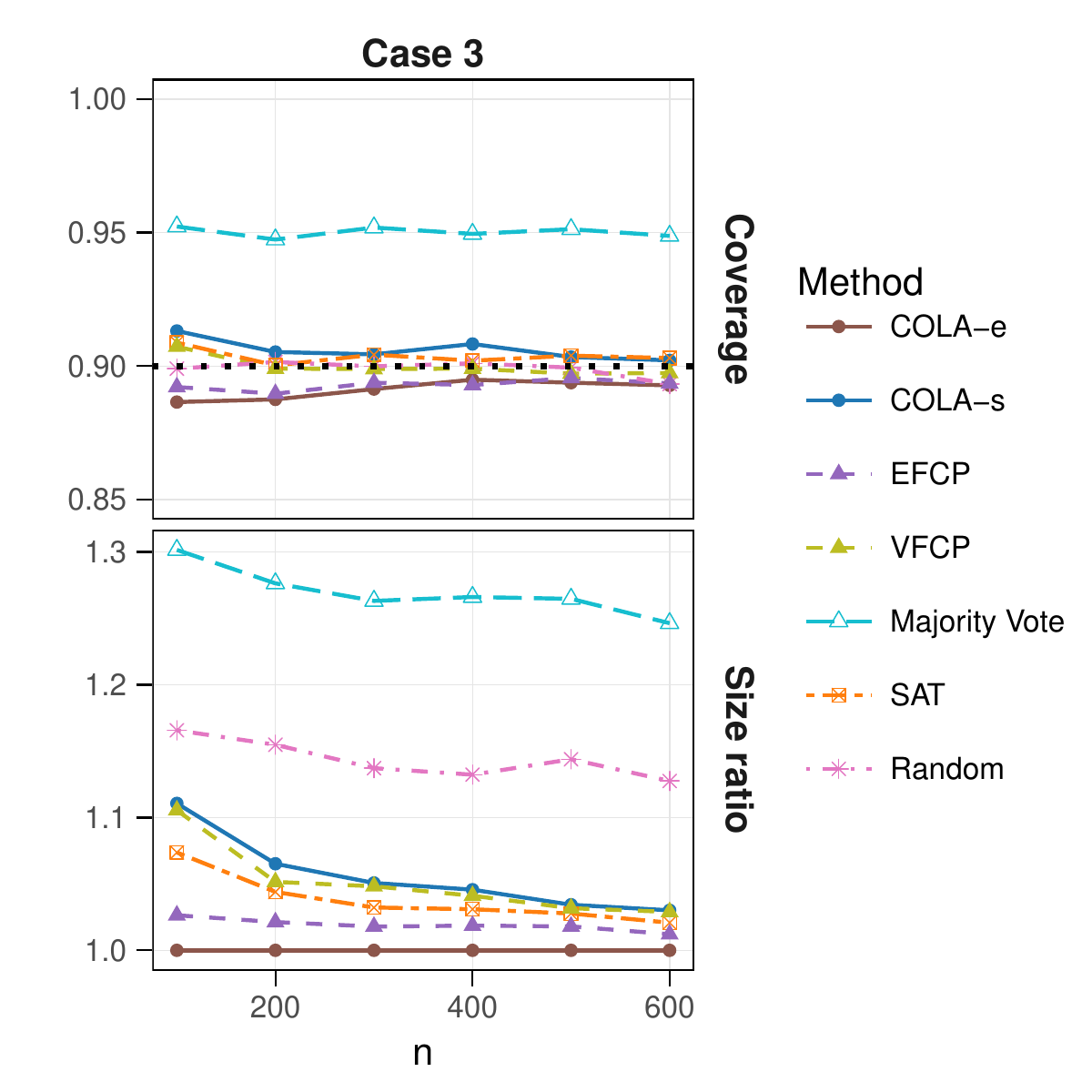}
\end{minipage}
\caption{Performance under Cases 1--3 with varying $n$. Top row: coverage; bottom row: prediction set size ratio relative to COLA-e. The black dotted lines mark the nominal coverage level $1-\alpha=0.9$.}
\label{fig:lineplot_n}
\end{figure}

Across all three cases, COLA-e and COLA-s perform competitively. In Case 1, their results are broadly comparable to EFCP and VFCP (see Section~\ref{sec:numerical_case1} in the Supplementary Material for detailed comparisons). In Case 2, both COLA-e and COLA-s produce substantially smaller sets than EFCP and VFCP, highlighting the advantage of allocation over single-score selection. Meanwhile, COLA-e exhibits slightly undercoverage in both Case 2 and Case 3, similar to EFCP, but this effect diminishes as $n$ increases, consistent with the asymptotic validity of COLA-e established in Theorem~\ref{thm:valid_cola_e}. In contrast, set-level combination methods such as Majority Vote and SAT generally produce longer prediction sets.
This underscores the benefit of level-wise allocation: by directly optimizing confidence-level allocation across candidate scores, COLA-e and COLA-s can achieve a more favorable tradeoff between coverage and efficiency. 

We next examine the effect of varying $K$ under Case 3. Keeping all configurations the same as before, we vary $K$ from $2$ to $128$ while fixing the hold-out size at $n=300$. The results are reported in Figure~\ref{fig:lineplot_K}. COLA-e and EFCP show some undercoverage that becomes more pronounced as 
$K$ increases, consistent with theoretical results. In terms of set size, COLA-e and COLA-s consistently outperform EFCP and VFCP, whereas Majority Vote and SAT produce increasingly larger prediction sets as $K$ grows.

\begin{figure}[H]
\centering
\includegraphics[width=1\linewidth]{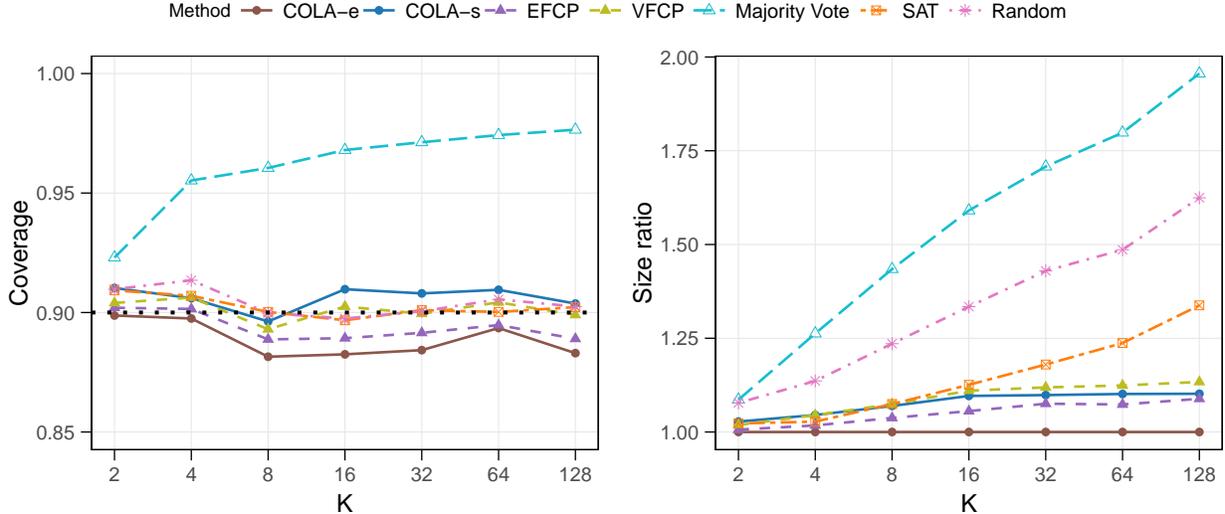}
\caption{Performance under Case 3 with varying $K$. Left: coverage; right: prediction set size ratio relative to COLA-e. The black dotted line marks the nominal coverage level $1-\alpha=0.9$.}
\label{fig:lineplot_K}
\end{figure}

More comprehensive experimental results are provided in Section~\ref{sec:numerical_supp} of the Supplementary Material, including comparisons with score-level combination \citep{luo2025weighted} and model-level combination \citep{yang2024selection}, as well as results under varying miscoverage levels $\alpha$.

\subsection{Individual-level Performance}\label{sec:numerical_individual}

We now turn to the individualized setting and compare COLA-l with COLA-e to highlight the benefits of individualized allocation. We consider $Y = \mu(X) + \varepsilon$ with 
$\mu(X) = X \indicator\{-1 < X < 1\} +  \indicator\{X \geq 1\} - \indicator\{X \leq - 1\}$,
$X \sim \mathrm{Unif}[-2,2]$ and $\varepsilon \sim \normal(0,(0.25 + 0.25\abs{X})^2)$. We use $K=2$ residual scores obtained from OLS and regression trees. Both COLA-l and COLA-e are equipped with the localized conformal prediction \citep{guan2022localized} to construct the conformal prediction sets with a Laplace kernel $H(x,x') = \exp(-\abs{x - x'}/h_n)$ and bandwidth $h_n = c n^{-1/(d+2)}$, where the constant $c$ is chosen to ensure an effective sample size of $200$ \citep{hore2025conformal}. The regression models are trained on $1000$ samples; the hold-out data $\mathcal{D}$ has $n=2000$ or $n=5000$ observations. Conditional coverage and set size are evaluated at 21 equally spaced $X\in[-1,1]$ using $2000$ new samples per location.

Figure~\ref{fig:individual_plot} shows that both COLA-l and COLA-e maintain the conditional coverage guarantee, with stability improving as $n$ grows. Importantly, COLA-l produces smaller prediction sets across most locations by adaptively allocating $\alpha$ according to the covariate $X$, whereas COLA-e uses one allocation across all $X$. This demonstrates the advantage of individualized allocation: by tailoring the allocation to the covariate space, COLA-l delivers finer efficiency.
\begin{figure}[H]
\centering
\includegraphics[width=1\linewidth]{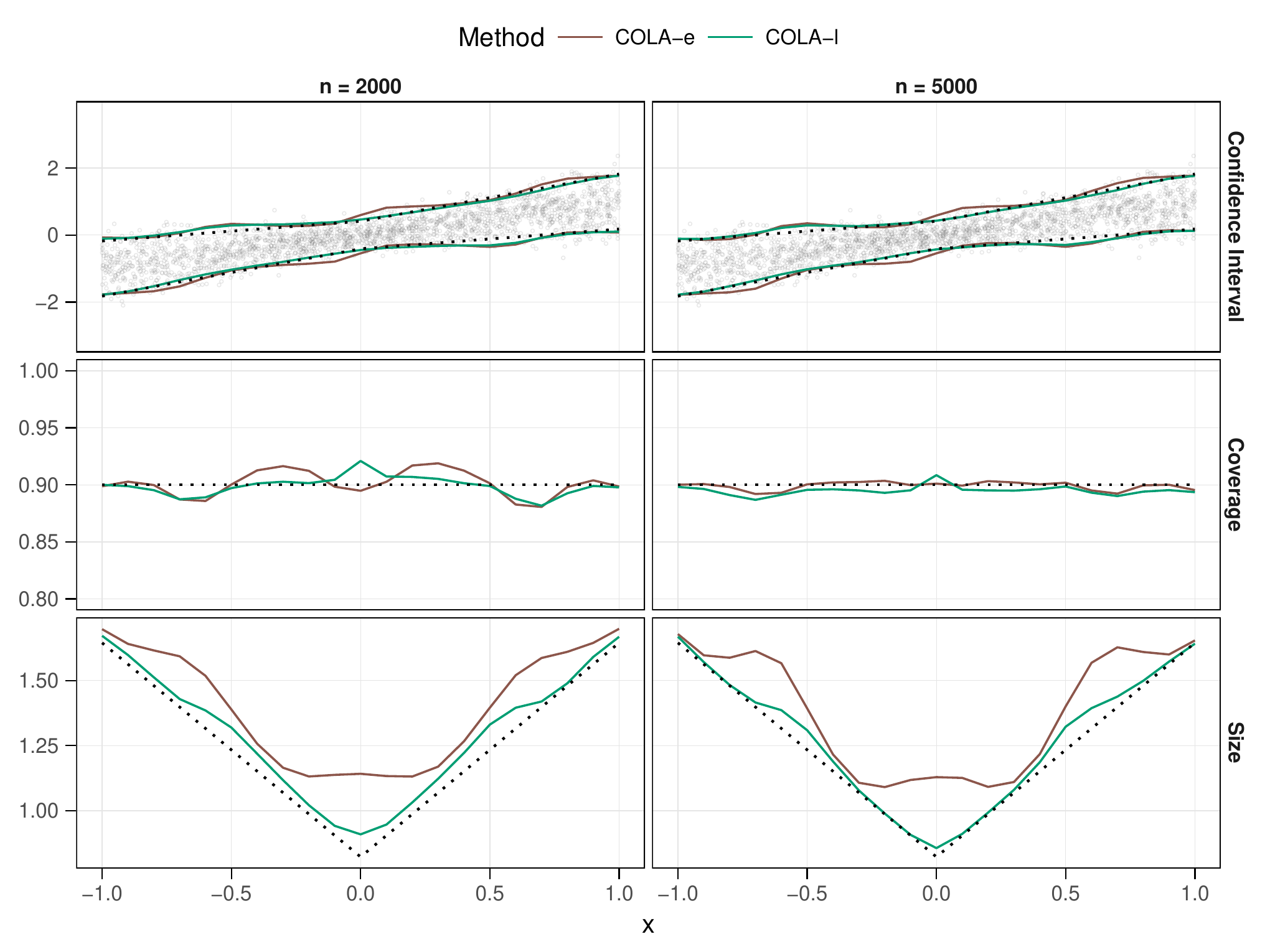}
\caption{Performance of COLA-e and COLA-l across locations. Top row: prediction set (black dotted line, oracle conditionally valid set; gray points, data samples). Middle row: coverage (black dotted line, nominal level $1-\alpha=0.9$). Bottom row: prediction set size (black dotted line, size of the oracle conditionally valid set).}
\label{fig:individual_plot}
\end{figure}

\subsection{Real Data Example}
We evaluate the competing methods on $3$ regression datasets from the UCI repository: \textit{Blog Feedback, Concrete Strength, Superconductivity} (see Appendix G of \cite{gupta2022nested} for dataset details). For all datasets, we construct the prediction sets using the same three nonconformity measures and corresponding prediction rules as in Case 2. For each dataset, 1000 observations are randomly subsampled without replacement into a training set of size 384, a test set of size 232, and a hold-out set $\Dcal$ of size $n=384$, serving the same roles as in earlier experiments. The above procedure is independently repeated $100$ times, and the reported metrics are the average coverage and average set size. 

As shown in Table~\ref{tab:uci_reduced}, our method consistently achieves near-nominal coverage while producing substantially smaller prediction sets compared to other baselines. This demonstrates both the reliability and efficiency of our approach. Similar results on three additional datasets are provided in Table~\ref{tab:uci_remaining} of the Supplementary Material. 

\begin{table}[H]
\centering
\caption{Coverage and prediction set size on three UCI datasets ($\alpha = 0.1$).}
\label{tab:uci_reduced}
\begin{tabular}{
		l
		>{\centering\arraybackslash}p{1.5cm}
		>{\centering\arraybackslash}p{1.5cm}
		>{\centering\arraybackslash}p{1.5cm}
		|
		>{\centering\arraybackslash}p{1.5cm}
		>{\centering\arraybackslash}p{1.5cm}
		>{\centering\arraybackslash}p{1.5cm}
	}
	\toprule
	\multirow{2}{*}{\textbf{Method}} 
	& \multicolumn{3}{c|}{\textbf{Coverage (\%)}} 
	& \multicolumn{3}{c}{\textbf{Size}} \\
	& \textit{Blog} & \textit{Concrete} & \textit{Conduct} 
	& \textit{Blog} & \textit{Concrete} & \textit{Conduct} \\
	\midrule
	$\mathrm{COLA\text{-}e}$ & 91.97 & 90.24 & 90.34   & 13.36 & 19.64 & 43.19 \\
	$\mathrm{COLA\text{-}s}$       & 92.83 & 91.28 & 91.55   & 17.57 & 20.78 & 45.93 \\
	EFCP    & 92.43 & 89.90 & 90.06   & 22.39 & 20.85 & 46.37 \\
	VFCP         & 92.85 & 90.29 & 90.13   & 23.01 & 21.21 & 47.07 \\
	Majority Vote  & 97.53 & 97.76 & 96.95   & 48.10 & 29.91 & 60.21 \\
	SAT & 60.51 & 92.69 & 92.14   & 25.76 & 21.78 & 45.94 \\
	Random     & 92.22 & 90.03 & 90.20   & 45.79 & 26.23 & 55.85 \\
	\bottomrule
\end{tabular}
\end{table}

\section{Conclusion and Future Work}\label{sec:conclusion}
This work introduces COLA, a general framework for aggregating multiple conformal prediction sets by optimally allocating the miscoverage budget $\alpha$ across candidate sets and intersecting them. COLA achieves substantial efficiency gains with rigorous coverage guarantees and attains asymptotically optimal allocation under mild conditions. In the average-case setting, its three variants including COLA-e, COLA-s, and COLA-f, balance efficiency, validity, and computational cost (see Table~\ref{tab:cola_comparison}). We further propose COLA-l, which adapts allocation to each test point while ensuring conditional coverage and size efficiency. A stepwise optimization algorithm is proposed to enhance scalability and practical implementation. Extensive experiments show that COLA yields smaller prediction sets than existing baselines, highlighting its flexibility and broad applicability.

The findings of this work open several directions for future research. First, a theoretical analysis of COLA-f’s efficiency, in the spirit of  \cite{liang2024conformal}, is needed to complement its strong empirical performance. Second, more computationally effective optimization procedures are required to find global solutions in large-scale settings reliably. 
Finally, beyond size efficiency, other notions of optimality deserve attention, such as allocating the budget $\alpha$ to maximize the utility of prediction-based decisions under the coverage constraint, both on average and at the individual level.

\bibliographystyle{asa}
\bibliography{ref}

\appendix
\newpage

\begin{center}
    \linespread{2}\selectfont
    {\Large \bfseries Supplementary Material for ``Aggregating Conformal Prediction Sets via $\alpha$-Allocation"}
\end{center}
\vspace{30pt}
The Supplementary Material contains all technical proofs, additional algorithmic and theoretical discussions, including the intuition behind our method, and details of the stepwise algorithm, as well as extended numerical studies and implementation details. 
\section{Technical Proofs}
\subsection{Proof of Theorem \texorpdfstring{\ref{thm:valid_cola_e}}{1} }\label{proof:optimal_efcp}
\begin{proof}[Proof of Theorem~\ref{thm:valid_cola_e}]
	For notational convenience, define
	\[
	W = \sup_{\vt} \Bigg| \frac{1}{n} \sum_{i\in[n]} \prod_{k=1}^K \indicator\lrl{S_{k,i} \leq t_k}
	- \E\lrm{ \prod_{k=1}^K \indicator\lrl{S_k(\mX_{n+1}, Y_{n+1}) \leq t_k} } \Bigg|.
	\]
	We have
	\begin{align*}
		&\P(Y_{n+1} \in \widehat{C}^{\mathrm{e}}(\mX_{n+1};\alpha) \mid \Dcal) \\
		&= \P\lrm{ \cap_{k=1}^K \lrl{ S_k(\mX_{n+1}, Y_{n+1}) \leq Q_{\widehat{\alpha}^{\mathrm{e}}_k}(\{S_{k,i}\}_{i\in[n]} \cup \{\infty\}) } \mid \Dcal } \\
		&= \E\lrm{ \prod_{k=1}^K \indicator\lrl{ S_k(\mX_{n+1}, Y_{n+1}) \leq Q_{\widehat{\alpha}^{\mathrm{e}}_k}(\{S_{k,i}\}_{i\in[n]} \cup \{\infty\}) } \mid \Dcal } \\
		&\geq \frac{1}{n} \sum_{i\in[n]} \prod_{k=1}^K \indicator\lrl{ S_{k,i} \leq Q_{\widehat{\alpha}^{\mathrm{e}}_k}(\{S_{k,i}\}_{i\in[n]} \cup \{\infty\}) } - W \\
		&= \frac{1}{n} \sum_{i\in[n]} \indicator\lrl{ Y_i \in \widehat{C}^{\mathrm{e}}(\mX_{i};\alpha) } - W.
	\end{align*}
	Subtracting from $1$ gives
	\begin{align*}
		\P\lrl{ Y_{n+1} \notin \widehat{C}^{\mathrm{e}}(\mX_{n+1};\alpha) \mid \Dcal }
		&\leq \frac{1}{n} \sum_{i\in[n]} \indicator\lrl{ Y_{i} \notin \cap_{k=1}^K \widehat{C}_k(\mX_{i};\widehat{\alpha}_k^{\mathrm{e}}) } + W \\
		&\leq \frac{1}{n} \sum_{i\in[n]} \sum_{k\in[K]} \indicator\lrl{ Y_{i} \notin \widehat{C}_k(\mX_{i};\widehat{\alpha}_k^{\mathrm{e}}) } + W \\
		&= \sum_{k\in[K]} \frac{1}{n} \sum_{i\in[n]} \indicator\lrl{ S_{k,i} > Q_{\widehat{\alpha}^{\mathrm{e}}_k}(\{S_{k,i}\}_{i\in[n]} \cup \{\infty\}) } + W \\
		&\leq \sum_{k\in[K]} \frac{1}{n} \sum_{i\in[n]} \indicator\lrl{ S_{k,i} > Q_{\widehat{\alpha}^{\mathrm{e}}_k}(\{S_{k,i}\}_{i\in[n]}) } + W \\
		&\leq \sum_{k\in[K]} \widehat{\alpha}_k^{\mathrm{e}} + W = \alpha + W.
	\end{align*}
	Taking expectations, we obtain
	\begin{equation}\label{eq:original_cov_bound}
		\P\lrl{ Y_{n+1} \notin \widehat{C}^{\mathrm{e}}(\mX_{n+1};\alpha) } \leq \alpha + \E[W].
	\end{equation}
	
	For any  $t \geq 0$, we have
	\begin{equation}
		\begin{aligned}\label{eq:bound_on_W}
			\E[W] & = \E[W \indicator\{W > t\}] + \E[W \indicator\{W \leq t\}]\\
			& \leq \Pr(W>t) + t\\
			& \leq K(n+1) \exp(-2nt^2) + t,
		\end{aligned}
	\end{equation}
	where the first inequality uses the bound $W \leq 1$ and the second follows from Lemma 4.1 in \cite{naaman2021tight}. Choosing $t = \sqrt{\log(nK)/n}$ gives
	\[
	\E[W] = O\lrs{ \sqrt{\frac{\log(nK)}{n}} }.
	\]
	We conclude the proof by substituting the above result into \eqref{eq:original_cov_bound}:
	\[
	\P\lrl{ Y_{n+1} \in \widehat{C}^{\mathrm{e}}(\mX_{n+1};\alpha) }
	\geq 1 - \alpha - O\lrs{ \sqrt{\frac{\log (Kn)}{n}} }.
	\] 
\end{proof}

\subsection{Proof of Theorem \texorpdfstring{\ref{thm:asymptotic_opt_cola_e}}{2} } \label{proof:eff_optimal}
The following two lemmas are required for the proof of Theorem~\ref{thm:asymptotic_opt_cola_e}, with their proofs given at the end of this section.
\begin{lemma}\label{lemma:quantile}    
	Let $x_1,\dots,x_n$ be i.i.d.\ samples from a distribution with continuous CDF and quantile function $q(\alpha)$. 
	Assume $q(\alpha)$ is $L_q$-Lipschitz on $[a,b] \subset [0,1]$, where $a,b$ are constants. 
	Then, for any $\delta \in (0,1)$, with probability at least $1-\delta$,
	\begin{align*}
		\sup_{\alpha \in \Big[a + \sqrt{\frac{\log(2/\delta)}{2n}}, \frac{n}{n+1}\Big(b - \sqrt{\frac{\log(2/\delta)}{2n}} \Big)  \Big]} \abs{\widehat{q}^+(\alpha) - q(\alpha)} & \leq L_q  \sqrt{\frac{\log(2/\delta)}{2n}} + \frac{L_q}{n},
	\end{align*}
	where $\widehat{q}^+(\alpha)$ is the empirical quantile function based on the augmented sample $\{x_1,\dots,x_n,\infty\}$.
\end{lemma}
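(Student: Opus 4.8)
The plan is to prove Lemma~\ref{lemma:quantile} in two stages: first control the uniform distance between the empirical and population cumulative distribution functions (CDFs) by the Dvoretzky--Kiefer--Wolfowitz (DKW) inequality, and then convert this CDF-level control into quantile-level control by exploiting the $L_q$-Lipschitz continuity of $q$. The key structural observation is that the augmented sample $\{x_1,\dots,x_n,\infty\}$ has, for every finite $t$, empirical CDF $\widehat{F}_n^+(t) = \tfrac{n}{n+1}\widehat{F}_n(t)$, where $\widehat{F}_n$ is the ordinary empirical CDF of $x_1,\dots,x_n$. Consequently the augmented empirical quantile is a \emph{level-shifted} ordinary empirical quantile, $\widehat{q}^+(\alpha) = \widehat{F}_n^{-1}\big(\tfrac{n+1}{n}(1-\alpha)\big)$. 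This identity isolates a deterministic level shift of size $\tfrac{1-\alpha}{n}\le \tfrac1n$, which is what eventually produces the additive $L_q/n$ term.

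First I would invoke Massart's sharp form of the DKW inequality, $\P\big(\sup_t |\widehat{F}_n(t)-F(t)| > \epsilon\big)\le 2\exp(-2n\epsilon^2)$, and set $\epsilon = \sqrt{\log(2/\delta)/(2n)}$ so that the event $E = \{\sup_t|\widehat{F}_n(t)-F(t)|\le\epsilon\}$ has probability at least $1-\delta$. Continuity of $F$ (hypothesized) guarantees that $F$ attains every intermediate value, so the population quantile $q$ acts as an exact right-inverse of $F$ at the relevant levels; this is precisely what makes the sandwich below tight even when $F^{-1}$ must be read as a left-continuous generalized inverse.

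Next, working on $E$, I would establish a two-sided sandwich at the shifted level $\tau':=\tfrac{n+1}{n}(1-\alpha)$. Writing $Q(u)=F^{-1}(u)$, the upper bound follows since $\widehat{F}_n(Q(\tau'+\epsilon))\ge F(Q(\tau'+\epsilon))-\epsilon = \tau'$, so $\widehat{q}^+(\alpha)=\widehat{F}_n^{-1}(\tau')\le Q(\tau'+\epsilon)$; symmetrically $\widehat{q}^+(\alpha)\ge Q(\tau'-\epsilon)$. Thus $Q(\tau'-\epsilon)\le\widehat{q}^+(\alpha)\le Q(\tau'+\epsilon)$, while the target is $q(\alpha)=Q(1-\alpha)$. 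Since $|\tau'-(1-\alpha)|=\tfrac{1-\alpha}{n}$, Lipschitz continuity and the triangle inequality give $|\widehat{q}^+(\alpha)-q(\alpha)|\le L_q\big(\tfrac{1-\alpha}{n}+\epsilon\big)\le L_q\epsilon + L_q/n$, which is exactly the claimed bound. The restriction $\alpha\in[a+\epsilon,\ \tfrac{n}{n+1}(b-\epsilon)]$ is imposed so that the perturbed evaluation levels $\tau'-\epsilon$, $1-\alpha$, $\tau'+\epsilon$ remain in the window on which $q$ is Lipschitz, with the factor $\tfrac{n}{n+1}$ accommodating the augmentation shift at the upper endpoint.

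I expect the main obstacle to be the careful boundary accounting: one must verify that the perturbed levels never leave the Lipschitz interval $[a,b]$, and here the augmentation introduces an $O(1/n)$ overshoot at the lower endpoint (the level $\tau'+\epsilon$ corresponds to an $\alpha$-value of roughly $a - (1-a)/n$ at the extreme $\alpha=a+\epsilon$), which must be absorbed by the built-in $O(1/n)$ slack of the Lipschitz window rather than ignored. The companion subtlety is using continuity of $F$ correctly to pass from the uniform CDF bound on $E$ to the exact quantile sandwich even across flat stretches of $F$. Once these two points are handled, the remaining manipulations are routine.
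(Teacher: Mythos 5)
Your overall strategy is the same as the paper's: Massart's form of the DKW inequality with $\epsilon=\sqrt{\log(2/\delta)/(2n)}$, a two-sided sandwich $Q(\cdot-\epsilon)\le\widehat F_n^{-1}(\cdot)\le Q(\cdot+\epsilon)$ obtained from the uniform CDF bound together with continuity of $F$, conversion to a quantile bound via the Lipschitz hypothesis, and treatment of the $\infty$-augmentation as a deterministic level shift of size at most $1/n$ that produces the extra $L_q/n$ term. All of those ingredients match the paper's proof and are individually sound.

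The gap is in the parametrization, and it is exactly the boundary issue you flagged but did not actually resolve. You read $q(\alpha)$ as the $(1-\alpha)$-quantile, $q(\alpha)=Q(1-\alpha)$, so your shifted level is $\tau'=\tfrac{n+1}{n}(1-\alpha)$, and at the left end of the supremum range ($\alpha=a+\epsilon$) the perturbed level $\tau'+\epsilon$ exceeds $1-a$ by $(1-a-\epsilon)/n>0$, i.e.\ it leaves the interval on which the quantile function is assumed Lipschitz. The lemma's hypotheses provide no slack beyond $[a,b]$: the quantile function could jump immediately outside that window, in which case the upper half of your sandwich, $\widehat q^+(\alpha)\le Q(\tau'+\epsilon)$, yields no usable bound and the inequality $|Q(\tau'+\epsilon)-Q(1-\alpha)|\le L_q\big(\epsilon+\tfrac{1-\alpha}{n}\big)$ is unjustified. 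The ``built-in $O(1/n)$ slack'' you invoke exists only in the downstream application (through the margin $\xi$ in Assumption~\ref{ass:marginal_ass}(b)), not in the lemma itself. The intended convention --- made explicit in the paper's proof by the identity $F(q(\alpha))=\alpha$ --- is that $q$ is the $\alpha$-quantile and $\widehat q^+(\alpha)=\widehat q\big(\tfrac{n+1}{n}\alpha\big)$; then the augmentation shifts the level \emph{upward toward $b$}, the factor $\tfrac{n}{n+1}$ on the upper endpoint of the supremum range is precisely what keeps $\tfrac{n+1}{n}\alpha\pm\epsilon$ inside $[a,b]$, the lower endpoint needs no adjustment since $\tfrac{n+1}{n}\alpha\ge\alpha\ge a+\epsilon$, and no overshoot occurs anywhere. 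With that single change of convention your argument closes and coincides with the paper's proof.
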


\begin{lemma}\label{lemma:sym_dif}
	(i) Let $A_k = [a_{k,l},a_{k,r}]$ and $B_k = [b_{k,l},b_{k,r}]$ be intervals for $k \in [K]$. 
	If $\abs{A_k \triangle B_k} \le c$ for all $k$, then
	\[
	\abs{\cap_{k=1}^K A_k \ \triangle \ \cap_{k=1}^K B_k} \le 2c.
	\]
	
	(ii) Let $A_k = \bigcup_{i=1}^{m_{a,k}} [a_{k,i,l},a_{k,i,r}]$ and 
	$B_k = \bigcup_{i=1}^{m_{b,k}} [b_{k,i,l},b_{k,i,r}]$ be finite unions of intervals for $k \in [K]$. 
	If $\abs{A_k \triangle B_k} \le c$ for all $k$, then
	\[
	\abs{\cap_{k=1}^K A_k \ \triangle \ \cap_{k=1}^K B_k} \le 4m c,
	\]
	where $m = \max_{k \in [K]} \{m_{a,k}, m_{b,k}\}$.
\end{lemma}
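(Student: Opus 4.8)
The plan is to start from the elementary set-theoretic containment
$\cap_{k=1}^K A_k \ \triangle\ \cap_{k=1}^K B_k \subseteq \bigcup_{k=1}^K (A_k \triangle B_k)$, which follows because any point in the left-hand set lies in $A_j\setminus B_j$ or $B_j\setminus A_j$ for some index $j$ witnessing the failure of membership. This alone yields only the crude bound $K c$; the entire content of the lemma is to replace the factor $K$ by the constant $2$ (part (i)) or by $4m$ (part (ii)) using that each $A_k,B_k$ is an interval, respectively a union of at most $m$ intervals. So the real work is a measure-theoretic refinement in which only a bounded number of the constituent symmetric differences are charged.

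For part (i), first I would use that an intersection of intervals is again an interval, $\cap_k A_k = [u,v]$ with $u=\max_k a_{k,l}$ and $v=\min_k a_{k,r}$ (empty when $u>v$), and similarly $\cap_k B_k=[u',v']$, together with the elementary bound $\abs{\cap_k A_k\triangle\cap_k B_k}\le \abs{u-u'}+\abs{v-v'}$ in the overlapping case. I would then split into cases. If $[u,v]$ and $[u',v']$ overlap, take the left-binding index $p$ with $u=a_{p,l}$ (assuming $u\ge u'$); since overlap forces $u\in\cap_k B_k\subseteq B_p$, the segment $[b_{p,l},u)$ lies in $B_p\setminus A_p$ and has length at least $u-u'$, so $\abs{u-u'}\le\abs{A_p\triangle B_p}\le c$, and symmetrically $\abs{v-v'}\le c$, giving $2c$. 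If the two intersections are disjoint (or one is empty), the naive endpoint bound is useless, so instead I would show directly that $\abs{\cap_k A_k}\le c$ and $\abs{\cap_k B_k}\le c$: the right-binding index of $A$ produces a $B_k$ that straddles the gap and whose $A_k\triangle B_k$ contains all of $\cap_k B_k$, and symmetrically for the other side, whence $\abs{\cap_k A_k\triangle\cap_k B_k}=\abs{\cap_k A_k}+\abs{\cap_k B_k}\le 2c$.

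For part (ii), I would generalize this witness/charging argument. Now $\cap_k A_k$ and $\cap_k B_k$ are unions of intervals whose left and right boundaries are themselves left, respectively right, endpoints of the constituent subintervals of some binding $A_k$ or $B_k$. I would bound $\abs{\cap_k A_k\setminus\cap_k B_k}$ by charging each point $x$ (which satisfies $x\in A_j\setminus B_j$ for the index $j$ governing the gap of $\cap_k B_k$ in which $x$ lies) to that constituent's symmetric difference, and then control the covering multiplicity: because each $B_k$ contributes at most $m$ left-endpoints and $m$ right-endpoints, the pieces of $\cap_k A_k\setminus\cap_k B_k$ are covered by at most $2m$ distinct sets $A_k\triangle B_k$, giving $\abs{\cap_k A_k\setminus\cap_k B_k}\le 2mc$; the symmetric treatment of $\cap_k B_k\setminus\cap_k A_k$ yields another $2mc$, for $4mc$ in total.

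The hard part will be two things. In part (i) it is the disjoint/degenerate regime: the endpoint bound $\abs{u-u'}+\abs{v-v'}$ is arbitrarily loose when the two intersections are far apart, so one must abandon it and argue instead that each intersection is individually short by exhibiting the correct straddling witness index—and the empty-intersection case (where the symmetric difference equals the other, nonempty intersection) must be checked separately. In part (ii) the crux is the bookkeeping that keeps the covering multiplicity at $O(m)$ rather than $O(K)$: I must argue that only the at most $m$ boundary-contributing subintervals of each set are ever charged, decoupling the bound from the number of scores $K$. Making this charging low-multiplicity, and in particular injective enough to avoid double counting across the many gaps of $\cap_k B_k$, is where the careful argument lies.
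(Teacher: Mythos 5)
Your treatment of part (i) is correct and takes a mildly different route from the paper's: the paper first disposes of the case where some $A_j\cap B_j=\varnothing$ (there $\abs{\cap_k A_k\triangle\cap_k B_k}\le\abs{A_j}+\abs{B_j}=\abs{A_j\triangle B_j}\le c$) and otherwise converts $\abs{A_k\triangle B_k}\le c$ into endpoint bounds $\abs{a_{k,l}-b_{k,l}},\abs{a_{k,r}-b_{k,r}}\le c$ that pass to the $\max/\min$ defining the intersections, whereas you case on whether the two \emph{intersections} overlap and use a straddling-witness index in the disjoint regime. Both work; just be sure to actually write out the empty-intersection subcase (if $\cap_k A_k=\varnothing$, pick $p,q$ with $a_{p,l}>a_{q,r}$ and observe $\cap_k B_k\subseteq(B_p\setminus A_p)\cup(B_q\setminus A_q)$, giving $2c$).

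Part (ii) contains a genuine gap: the covering-multiplicity claim is false, because the number of gaps of $\cap_k B_k$ intruding into $\cap_k A_k$ is not controlled by $m$ alone---it can grow linearly in $K$, with each gap charging a \emph{different} index $k$. Concretely, take $A_k=[0,1]$ for all $k$ and $B_k=[0,x_k]\cup[x_k+c,1]$ with $x_k=(k-1)/K$ and $c=1/(2K)$. Then $m=2$ and $\abs{A_k\triangle B_k}=c$ for every $k$, yet $\cap_k A_k\setminus\cap_k B_k=\cup_{k}(x_k,x_k+c)$ has measure $Kc$, which exceeds $4mc=8c$ as soon as $K>8$; the sets $A_k\triangle B_k$ needed to cover it number $K$, not $2m$, so no bookkeeping can salvage an $O(m)$ multiplicity. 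You should know that the paper's own proof stumbles on essentially the same point: it asserts that $\cap_k A_k$ is a union of at most $\max_k m_{a,k}$ intervals (false---e.g.\ $[0,10]$, $[0,1]\cup[2,10]$, $[0,3]\cup[4,10]$ intersect in three components with $\max_k m_{a,k}=2$), and the example above shows the stated inequality itself fails, even for nested $B_k\subseteq A_k$ of the sublevel-set form used in Theorem~\ref{thm:asymptotic_opt_cola_e}. So this is not a repairable slip in your write-up: any correct version of part (ii) needs a strengthened hypothesis (e.g.\ a bound on the number of components of the intersections themselves) or a weaker, $K$-dependent conclusion.
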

\begin{proof}[Proof of Theorem~\ref{thm:asymptotic_opt_cola_e}]
	We first bound $\Lcal(\widehat{\alphabold}^\mathrm{e}) - \Lcal(\alphabold^*)$. Observe that
	\[
	0 \le \Lcal(\widehat{\alphabold}^\mathrm{e}) - \Lcal(\alphabold^*) 
	\le \Lcal(\widehat{\alphabold}^\mathrm{e}) - \Lcal_n(\widehat{\alphabold}^\mathrm{e}) + \Lcal_n(\alphabold^*) - \Lcal(\alphabold^*) 
	\le 2 \sup_{\alphabold} \abs{\Lcal_n(\alphabold) - \Lcal(\alphabold)},
	\]
	where the first inequality uses the optimality of $\alphabold^*$ and the second uses that of $\widehat{\alphabold}^\mathrm{e}$. Decompose
	\begin{align*}
		\sup_{\alphabold} \abs{\Lcal_n(\alphabold) - \Lcal(\alphabold)}& =\sup_{\alphabold} \Bigabs{ \frac{1}{n} \sum_{i \in [n]} | \cap_{k = 1}^K \widehat{C}_k (\mX_{i};\alpha_k) |-\E_{\mX_{n+1}}\big[|\cap_{k=1}^KC_k(\mX_{n+1};\alpha_k)|\big]}\\
		&\le \underbrace{\sup_{\alphabold,\vx}\Bigabs{\cap_{k=1}^K \widehat{C}_k(\vx;\alpha_k) \triangle \cap_{k=1}^K C_k(\vx;\alpha_k)}}_{\Pi_1} \\
		&\quad + \underbrace{\sup_{\alphabold} \Bigabs{\frac{1}{n}\sum_i \abs{\cap_{k=1}^K C_k(\mX_i;\alpha_k)} - \E|\cap_{k=1}^K C_k(\mX_{n+1};\alpha_k)|}}_{\Pi_2}.
	\end{align*}
	
	We begin by bounding the term $\Pi_1$. For brevity, we denote $\widehat{q}^+_{k,\alpha} = Q_\alpha(\{S_{k,i}\}_{i \in [n]} \cup \{\infty\})$. If
	\[
	[1-\alpha, 1-l_\alpha] \subset \Big[1-\alpha-\xi + \sqrt{\frac{\log(2/\delta)}{2n}}, \frac{n}{n+1}\big(1-l_\alpha+\xi - \sqrt{\frac{\log(2/\delta)}{2n}}\big) \Big],
	\]
	Lemma~\ref{lemma:quantile} gives
	\[
	\sup_{\beta \in [l_\alpha,\alpha]} \abs{q_{k,\beta} - \widehat{q}^+_{k,\beta}} \le L_q \sqrt{\frac{\log(2/\delta)}{2n}} + \frac{L_q}{n},
	\]
	with probability at least $1-\delta$. A union bound over $k \in [K]$ yields
	\[
	\sup_{k \in [K]} \sup_{\beta \in [l_\alpha,\alpha]} \abs{q_{k,\beta} - \widehat{q}^+_{k,\beta}} \le L_q \sqrt{\frac{\log(2K/\delta)}{2n}} + \frac{L_q}{n},
	\]
	provided $\xi \ge \sqrt{\log(2K/\delta)/(2n)} + (2-l_\alpha)/n$. Using Assumption~\ref{ass:common_ass}(b) and the definitions of $C_k$ and $\widehat{C}_k$, we have
	\begin{equation}\label{eq:individual_error_simplified}
		\sup_{k \in [K]} \sup_{\beta \in [l_\alpha,\alpha]} \sup_{\vx} \abs{C_k(\vx;\beta) \triangle \widehat{C}_k(\vx;\beta)} 
		\le L_s \sup_{k,\beta} \abs{q_{k,\beta} - \widehat{q}^+_{k,\beta}} 
		\le L_q L_s \sqrt{\frac{\log(2K/\delta)}{2n}} + \frac{L_q L_s}{n}.
	\end{equation}
	For any $\alphabold \in \boldsymbol{\Theta}$, define $\alpha_k' = \max(\alpha_k, l_\alpha)$. Then, by Assumption~\ref{ass:marginal_ass}(c),
	\[
	\cap_{k=1}^K C_k(\vx;\alpha_k) \triangle \cap_{k=1}^K \widehat{C}_k(\vx;\alpha_k)
	= \cap_{k=1}^K C_k(\vx;\alpha_k') \triangle \cap_{k=1}^K \widehat{C}_k(\vx;\alpha_k').
	\]
	Applying Lemma~\ref{lemma:sym_dif} and \refeq{eq:individual_error_simplified} then gives
	\begin{equation}\label{eq:step_1_2_link}
		\Pi_1 = \sup_{\alphabold,\vx} \abs{\cap_{k=1}^K C_k(\vx;\alpha_k) \triangle \cap_{k=1}^K \widehat{C}_k(\vx;\alpha_k)}
		= O\Big(m L_q L_s \sqrt{\frac{\log(2K/\delta)}{2n}}\Big),
	\end{equation}
	with probability at least $1-\delta$.
	
	Next, we bound $\Pi_2$. Define $g_{\alphabold}(\vx) = \abs{\cap_{k=1}^K C_k(\vx;\alpha_k)}$ and $ \Gcal = \{g_{\alphabold} : \alphabold \in \boldsymbol{\Theta}\}$. For any $\alphabold_j = (\alpha_{j,k})_{k \in [K]} \in \boldsymbol{\Theta}$ ($j=1,2$), define $\alpha_{j,k}' = \max(\alpha_{j,k}, l_\alpha)$. Then $g_{\alphabold}(\vx)$ is Lipschitz in $\alphabold$:
	\begin{align*}
		\abs{g_{\alphabold_1}(\vx) - g_{\alphabold_2}(\vx)} 
		&= \Bigabs{\abs{\cap_{k=1}^K C_k(\vx;\alpha_{1,k}')} - \abs{\cap_{k=1}^K C_k(\vx;\alpha_{2,k}')}} \\
		&\le \Bigabs{\cap_{k=1}^K C_k(\vx;\alpha_{1,k}') \triangle \cap_{k=1}^K C_k(\vx;\alpha_{2,k}')} \\
		&\le \sum_{k\in[K]} \abs{C_k(\vx;\alpha_{1,k}') \triangle C_k(\vx;\alpha_{2,k}')} \\
		&\le L_s \sum_{k\in[K]} \abs{q_{k,\alpha_{1,k}'} - q_{k,\alpha_{2,k}'}} \le L_s L_q \norm{\alphabold_1 - \alphabold_2}_1,
	\end{align*}
	where the equality follows from Assumption~\ref{ass:marginal_ass}(c), the inequalities follow from Assumptions~\ref{ass:common_ass}(b), \ref{ass:marginal_ass}(b).  
	
	Construct an $\epsilon$-net $\Ncal_\epsilon$ of $\boldsymbol{\Theta}$ under $\ell_1$-norm with $\epsilon = \frac{M}{L_q L_s} \sqrt{\frac{K \log(n)}{2n}}$.
	Then $\abs{\Ncal_\epsilon} = O(1/\epsilon^K)$. For fixed $\alphabold \in \Ncal_\epsilon$, since $\abs{g_{\alphabold}(\vx) - g_{\alphabold}(\vx')} \le M$, McDiarmid's inequality gives
	\[
	\P\Big(\Bigabs{\frac{1}{n} \sum_i g_{\alphabold}(\mX_i) - \E[g_{\alphabold}(\mX)]} > u \Big) \le \exp\Big(-\frac{2 n u^2}{M^2}\Big).
	\] 
	A union bound over $\alphabold \in \Ncal_\epsilon$ yields
	\[
	\P\Big(\sup_{\alphabold \in \Ncal_\epsilon} \Bigabs{\frac{1}{n} \sum_i g_{\alphabold}(\mX_i) - \E[g_{\alphabold}(\mX)]} > u \Big) \le O\Big(\frac{1}{\epsilon^K} \exp\Big(-\frac{2 n u^2}{M^2}\Big)\Big),
	\]
	which implies that, with probability at least $1-\delta$,
	\[
	\sup_{\alphabold \in \Ncal_\epsilon} \Bigabs{\frac{1}{n} \sum_i g_{\alphabold}(\mX_i) - \E[g_{\alphabold}(\mX)]} = O\Big(M \sqrt{\frac{K \log(n) + \log(2/\delta)}{2n}}\Big).
	\]
	For any $\betabold \in \boldsymbol{\Theta}$, choose $\alphabold \in \Ncal_\epsilon$ with $\norm{\betabold - \alphabold}_1 \le \epsilon$. Then
	\begin{align*}
		& \Bigabs{\frac{1}{n} \sum_i g_{\betabold}(\mX_i) - \E[g_{\betabold}(\mX)]} - \Bigabs{\frac{1}{n} \sum_i g_{\alphabold}(\mX_i) - \E[g_{\alphabold}(\mX)]} \\
		&\le \frac{1}{n} \sum_i \abs{g_{\betabold}(\mX_i) - g_{\alphabold}(\mX_i)} + \abs{\E[g_{\betabold}(\mX) - g_{\alphabold}(\mX)]} \le 2 L_s L_q \epsilon.
	\end{align*}
	Hence, with probability at least $1-\delta$,
	\[
	\sup_{\alphabold \in \boldsymbol{\Theta}} \Bigabs{\frac{1}{n} \sum_i g_{\alphabold}(\mX_i) - \E[g_{\alphabold}(\mX)]} = O\Big(M \sqrt{\frac{K \log(n) + \log(2/\delta)}{2n}}\Big),
	\]
	which gives
	\[
	\Pi_2 = \sup_{\alphabold} \Bigabs{\frac{1}{n} \sum_{i\in[n]} \abs{\cap_{k=1}^K C_k(\mX_i;\alpha_k)} - \E \abs{\cap_{k=1}^K C_k(\mX_{n+1};\alpha_k)}} = \Op\Big(M \sqrt{\frac{K \log(n)}{n}}\Big).
	\]
	
	Combining $\Pi_1$ and $\Pi_2$ yields
	\begin{align}\label{eq:first_result}
		\Lcal(\widehat{\alphabold}^\mathrm{e}) - \Lcal(\alphabold^*) = \Op\left( m L_q L_s \sqrt{\frac{\log(K)}{n}}  + M \sqrt{\frac{K \log(n)}{n}}\right).
	\end{align}    
	
	We conclude the proof by
	\begin{align*}
		\E_{\mX_{n+1}} [\abs{\widehat{C}^{\mathrm{e}}(\mX_{n+1};\alpha)}] - \Lcal(\alphabold^*) & = \E_{\mX_{n+1}} \Big[ \Big| \cap_{k = 1}^K \widehat{C}_k (\mX_{n+1};\widehat{\alpha}_{k}^\mathrm{e}) \Big|-\Big|\cap_{k=1}^KC_k(\mX_{n+1};\alpha_k^*)\Big| \Big]\\
		& = \E_{\mX_{n+1}} \Big[ \Big| \cap_{k = 1}^K \widehat{C}_k (\mX_{n+1};\widehat{\alpha}_{k}^\mathrm{e}) \Big|-\Big|\cap_{k=1}^KC_k(\mX_{n+1};\widehat{\alpha}_{k}^\mathrm{e})\Big| \Big]\\
		& ~~~~ + \E_{\mX_{n+1}} \Big[ \Big|\cap_{k=1}^KC_k(\mX_{n+1};\widehat{\alpha}_{k}^\mathrm{e})\Big|-\Big|\cap_{k=1}^KC_k(\mX_{n+1};\alpha_k^*)\Big| \Big]\\
		& \leq \Pi_1 + \Lcal(\widehat{\alphabold}^{\mathrm{e}}) - \Lcal(\alphabold^*)\\
		& =  \Op\left( m L_q L_s \sqrt{\frac{\log(K)}{n}}  + M \sqrt{\frac{K \log(n)}{n}}\right),
	\end{align*}
	where the last equality follows from \refeq{eq:step_1_2_link} and \refeq{eq:first_result}.
\end{proof}

\begin{proof}[Proof of Lemma~\ref{lemma:quantile}]
	Using the Dvoretzky–Kiefer–Wolfowitz (DKW) inequality, we have
	\[\P\Big( \sup_{\alpha \in [0,1]} \Bigabs{\frac{1}{n} \sum_{i \in[n]} \indicator\{x_i \leq q(\alpha) \} - \alpha} \leq \epsilon \Big) \geq 1 - 2 \exp(-2 n \epsilon^2),\]
	where the continuity of the CDF $F$ ensures $F(q(\alpha)) = \alpha$. Setting $\epsilon = \sqrt{\log(2/\delta)/(2n)}$ yields that the event occurs with probability at least $1 - \delta$.
	Conditioning on this event, we obtain
	$$\frac{1}{n} \sum_{i \in[n]} \indicator\{x_i \leq q(\alpha) \} \leq \alpha +\epsilon.$$ 
	Let $\widehat{q}(\alpha)$ be the quantile function of $\{x_i\}_{i \in [n]}$, it follows that
	\[\widehat{q}(\alpha + \epsilon) \geq q(\alpha) \text{ for } \alpha \in [0,1 - \epsilon] ~~~~\Rightarrow~~~~ \widehat{q}(\alpha ) \geq q(\alpha - \epsilon) \text{ for } \alpha \in [\epsilon,1]. \]
	Similarly, we have
	\[\widehat{q}(\alpha ) \leq q(\alpha + \epsilon) \text{ for } \alpha \in [0,1-\epsilon].\]
	Combining both bounds gives, for $\alpha \in [\epsilon, 1 - \epsilon]$,
	\begin{align}\label{eq_quantile}
		q(\alpha - \epsilon) \leq \widehat{q}(\alpha) \leq q(\alpha + \epsilon).
	\end{align}
	
	For any $\alpha \in [a + \epsilon, b - \epsilon] \subset [\epsilon, 1 - \epsilon]$, the Lipschitz continuity of $q$ over $[a,b]$ gives
	\begin{align*}
		\max\big(\abs{q(\alpha) - q(\alpha + \epsilon)},\abs{q(\alpha) - q(\alpha - \epsilon)}\big) \leq L_q \epsilon.
	\end{align*}
	Combining with \refeq{eq_quantile}, we conclude that
	\[\abs{\widehat{q}(\alpha) - q(\alpha)} \leq L_q \epsilon,\]
	which holds uniformly over $\alpha \in [a + \epsilon, b - \epsilon]$. Noting that $\widehat{q}^+(\alpha) = \widehat{q}(\frac{n+1}{n}\alpha)$ for $\alpha \leq n/(n+1)$, we obtain the following bound, valid uniformly over $\alpha \in [a + \epsilon, \frac{n}{n+1}(b - \epsilon)  ]$:	
	\begin{align*}
		\widehat{q}^+(\alpha) - q(\alpha) & = \widehat{q}\Big(\frac{n+1}{n}\alpha\Big) - q(\alpha)\\
		& = \widehat{q}\Big(\frac{n+1}{n}\alpha\Big) - q\Big(\frac{n+1}{n}\alpha\Big) + q\Big(\frac{n+1}{n}\alpha\Big) - q(\alpha)\\
		& \leq \sup_{\alpha \in [a + \epsilon, b - \epsilon]} \abs{\widehat{q}(\alpha) - q(\alpha)} + q\Big(\frac{n+1}{n}\alpha\Big) - q(\alpha)\\
		& \leq L_q  \sqrt{\frac{\log(2/\delta)}{2n}} + \frac{L_q}{n},
	\end{align*}
	which concludes the proof.
\end{proof}

\begin{proof}[Proof of Lemma~\ref{lemma:sym_dif}]
	With loss of generality, we define $[a,b] = \varnothing$ if $a > b$.
	
	(i) 
	If there exists a $j$ such that $A_j \cap B_j = \varnothing$, then 
	$$\abs{\{\cap_{k = 1 }^K A_k \} \triangle\{\cap_{k = 1 }^K B_k \} } \leq \abs{A_j} + \abs{B_j} = \abs{A_j \triangle B_j} \leq c$$
	and the result follows immediately. Therefore, we assume $A_k \cap B_k \neq \varnothing$ for all $k$.
	
	Note that
	\begin{align*}
		\cap_{k = 1}^K A_k = [\max_k a_{k,l} , \min_k a_{k,r}],\ \cap_{k = 1}^K B_k = [\max_k b_{k,l} , \min_k b_{k,r}].
	\end{align*} 
	Since $\abs{A_k \triangle B_k} \leq c$ for $k\in\lrm{K}$, we have $\max_{k \in [K]}\max\{\abs{a_{k,l} - b_{k,l}} , \abs{a_{k,r} - b_{k,r}}\} \leq c$.
	If $\min_k a_{k,r} \geq \min_k b_{k,r}$, we have
	\[0 \leq \min_k a_{k,r} - \min_k b_{k,r} \leq \min_k a_{k,r} - \min_k \{a_{k,r} - c\} \leq c.\]
	If $\min_k a_{k,r} < \min_k b_{k,r}$, we have
	\[0 \leq \min_k b_{k,r} - \min_k a_{k,r} \leq \min_k \{a_{k,r} + c\} - \min_k a_{k,r} \leq c,\]
	which implies
	\[\abs{\min_k a_{k,r} - \min_k b_{k,r}} \leq c.\]
	Similarly, we can obtain
	\[\abs{\max_k a_{k,l} - \max_k b_{k,l}} \leq c.\]
	When both $\cap_{k = 1 }^K A_k\neq\varnothing$ and $\cap_{k = 1 }^K B_k\neq\varnothing$, the symmetric difference is bounded by
	\begin{align*}
		\abs{\{\cap_{k = 1 }^K A_k \} \triangle\{\cap_{k = 1 }^K B_k \} } \leq \abs{\max_k a_{k,l} - \max_k b_{k,l}} + \abs{\min_k a_{k,r} - \min_k b_{k,r}} \leq 2c.
	\end{align*} When $\cap_{k = 1 }^K A_k=\varnothing$ but $\cap_{k = 1 }^K B_k\neq\varnothing$, the bound becomes
	\begin{align*}
		\abs{\{\cap_{k = 1 }^K A_k \} \triangle\{\cap_{k = 1 }^K B_k \} }=\min_k b_{k,r} -\max_k b_{k,l} \leq \min_k b_{k,r} - \min_k a_{k,r} + \max_k a_{k,l} - \min_k b_{k,l} \leq 2c.
	\end{align*} The remaining cases can be handled similarly, which concludes the proof.
	
	(ii) By augmenting the unions with additional empty sets, or by decomposing longer intervals into multiple shorter subintervals, we can always ensure that there exist alternative representations for $A_k = \cup_{i = 1}^{m_{k}}[a_{k,i,l}, a_{k,i,r}]$ and $B_k = \cup_{i = 1}^{m_{k}}[b_{k,i,l}, b_{k,i,r}]$ such that $$\bigabs{[a_{k,i,l}, a_{k,i,r}] \triangle [b_{k,i,l}, b_{k,i,r}]} \leq c $$ for some $m_k \leq m_{a,k} +  m_{b,k}$.
	
	With loss of generality, assume $\cap_{k = 1}^K A_k = \cup_{j = 1}^m [a_{j,l}, a_{j,r}]$ for some $m \leq \max_k m_{a,k}$. For each $j$, there exists $i_{k,j} \leq m_k$ such that
	\[[a_{j,l},a_{j,r}] = \cap_{k = 1}^K [a_{k,i_{k,j},l}, a_{k,i_{k,j},r}].\]
	Let $[b_{j,l},b_{j,r}] = \cap_{k = 1}^K [b_{k,i_{k,j},l}, b_{k,i_{k,j},r}] \subset \cap_{k = 1}^K B_k$. From the first result (i), we have 
	$$\bigabs{[a_{j,l},a_{j,r}] \setminus [b_{j,l},b_{j,r}]} \leq \bigabs{[a_{j,l},a_{j,r}] \triangle [b_{j,l},b_{j,r}]} \leq 2c.$$
	This implies
	\begin{align*}
		\bigabs{\cap_{k = 1}^K A_k \setminus \cap_{k = 1}^K B_k} \leq \bigabs{\cup_{j = 1}^m [a_{j,l}, a_{j,r}] \setminus \cup_{j = 1}^m [b_{j,l}, b_{j,r}]} \leq 2 m c.
	\end{align*}
	Similarly, we have
	\begin{align*}
		\bigabs{\cap_{k = 1}^K B_k \setminus \cap_{k = 1}^K A_k} \leq 2 m c,
	\end{align*}
	which indicates
	\[\bigabs{\cap_{k = 1}^K A_k \triangle \cap_{k = 1}^K B_k} \leq 4 m c .\]
\end{proof}

\subsection{Proof of Theorem \texorpdfstring{\ref{thm:cola_s_coverage}}{3}}
\begin{proof}
	Marginalizing the following inequality over $\Dcal_{\mathrm{tu}}$ directly yields the desired result:
	\begin{align*}
		\P\Big(Y_{n+1} \in \cap_{k = 1}^K \widehat{C}_{k,\mathrm{cal}}(\mX_{n+1};\widehat{\alpha}_{k}^{\mathrm{s}}) \Big| \Dcal_{\mathrm{tu}} \Big) \geq 1 - \sum_k \P\Big(Y_{n+1} \notin \widehat{C}_k(\mX_{n+1};\widehat{\alpha}_{k}^{\mathrm{s}})  \Big| \Dcal_{\mathrm{tu}} \Big) \geq 1 - \alpha.
	\end{align*}
\end{proof}

\subsection{Proof of Corollary \texorpdfstring{\ref{thm:optimal_cola_s}}{1} } \label{proof:optimal_cola_s}

\begin{proof}[Proof of Corollary~\ref{thm:optimal_cola_s}]
	Observe that
	\begin{align*}
		\E_{\mX_{n+1}} \!\left[ \big|\widehat{C}^{\mathrm{s}}(\mX_{n+1};\alpha)\big| \right] - \Lcal(\alphabold^*)
		&= \E_{\mX_{n+1}}\!\left[ \Big|\!\cap_{k=1}^K \widehat{C}_{k,\mathrm{cal}} (\mX_{n+1};\widehat{\alpha}_{k}^\mathrm{s})\!\Big| - \Big|\!\cap_{k=1}^K C_k(\mX_{n+1};\alpha_k^*)\!\Big| \right] \\
		&= \E_{\mX_{n+1}}\!\left[ \Big|\!\cap_{k=1}^K \widehat{C}_{k,\mathrm{cal}} (\mX_{n+1};\widehat{\alpha}_{k}^\mathrm{s})\!\Big| - \Big|\!\cap_{k=1}^K C_k(\mX_{n+1};\widehat{\alpha}_{k}^\mathrm{s})\!\Big| \right] \\
		&~~~~ + \E_{\mX_{n+1}}\!\left[ \Big|\!\cap_{k=1}^K C_k(\mX_{n+1};\widehat{\alpha}_{k}^\mathrm{s})\!\Big| - \Big|\!\cap_{k=1}^K C_k(\mX_{n+1};\alpha_k^*)\!\Big| \right] \\
		&\le \Pi_1^{\mathrm{s}} + \Lcal(\widehat{\alphabold}^{\mathrm{s}}) - \Lcal(\alphabold^*),
	\end{align*}
	where
	\[
	\Pi_1^{\mathrm{s}} := \sup_{\alphabold, \vx} \Big| \{\cap_{k=1}^K \widehat{C}_{k,\mathrm{cal}}(\vx;\alpha_k)\} \triangle \{\cap_{k=1}^K C_k(\vx;\alpha_k)\} \Big|.
	\]
	By the argument in Theorem~\ref{thm:asymptotic_opt_cola_e} in Section~\ref{proof:eff_optimal} (with $\Dcal$ replaced by $\Dcal_{\mathrm{tu}}$),
	\[
	\Lcal(\widehat{\alphabold}^{\mathrm{s}}) - \Lcal(\alphabold^*) = \Op\!\left( m L_q L_s \sqrt{\frac{\log K}{n_{\mathrm{tu}}}} + M \sqrt{\frac{K \log n_{\mathrm{tu}}}{n_{\mathrm{tu}}}} \right).
	\]
	The term $\Pi_1^{\mathrm{s}}$ admits the same bound as $\Pi_1$ in~\eqref{eq:step_1_2_link}, namely
	\[
	\Pi_1^{\mathrm{s}} = O\!\left( m L_q L_s \sqrt{\frac{\log(2K/\delta)}{2n_{\mathrm{cal}}}} + \frac{m L_q L_s}{n_{\mathrm{cal}}} \right) \quad\text{ with probability at least}\  1-\delta.
	\]
	Hence,
	\[
	\E_{\mX_{n+1}} \!\left[ \big|\widehat{C}^{\mathrm{s}}(\mX_{n+1};\alpha)\big| \right] - \Lcal(\alphabold^*) 
	= \Op\!\left( m L_q L_s \sqrt{\frac{\log K}{n_{\mathrm{min}}}} + M \sqrt{\frac{K \log n_{\mathrm{min}}}{n_{\mathrm{min}}}} \right).
	\]
\end{proof}

\subsection{Proof of Theorem \texorpdfstring{\ref{thm:valid_full}}{4} } \label{proof:valid_full}
\begin{proof}[Proof of Theorem~\ref{thm:valid_full}]
	\begin{align*}
		\P\lrl{Y_{n+1}\in\widehat{C}^{\mathrm{f}}\lrs{\mX_{n+1}}} =&\P\lrl{Y_{n+1}\in\cap_{k=1}^K\widehat{C}_k^{(Y_{n+1})}\lrs{\mX_{n+1},\widehat{\alpha}_k^{(Y_{n+1})}}}\\
		\geq&1-\sum_{k\in[K]} \P\lrl{Y_{n+1}\notin\widehat{C}_k^{(Y_{n+1})}\lrs{\mX_{n+1},\widehat{\alpha}_k^{(Y_{n+1})}} }\\
		\geq& 1-\sum_{k\in[K]}\E\lrs{\widehat{\alpha}_k^{(Y_{n+1})}}
		\geq 1-\alpha,
	\end{align*}
	where the second-to-last inequality follows from\begin{align*}
		&\P\lrl{Y_{n+1}\notin\widehat{C}_k^{(Y_{n+1})}\lrs{\mX_{n+1},\widehat{\alpha}_k^{(Y_{n+1})}}}\\
		=&\P\lrl{S_k\lrs{\mX_{n+1},Y_{n+1}}> Q_{\widehat{\alpha}_k^{(Y_{n+1})}}(\{S_{k,i}\}_{i \in [n]} \cup \{S_k(\mX_{n+1},Y_{n+1})\})}\\
		=&\dfrac{1}{n+1}\sum_{i\in[n+1]}\P\lrl{S_k\lrs{\mX_{i},Y_{i}}> Q_{\widehat{\alpha}_k^{(Y_{n+1})}}(\{S_{k,i}\}_{i \in [n]} \cup \{S_k(\mX_{n+1},Y_{n+1})\})}\\
		=& \E\lrm{\dfrac{1}{n+1}\sum_{i\in[n+1]}\indicator\lrl{S_k\lrs{\mX_{i},Y_{i}}> Q_{\widehat{\alpha}_k^{(Y_{n+1})}}(\{S_{k,i}\}_{i \in [n]} \cup \{S_k(\mX_{n+1},Y_{n+1})\})}}\\
		\leq &\E\lrl{\dfrac{1}{n+1}\lrs{n+1-\Big\lceil\lrs{1-\widehat{\alpha}^{(Y_{n+1})}_k}\lrs{n+1}\Big\rceil}}\\
		\leq &\E\lrs{\widehat{\alpha}_k^{(Y_{n+1})}}
	\end{align*}
	where the second equality holds since $\widehat{\alpha}^{(Y_{n+1})}$ is a symmetric function of the $n+1$ data points $\lrs{\mX_1,Y_1},\dots,\lrs{\mX_{n+1},Y_{n+1}}$, and $\lrs{\mX_1,Y_1},\dots,\lrs{\mX_{n+1},Y_{n+1}}$ are exchangeable.
\end{proof}

\subsection{Proof of Theorem \texorpdfstring{\ref{thm:valid_conditional}}{5} }
The following lemma is needed for the proof of Theorem~\ref{thm:valid_conditional}. Proof of this lemma is provided at the end of this section. For simplicity, we denote $\widehat{q}_{k, \alpha}^{\mathrm{loc}} = Q_{\alpha} (\{ S_{k,i} \}_{i = 1}^n , \vw)$. For brevity, we denote $\widehat{\alphabold} = \widehat{\alphabold}(\mX_{n+1})$ and $\widehat{\alpha}_k = \widehat{\alpha}_k(\mX_{n+1})$ in the proofs of Theorems~\ref{thm:valid_conditional} and \ref{thm:optimal_conditional}. The Laplace kernel, given by $H(\vx,\vx')=\exp{\lrl{-\|\vx-\vx'\|/h_n}}$ is used in the proof; the argument for the Gaussian kernel follows analogously. 
\begin{lemma}\label{lemma:bound_on_weights}
	Let $\mathfrak{B} = \sum_{i\in[n]} H(\mX_i, \mX_{n+1})$. If the density function of $ \mX_{n+1} $ is lower bounded by $\rho$ and $\sqrt{\log(1/\delta)/(n \rho V_d h_n^d)} < 1 - c$ for some constant $0<c<1$, then conditional on $\mX_{n+1}$, the following events hold
	\begin{align*}
		\mathfrak{B}  \geq \frac{c n \rho V_d h_n^d}{e},~~~
		\sum_{i \in [n]} w_i^2  \leq \frac{e}{c n \rho V_d h_n^d},
	\end{align*}
	with probability at least $1-\delta$. Here, $V_d$ is the volume of the unit ball in $\real^d$.
\end{lemma}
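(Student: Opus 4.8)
The plan is to first reduce the second inequality to the first by a purely deterministic argument, and then to establish the lower bound on $\mathfrak{B}$ by combining a conditional-mean estimate with a one-sided concentration inequality.

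Write $H_i = H(\mX_i,\mX_{n+1})$, so that $w_i = H_i/\mathfrak{B}$. Since $0 \le H_i \le 1$ we have $H_i^2 \le H_i$, and therefore
\[
\sum_{i\in[n]} w_i^2 = \frac{\sum_{i\in[n]} H_i^2}{\mathfrak{B}^2} \le \frac{\sum_{i\in[n]} H_i}{\mathfrak{B}^2} = \frac{1}{\mathfrak{B}}.
\]
Hence on the event $\{\mathfrak{B} \ge c\, n \rho V_d h_n^d/e\}$ the second claim $\sum_i w_i^2 \le e/(c\, n\rho V_d h_n^d)$ holds automatically, so it suffices to prove the lower bound on $\mathfrak{B}$ with probability at least $1-\delta$.

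Conditioning on $\mX_{n+1} = \vx$, the variables $H_1,\dots,H_n$ are i.i.d.\ and take values in $[0,1]$, so $\mathfrak{B}$ is a sum of independent bounded variables. I would lower-bound its conditional mean $\mu := \E[\mathfrak{B}\mid \vx]$ by restricting the kernel integral to the ball $B(\vx,h_n)$, on which $\exp(-\norm{\vz-\vx}/h_n)\ge e^{-1}$; together with the density lower bound $\rho$ this gives
\[
\E[H_1 \mid \vx] \ge e^{-1}\, \P(\mX_1 \in B(\vx,h_n)) \ge e^{-1}\rho V_d h_n^d, \qquad \text{so} \qquad \mu \ge \frac{n\rho V_d h_n^d}{e}.
\]
I would then apply a multiplicative Chernoff (or Bernstein) lower-tail bound of the form $\P(\mathfrak{B} \le (1-\epsilon)\mu)\le \exp(-\epsilon^2\mu/2)$ with $\epsilon = 1-c$. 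Because $c\mu \ge c\, n\rho V_d h_n^d/e$ and $\mu \ge n\rho V_d h_n^d/e$, this yields
\[
\P\Big(\mathfrak{B} < \frac{c\, n\rho V_d h_n^d}{e}\Big) \le \P(\mathfrak{B} < c\mu) \le \exp\Big(-\tfrac{1}{2}(1-c)^2 \mu\Big) \le \exp\Big(-\tfrac{(1-c)^2 n\rho V_d h_n^d}{2e}\Big),
\]
which is controlled by $\delta$ precisely when $(1-c)^2 n\rho V_d h_n^d \gtrsim \log(1/\delta)$, i.e.\ under the stated condition $\sqrt{\log(1/\delta)/(n\rho V_d h_n^d)} < 1-c$. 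Combining this with the reduction of the first paragraph delivers both bounds simultaneously on one event of probability at least $1-\delta$.

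The main obstacle I anticipate is matching the \emph{exact} constants in the stated threshold: the scaling $(1-c)^2 n\rho V_d h_n^d$ produced by the Chernoff bound agrees with the condition only up to the absolute factor $1/(2e)$, so pinning down the precise constant $c/e$ likely requires a carefully chosen form of the concentration inequality (or a slight sharpening of the mean estimate). A secondary technical point is the lower bound $\P(\mX_1 \in B(\vx,h_n)) \ge \rho V_d h_n^d$, which is immediate when $B(\vx,h_n)\subseteq\Xcal$ but needs a constant-fraction ball-volume adjustment for test points $\vx$ within $h_n$ of the boundary of the support; since $h_n \to 0$, this affects only a vanishing region and alters the constant but not the rate.
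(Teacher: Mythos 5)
Your proposal is correct and follows essentially the same route as the paper: the second bound is reduced to the first via $H_i^2 \le H_i$ (so $\sum_i w_i^2 \le 1/\mathfrak{B}$), and the lower bound on $\mathfrak{B}$ comes from the density lower bound on the ball of radius $h_n$ about $\mX_{n+1}$, where the Laplace kernel is at least $e^{-1}$, combined with a multiplicative Chernoff bound. The one difference is order of operations: the paper applies Chernoff to the count $\sum_{i}\indicator\{\norm{\mX_i-\mX_{n+1}}<h_n\}$, whose conditional mean $np\ge n\rho V_d h_n^d$ carries no factor of $e$, and only afterwards passes deterministically to $\mathfrak{B}\ge e^{-1}\sum_i\indicator\{\cdot\}$ — this is precisely what avoids the extra $\sqrt{2e}$ in the threshold that you flag as your main obstacle, whereas folding the $e^{-1}$ into the mean before concentrating, as you do, inflates the required condition by that constant. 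Your boundary caveat about $\P(\mX_1\in B(\vx,h_n))\ge\rho V_d h_n^d$ applies equally to the paper's argument, which assumes it implicitly, and even the paper's own choice $t=\sqrt{\log(1/\delta)/(np)}$ in $\exp(-npt^2/2)$ only yields $\sqrt{\delta}$ rather than $\delta$, so the exact constants are not load-bearing here.
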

\label{proof:valid_conditional}
\begin{proof}[Proof of Theorem~\ref{thm:valid_conditional}]	
	First, we prove Part~(1).  
	Note that
	\begin{equation}\label{eq:conditional_valid_main}
		\begin{aligned}
			\P\!\left( Y_{n+1} \in \widehat{C}^{\mathrm{loc}}(\mX_{n+1}) \,\middle|\, \Dcal, \mX_{n+1} \right)
			&= \P\!\left( Y_{n+1} \in \cap_{k=1}^K \widehat{C}_{k}^{\mathrm{loc}}(\mX_{n+1}; \widehat{\alpha}_k) \,\middle|\, \Dcal, \mX_{n+1} \right) \\
			&\geq 1 - \sum_{k\in[K]} \P\!\left( S_k(\mX_{n+1}, Y_{n+1}) > \widehat{q}_{k,\widehat{\alpha}_k}^{\mathrm{loc}} \,\middle|\, \Dcal, \mX_{n+1} \right) \\
			&= 1 - \sum_{k\in[K]} \widehat{\alpha}_k - K \Xi = 1 - \alpha - K \Xi,
		\end{aligned}
	\end{equation}
	where
	\begin{equation}\label{eq:conditional_valid_error}
		\begin{aligned}
			\Xi := & \sup_{k \in [K]} \sup_{\alpha \in [0,1]} 
			\left[ \P\!\left( S_k(\mX_{n+1}, Y_{n+1}) > \widehat{q}_{k,\alpha}^{\mathrm{loc}} \,\middle|\, \Dcal, \mX_{n+1} \right) - \alpha \right]\\
			\leq\ & \sup_{k \in [K]} \sup_{\alpha \in [0,1]} 
			\left[ \P\!\left( S_k(\mX_{n+1}, Y_{n+1}) > \widehat{q}_{k,\alpha}^{\mathrm{loc}} \,\middle|\, \Dcal, \mX_{n+1} \right) 
			- \sum_{i\in[n]} w_i \,\indicator\!\left\{ S_k(\mX_i, Y_i) > \widehat{q}_{k,\alpha}^{\mathrm{loc}} \right\} \right]\\
			\leq\ & \sup_{k \in [K]} \sup_{\alpha \in [0,1]} 
			\left| \P\!\left( S_k(\mX_{n+1}, Y_{n+1}) \leq \widehat{q}_{k,\alpha}^{\mathrm{loc}} \,\middle|\, \Dcal, \mX_{n+1} \right) 
			- \sum_{i\in[n]} w_i \,\indicator\!\left\{ S_k(\mX_i, Y_i) \leq \widehat{q}_{k,\alpha}^{\mathrm{loc}} \right\} \right| \\
			\leq\ & \underbrace{\sup_{k \in [K]} \sup_{q \in \mathbb{R}} 
				\left| \P\!\left( S_k(\mX_{n+1}, Y_{n+1}) \leq q \,\middle|\, \Dcal, \mX_{n+1} \right)
				- \sum_{i\in[n]} w_i \,\P\!\left( S_k(\mX_i, Y_i) \leq q \,\middle|\, \mX_i \right) \right|}_{\Xi_1} \\
			&\quad + \underbrace{\sup_{k \in [K]} \sup_{q \in \mathbb{R}}
				\left| \sum_{i\in[n]} w_i \,\P\!\left( S_k(\mX_i, Y_i) \leq q \,\middle|\, \mX_i \right) 
				- \sum_{i\in[n]} w_i \,\indicator\!\left\{ S_k(\mX_i, Y_i) \leq q \right\} \right|}_{\Xi_2} ,\\
		\end{aligned}
	\end{equation}
	where the first inequality follows from the definition of the quantile.
	
	First, we bound the term $\Xi_1$. Under the assumption that $ \sup_{k \in [K]} \sup_{q \in \mathbb{R}} 
	\big| F_{k}(q|\vx) - F_{k}(q|\vx') \big|
	\leq \tau \,\|\vx - \vx'\|$, we have
	\begin{align*}
		&~~~~\bigg| 
		\P\!\left( S_k(\mX_{n+1}, Y_{n+1}) \leq q \,\middle|\, \Dcal, \mX_{n+1} \right)
		- \sum_{i\in[n]} w_i \P\!\left( S_k(\mX_i, Y_i) \leq q \,\middle|\, \mX_i \right)
		\bigg| \\
		&\leq \sum_{i\in[n]} w_i \,\tau \,\|\mX_i - \mX_{n+1}\| \\
		&= \mathfrak{B}^{-1} \sum_{j=1}^\infty \sum_{i\in[n]} 
		H(\mX_i, \mX_{n+1}) \,\tau \,\|\mX_i - \mX_{n+1}\| 
		\,\indicator\{\mX_i \in \Xcal_j\} \\
		&\leq \mathfrak{B}^{-1} \sum_{j=1}^\infty \sum_{i\in[n]} 
		e^{-j+1} \,\tau \,j h_n \,\indicator\{\mX_i \in \Xcal_j\} \\
		&= \mathfrak{B}^{-1} \sum_{j\in[j_0]} \sum_{i\in[n]} 
		e^{-j+1} \,\tau \,j h_n \,\indicator\{\mX_i \in \Xcal_j\} + \mathfrak{B}^{-1} \sum_{j=j_0+1}^\infty \sum_{i\in[n]} 
		e^{-j+1} \,\tau \,j h_n \,\indicator\{\mX_i \in \Xcal_j\} \\
		&\leq j_0 h_n \tau \,\mathfrak{B}^{-1} 
		\sum_{i\in[n]} \sum_{j\in[j_0]} e^{-j+1} \,\indicator\{\mX_i \in \Xcal_j\} + h_n \tau \,\mathfrak{B}^{-1} n \sum_{j=j_0+1}^\infty j e^{-j+1} \\
		&\leq e \,j_0 h_n \tau \,\mathfrak{B}^{-1} \mathfrak{B} 
		+ \Op\!\left( h_n \tau \,\mathfrak{B}^{-1} n \, j_0 e^{-j_0} \right) \\
		&= e\, j_0 h_n \tau 
		+ \Op\!\left( h_n \tau \,\mathfrak{B}^{-1} n \, j_0 e^{-j_0} \right),
	\end{align*}
	where $ \Xcal_j = \left\{ \vx : j-1 \leq \frac{\|\vx - \mX_{n+1}\|}{h_n} < j \right\} $, the second inequality uses  
	\[
	H(\mX_i, \mX_{n+1}) \indicator\{\mX_i \in \Xcal_j\} 
	\leq e^{-j+1} \indicator\{\mX_i \in \Xcal_j\}, 
	\quad
	\|\mX_i - \mX_{n+1}\| \indicator\{\mX_i \in \Xcal_j\} \leq j h_n,
	\]
	the second equality holds for any $j_0 \in \mathbb{Z}^+$,  
	and the last inequality follows from  
	$H(\mX_i, \mX_{n+1}) \indicator\{\mX_i \in \Xcal_j\} \geq e^{-j} \indicator\{\mX_i \in \Xcal_j\}$. Choosing $j_0 = O\big( \log(h_n^{-d}) \big)$ and using Lemma~\ref{lemma:bound_on_weights},  
	we obtain
	\begin{equation}\label{eq:conditional_valid_term1}
		\Xi_1 = O\!\left( \frac{\tau h_n \log(h_n^{-d})}{\rho} \right)
	\end{equation}
	with probability at least $1 - \delta$.
	
	Now, we proceed to bound the term $\Xi_2$. For simplicity, denote $F_{k,i} (q)= F_{k}(q|\mX_i)$. Under the assumption that $F_{k}(q|\vx)$ is continuous, we have for each $k$, 
	\begin{align*} 
		& \sup_{q \in \real} \bigabs{\sum_{i\in[n] } w_i \P\lrl{ S_{k,i} \leq  q | \mX_{i}} - \sum_{i\in[n]} w_i \indicator \{S_k(\mX_i, Y_i) \leq q\}}\\
		= & \sup_{q \in \real} \bigabs{\sum_{i\in[n]} w_i \P\lrl{ F_{k,i}(S_{k,i}) \leq  F_{k,i}(q) | \mX_{i}} - \sum_{i\in[n]} w_i \indicator \{F_{k,i}(S_{k,i}) \leq  F_{k,i}(q)\}}\\
		= & \sup_{q \in [0,1]} \bigabs{\sum_{i\in[n]} w_i \P\lrl{ F_{k,i}(S_{k,i}) \leq  q | \mX_{i}} - \sum_{i\in[n]} w_i \indicator \{F_{k,i}(S_{k,i}) \leq  q\}}\\
		= & \sup_{q \in [0,1]} \bigabs{q - \sum_{i\in[n]} w_i \indicator \{F_{k,i}(S_{k,i}) \leq  q\}},
	\end{align*}
	where the last equality comes from the fact that $F_{k,i}(S_{k,i})|\mX_i \sim \mathrm{Unif}[0,1]$. Partition the interval $[0,1]$ into $m_0$ equally spaced subintervals $[q_\ell,q_{\ell+1}]$ for $\ell \in [m_0]$ with $q_1 = 0$ and $q_{m_0+1} = 1$. If $q \in [q_\ell,q_{\ell+1}]$, then
	\begin{align*} 
		& \bigabs{q - \sum_{i\in[n]} w_i \indicator \{F_{k,i}(S_{k,i}) \leq  q\}}\\
		\leq & \bigabs{q_\ell - \sum_{i\in[n]} w_i \indicator \{F_{k,i}(S_{k,i}) \leq  q_\ell\}} + \abs{q - q_\ell} + \sum_{i\in[n]} w_i \indicator \{q_\ell < F_{k,i}(S_{k,i}) \leq  q\}\\
		\leq & \bigabs{q_{\ell} - \sum_{i\in[n]} w_i \indicator \{F_{k,i}(S_{k,i}) \leq  q_\ell\}} + \frac{1}{m_0} + \sum_{i\in[n]} w_i \indicator \{q_\ell < F_{k,i}(S_{k,i}) \leq  q_{\ell+1}\}\\
		\leq & \bigabs{q_{\ell} - \sum_{i\in[n]} w_i \indicator \{F_{k,i}(S_{k,i}) \leq  q_\ell\}} + \frac{1}{m_0} \\
		& \quad + \bigabs{\sum_{i\in[n]} w_i \indicator \{q_\ell < F_{k,i}(S_{k,i}) \leq  q_{\ell+1}\} - ( q_{\ell+1} -  q_\ell)} + q_{\ell+1} -  q_\ell\\
		\leq & \bigabs{q_{\ell} - \sum_{i\in[n]} w_i \indicator \{F_{k,i}(S_{k,i}) \leq  q_\ell\}} + \frac{2}{m_0} \\
		& \quad + \bigabs{\sum_{i\in[n]} w_i \indicator \{q_\ell < F_{k,i}(S_{k,i}) \leq  q_{\ell+1}\} - ( q_{\ell+1} -  q_\ell)}.
	\end{align*}
	It follows that
	\begin{align*}
		& ~~~~\sup_{q \in [0,1]} \bigabs{q - \sum_{i\in[n]} w_i \indicator \{F_{k,i}(S_{k,i}) \leq  q\}} \\
		& = \sup_{\ell \in [m_0]} \sup_{q \in [q_\ell,q_{\ell+1}]} \bigabs{q - \sum_{i\in[n]} w_i \indicator \{F_{k,i}(S_{k,i}) \leq  q\}}\\
		& \leq \sup_{\ell \in [m_0]} \bigabs{q_{\ell} - \sum_{i\in[n]} w_i \indicator \{F_{k,i}(S_{k,i}) \leq  q_\ell\}}+ \frac{2}{m_0}\\
		& \quad + \sup_{\ell \in [m_0]} \bigabs{\sum_{i\in[n]} w_i \indicator \{q_\ell < F_{k,i}(S_{k,i}) \leq  q_{\ell+1}\} - ( q_{\ell+1} -  q_\ell)}.
	\end{align*}
	For any fixed $\ell \in [m_0]$, Hoeffding's inequality gives
	\begin{align*}
		& \P\Bigg( \bigabs{q_{\ell} - \sum_{i\in[n]} w_i \indicator \{F_{k,i}(S_{k,i}) \leq  q_\ell\}} \geq t \sqrt{\sum_{i\in[n]} w_i^2}  \Bigg| \{\mX_i\}_{i \in [n+1]}\Bigg) \leq 2 \exp(-2t^2),\\
		& \P\Bigg( \bigabs{\sum_{i\in[n]} w_i \indicator \{q_\ell < F_{k,i}(S_{k,i}) \leq  q_{\ell+1}\} - ( q_{\ell+1} -  q_\ell)} \geq t \sqrt{\sum_{i\in[n]} w_i^2}  \Big| \{\mX_i\}_{i \in [n+1]}\Bigg) \leq 2 \exp(-2t^2).
	\end{align*}
	Taking the union bound over $k \in [K]$ and $\ell \in [m_0]$ and marginalizing over $\{\mX_i: i \in [n]\}$ yields
	\begin{align*}
		& \P\Bigg(\sup_{k \in [K]} \sup_{\ell \in [m_0]} \bigabs{q_{\ell} - \sum_{i\in[n]} w_i \indicator \{F_{k,i}(S_{k,i}) \leq  q_\ell\}} \geq t \sqrt{\sum_{i\in[n]} w_i^2}  \Bigg| \mX_{n+1}\Bigg) \leq 2 m_0 K \exp(-2t^2),\\
		& \P\Bigg(\sup_{k \in [K]} \sup_{\ell \in [m_0]} \bigabs{\sum_{i\in[n]} w_i \indicator \{q_\ell < F_{k,i}(S_{k,i}) \leq  q_{\ell+1}\} - ( q_{\ell+1} -  q_\ell)} \geq t \sqrt{\sum_{i\in[n]} w_i^2}  \Big| \mX_{n+1} \Bigg) \\
		& \quad \leq 2 m_0 K \exp(-2t^2).
	\end{align*}
	Applying Lemma~\ref{lemma:bound_on_weights} and taking $m_0 = O\Big(\sqrt{\frac{n \rho V_d h_n^d}{\log(K)}} \Big)$, we have
	\begin{align}\label{eq:conditional_valid_term2}
		\Xi_2 = O\Big(\sqrt{\frac{\log(K n \rho V_d h_n/\delta)}{n \rho V_d h_n^d}}\Big) ,
	\end{align}
	with probability at least $1-\delta$.
	
	Combining \refeq{eq:conditional_valid_error}, \refeq{eq:conditional_valid_term1}, and \refeq{eq:conditional_valid_term2}, we obtain
	\begin{align*}
		\Xi = O\Bigg(\frac{\tau h_n \log(h_n^{-d})}{\rho} + \sqrt{\frac{\log(K n \rho V_d h_n/\delta)}{n \rho V_d h_n^d}} \Bigg),
	\end{align*}
	with probability at least $ 1- 2 \delta $. Further, we can deduce that 
	\begin{align*}
		\E[\Xi|\mX_{n+1}] = O\Bigg(\frac{\tau h_n \log(h_n^{-d})}{\rho} + \sqrt{\frac{\log(K n \rho V_d h_n)}{n \rho V_d h_n^d}} \Bigg).
	\end{align*}
	Substituting this bound into \refeq{eq:conditional_valid_main} and marginalizing over $ \Dcal $, we conclude Part (1) via
	\begin{align*}
		\P\lrl{Y_{n+1}\in\widehat{C}^{\mathrm{loc}}\lrs{\mX_{n+1}} \mid \mX_{n+1}} 
		& \geq 1-\alpha - K \cdot \E[\Xi|\mX_{n+1}]\\
		& \geq 1-\alpha - K \cdot O \Bigg(\frac{\tau h_n \log(h_n^{-d})}{\rho} + \sqrt{\frac{\log(K n \rho V_d h_n)}{n \rho V_d h_n^d}} \Bigg).
	\end{align*}
	
	Now, we proceed to prove Part (2). Note that
	\begin{equation*}
		\begin{aligned}
			& \P\Big(Y_{n+1} \in \cap_k \widehat{C}_{k}^\mathrm{loc} (\mX_{n+1};\widehat{\alpha}_k) \big|  \mX_{n+1}, \Dcal \Big) 
			- \P\Big(Y_{n+1} \in \cap_k C_{k}^\mathrm{loc} (\mX_{n+1};\widehat{\alpha}_k) \big|  \mX_{n+1}, \Dcal \Big)\\
			\geq & - \sup_{\alphabold} \P\Big(Y_{n+1} \in \cap_k \widehat{C}_{k}^\mathrm{loc} (\mX_{n+1};\alpha_k) \triangle \cap_k C_{k}^\mathrm{loc} (\mX_{n+1};\alpha_k) \big|  \mX_{n+1}, \Dcal \Big) \\ 
			\geq & - M_f \sup_{\alphabold} \Bigabs{ \cap_k \widehat{C}_{k}^\mathrm{loc} (\mX_{n+1};\alpha_k) \triangle \cap_k C_{k}^\mathrm{loc} (\mX_{n+1};\alpha_k)},
		\end{aligned}
	\end{equation*}
	where the second inequality follows from the upper bound on the conditional density of $Y_{n+1}|\mX_{n+1}$. Combining the above inequality with 
	\[
	\P\Big(Y_{n+1} \in \cap_k C_{k}^\mathrm{loc} (\mX_{n+1};\widehat{\alpha}_k) \big|  \mX_{n+1}, \Dcal \Big) 
	\geq 1 - \sum_k \P\Big(Y_{n+1} \notin C_{k}^\mathrm{loc} (\mX_{n+1};\widehat{\alpha}_k) \big|  \mX_{n+1}, \Dcal \Big) \geq 1- \alpha,
	\]
	and marginalizing over $\Dcal$, we obtain
	\begin{equation*}
		\begin{aligned}
			&\P\Big(Y_{n+1} \in \cap_k \widehat{C}_{k}^\mathrm{loc} (\mX_{n+1};\widehat{\alpha}_k) \big|  \mX_{n+1} \Big)\\
			\geq & 1 - \alpha - M_f \E\left\{\sup_{\alphabold} \abs{ \cap_k \widehat{C}_{k}^\mathrm{loc} (\mX_{n+1};\alpha_k) \triangle \cap_k C_{k}^\mathrm{loc} (\mX_{n+1};\alpha_k)}\Big| \mX_{n+1} \right\}.
		\end{aligned}
	\end{equation*}
	Using the bound in \refeq{eq:condition_set_diff}, we obtain
	\begin{align*}
		& \E\left\{\sup_{\alphabold} \abs{ \cap_k \widehat{C}_{k}^\mathrm{loc} (\mX_{n+1};\alpha_k) \triangle \cap_k C_{k}^\mathrm{loc} (\mX_{n+1};\alpha_k)}\Big| \mX_{n+1} \right\} \\
		= & m L_q L_s ~O\Bigg(\sqrt{\frac{\log(K n \rho V_d h_n)}{n \rho V_d h_n^d}} + \frac{\tau h_n \log(h_n^{-d})}{\rho} \Bigg),
	\end{align*}
	which completes the proof of Part (2).
\end{proof}

\begin{proof}[Proof of Lemma~\ref{lemma:bound_on_weights}]
	Decompose the feature space as $\Xcal = \bigcup_{j = 1}^\infty \Xcal_j$, where
	\[
	\Xcal_j = \left\{ \vx : j-1 \leq \frac{\|\vx - \mX_{n+1}\|}{h_n} < j \right\}.
	\]
	Observe that
	\[
	H(\mX_i, \mX_{n+1}) 
	= \exp\!\left( - \frac{\|\mX_i - \mX_{n+1}\|}{h_n} \right)
	\geq e^{-1} \, \indicator\{\mX_i \in \Xcal_1\},
	\]
	which implies $ e \cdot \mathfrak{B} \ \geq\  \sum_{i\in[n]} \indicator\{\mX_i \in \Xcal_1\} $. 
	Let $p = \P(\mX_1 \in \Xcal_1 \mid \mX_{n+1})$.  
	By Chernoff bound (Theorem~4.5 in~\cite{mitzenmacher2017probability}), for any $t \in (0,1)$,
	\[
	\P\!\left( \sum_{i\in[n]} \indicator\{\mX_i \in \Xcal_1\} \leq (1-t) n p \,\Big|\, \mX_{n+1} \right) 
	\leq \exp\!\left( - \frac{n p t^2}{2} \right).
	\]
	Choosing $t = \sqrt{\log(1/\delta)/(n p)} < 1$, we obtain with probability at least $1 - \delta$ (conditional on $\mX_{n+1}$),
	\[
	\sum_{i\in[n]} \indicator\{\mX_i \in \Xcal_1\} 
	> (1-t) n p 
	\ \geq\ c\, n \rho V_d h_n^d,
	\]
	where the last inequality uses the assumption that the density of $\mX_i$ is lower bounded by $\rho$, so that $p \geq \rho V_d h_n^d$, and $ \sqrt{ \frac{\log(1/\delta)}{n \rho V_d h_n^d} } < 1 - c $.
	This directly yields the first result:
	\[
	\mathfrak{B} \ \geq\ \frac{c\, n \rho V_d h_n^d}{e}.
	\]
	
	Moreover, since $H(\mX_i, \mX_{n+1}) \leq 1$, it follows that
	\[
	\sum_{i\in[n]} w_i^2 
	= \sum_{i\in[n]}  \frac{H^2(\mX_i,\mX_{n+1})}{\mathfrak{B}^2} 
	\ \leq\ \sum_{i\in[n]}  \frac{H(\mX_i,\mX_{n+1})}{\mathfrak{B}^2} 
	= \frac{1}{\mathfrak{B}},
	\]
	which completes the proof of the second result.
\end{proof}

\subsection{Proof of Theorem \texorpdfstring{\ref{thm:optimal_conditional}}{6} }\label{proof:optimal_conditional}
\begin{proof}[Proof of Theorem~\ref{thm:optimal_conditional}]
	First, we derive the estimation error bound for the weighted sample quantile:
	\[
	\sup_{k \in [K]} \sup_{\beta \in [l_\alpha,\alpha]} \bigabs{\widehat{q}_{k, \beta}^{\mathrm{loc}} - q_{k,\beta}(\mX_{n+1})}.
	\]
	Let $ q^+_{k,\beta} = F_k^{-1}(1 - \beta + c_n \mid \mX_{n+1}) $ and $ q^-_{k,\beta} = F_k^{-1}(1 - \beta - c_n \mid \mX_{n+1}) $,
	where $c_n = o(1)$ is a vanishing sequence to be specified later. Then
	\begin{align*}
		& \P\Big(\forall k \in [K], \forall \beta \in [l_\alpha,\alpha], \widehat{q}_{k,\beta}^{\mathrm{loc}} \leq q^+_{k,\beta} \Big| \mX_{n+1} \Big)\\
		= & \P\Big((\forall k \in [K], \forall \beta \in [l_\alpha,\alpha], \sum_{i\in[n]} w_i \indicator\{S_{k,i} \leq q^+_{k,\beta}\} \geq 1 - \beta \Big| \mX_{n+1} \Big)\\
		= & \P\Big((\forall k \in [K], \forall \beta \in [l_\alpha,\alpha], \sum_{i\in[n]} w_i \indicator\{S_{k,i} \leq q^+_{k,\beta}\} - F_k(q^+_{k,\beta} \mid \mX_{n+1}) \geq -c_n \Big| \mX_{n+1} \Big)\\
		\geq & \P\Big(\sup_{k \in [K]} \sup_{\beta \in [l_\alpha,\alpha]} \Bigabs{\sum_{i\in[n]} w_i \indicator\{S_{k,i} \leq q^+_{k,\beta}\} - F_k(q^+_{k,\beta} \mid \mX_{n+1})} \leq c_n \Big| \mX_{n+1} \Big)\\
		\geq & \P\Big(\sup_{k \in [K]} \sup_{q \in \real} \Bigabs{\sum_{i\in[n]} w_i \indicator\{S_{k,i} \leq q\} - F_k(q \mid \mX_{n+1})} \leq c_n \Big| \mX_{n+1} \Big)\\
		\geq & \P\Big(\sup_{k \in [K]} \sup_{q \in \real } \Bigabs{\sum_{i\in[n]} w_i F_k(q|\mX_i) -  F_k(q|\mX_{n+1}) } \leq \frac{c_n}{2} \Big| \mX_{n+1} \Big)  \\
		& +  \P\Big(\sup_{k \in [K]} \sup_{q \in \real } \Bigabs{ \sum_{i\in[n]} w_i \indicator\{S_{k,i} \leq q\} - \sum_{i\in[n]} w_i F_k(q|\mX_i)} \leq \frac{c_n}{2} \Big| \mX_{n+1} \Big) - 1\\
		= & \P\Big(\Xi_1 \leq c_n/2 \mid \mX_{n+1}\Big) + \P\Big(\Xi_2 \leq c_n/2 \mid \mX_{n+1}\Big) - 1 \geq 1 - \delta/2,
	\end{align*}
	where $\Xi_1$ and $\Xi_2$ are defined in \refeq{eq:conditional_valid_error}, and the last inequality follows from \refeq{eq:conditional_valid_term1} and \refeq{eq:conditional_valid_term2} by choosing
	\[
	c_n = O\Bigg(\sqrt{\frac{\log(K n \rho V_d h_n / \delta)}{n \rho V_d h_n^d}} + \frac{\tau h_n \log(h_n^{-d})}{\rho} \Bigg).
	\] 
	Similarly, we have
	\[
	\P\Big(\forall k \in [K], \forall \beta \in [l_\alpha,\alpha], \widehat{q}_{k,\beta}^{\mathrm{loc}} \geq q^-_{k,\beta} \mid \mX_{n+1}\Big) \geq 1 - \delta/2.
	\]
	On the intersection of these two events, we have
	\[
	q^-_{k,\beta} \leq q_{k,\beta}(\mX_{n+1}) \leq q^+_{k,\beta}, \quad q^-_{k,\beta} \leq \widehat{q}_{k,\beta}^{\mathrm{loc}} \leq q^+_{k,\beta},
	\]
	and therefore,
	\[
	\abs{q_{k,\beta}(\mX_{n+1}) - \widehat{q}_{k,\beta}^{\mathrm{loc}}} \leq \max\big(\abs{q_{k,\beta}(\mX_{n+1}) - q^-_{k,\beta}}, \abs{q_{k,\beta}(\mX_{n+1}) - q^+_{k,\beta}} \big) \leq L_q c_n,
	\]
	where the last inequality follows from Assumption~\ref{ass:local_ass}(d).
	Under Assumption~\ref{ass:common_ass}(b), it follows
	\[
	\sup_{k \in [K]} \sup_{\beta \in [l_\alpha,\alpha]} \abs{\widehat{C}_k^{\mathrm{loc}}(\mX_{n+1};\beta) \triangle C_k^{\mathrm{loc}}(\mX_{n+1};\beta)} \leq L_q L_s c_n.
	\]
	For any $\alphabold \in \boldsymbol{\Theta}$, Assumption~\ref{ass:local_ass}(e) gives
	\[
	\Bigabs{\cap_{k = 1}^K C_k^{\mathrm{loc}}(\vx;\alpha_k) \triangle \cap_{k = 1}^K \widehat{C}_k^{\mathrm{loc}}(\vx;\alpha_k)}
	= \Bigabs{\cap_{k = 1}^K C_k^{\mathrm{loc}}(\vx;\alpha_k^\prime) \triangle \cap_{k = 1}^K \widehat{C}_k^{\mathrm{loc}}(\vx;\alpha_k^\prime)},
	\]
	where $\alpha_k^\prime = \alpha_k \indicator\{\alpha_k \geq l_\alpha\} + l_\alpha \indicator\{\alpha_k < l_\alpha\}$.
	Applying Assumption~\ref{ass:common_ass}(a) and Lemma~\ref{lemma:sym_dif} yields
	\begin{align}\label{eq:condition_set_diff}
		\sup_{\alphabold \in \boldsymbol{\Theta}} \Bigabs{\cap_{k = 1}^K \widehat{C}_k^{\mathrm{loc}}(\mX_{n+1};\alpha_k) \triangle \cap_{k = 1}^K C_k^{\mathrm{loc}}(\mX_{n+1};\alpha_k)} \leq \Op(L_q L_s m c_n).
	\end{align}
	Consequently,
	\begin{align*}
		& ~~~~\Bigabs{\cap_{k = 1}^K \widehat{C}_k^{\mathrm{loc}}(\mX_{n+1};\widehat{\alpha}_k(\mX_{n+1}))} 
		- \Bigabs{\cap_{k = 1}^K C_k^{\mathrm{loc}}(\mX_{n+1};\alpha_k^*(\mX_{n+1}))} \\
		& \leq \Bigabs{\cap_{k = 1}^K \widehat{C}_k^{\mathrm{loc}}(\mX_{n+1};\alpha_k^*(\mX_{n+1}))} 
		- \Bigabs{\cap_{k = 1}^K C_k^{\mathrm{loc}}(\mX_{n+1};\alpha_k^*(\mX_{n+1}))} \\
		& = \Op(L_q L_s m c_n),
	\end{align*}
	which completes the proof.
\end{proof}

\subsection{Theoretical Results for Classification}\label{sec:classification}
In this section, we study the theoretical properties of $\mathrm{COLA\text{-}e}$ in the classification setting. The coverage guarantee in Theorem~\ref{thm:valid_cola_e} holds under only the i.i.d. assumption, and thus applies to both regression and classification problems. Therefore, we concentrate here on the efficiency. While the lack of continuity assumptions leads to a formulation that differs slightly from Theorem~\ref{thm:asymptotic_opt_cola_e}, the core idea and conclusion remain essentially the same.

\begin{theorem}\label{thm:classification_cola_e}
	For a classification problem with label space $\Ycal = [M]$, under the i.i.d. assumption of the $\Dcal$ and the test point $(\mX_{n+1},Y_{n+1})$, the $\mathrm{COLA\text{-}e}$ prediction set satisfies
	\[\E_{\mX_{n+1}}\lrm{|\widehat{C}^{\mathrm{e}}(\mX_{n+1};\alpha)|} - \Lcal(\alphabold^* - W_f^\prime \cdot \vone_K) = \Op\left( M \sqrt{\frac{\log(Kn)}{n}} \right),\]
	where $W_f^\prime = \Op( \sqrt{\log(K)/n})$ and $\vone_K$ denotes a $K$-dimensional vector with all entries equal to one.
\end{theorem}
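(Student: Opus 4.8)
The plan is to follow the architecture of the proof of Theorem~\ref{thm:asymptotic_opt_cola_e}, but to replace every step that invoked continuity (Assumption~\ref{ass:marginal_ass}(a)) or the union-of-intervals structure (Assumption~\ref{ass:common_ass}) by direct empirical-CDF and set-containment arguments, since neither is available for discrete labels. The shift $-W_f^\prime\vone_K$ in the reference allocation is introduced precisely to absorb the quantile-estimation gap that, in the continuous case, was controlled by the Lipschitz quantile bound. Concretely, I would first set $W_f^\prime=\sup_{k\in[K]}\sup_{t}\bigabs{\widehat{F}_k^+(t)-F_k(t)}$, where $\widehat{F}_k^+$ is the empirical CDF of $\{S_{k,i}\}_{i\in[n]}\cup\{\infty\}$ and $F_k$ is the CDF of $S_k(\mX_{n+1},Y_{n+1})$; the univariate DKW inequality, a union bound over $k\in[K]$, and the $O(1/n)$ augmentation correction give $W_f^\prime=\Op(\sqrt{\log(K)/n})$, matching the stated order.

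The key deterministic ingredient is a threshold containment. With $\widehat{q}_{k,\beta}=Q_\beta(\{S_{k,i}\}_{i\in[n]}\cup\{\infty\})$, evaluating $\widehat{F}_k^+\geq F_k-W_f^\prime$ at the point $t=q_{k,\alpha_k^*-W_f^\prime}$ gives $\widehat{F}_k^+(t)\geq 1-\alpha_k^*$, hence $\widehat{q}_{k,\alpha_k^*}\leq q_{k,\alpha_k^*-W_f^\prime}$ for every $k$. As this is a relation between scalar thresholds, it holds uniformly over $\vx$ and over all hypothetical labels, so $\cap_{k}\widehat{C}_k(\vx;\alpha_k^*)\subseteq\cap_{k}C_k(\vx;\alpha_k^*-W_f^\prime)$ and therefore $\bigabs{\cap_k\widehat{C}_k(\vx;\alpha_k^*)}\leq\bigabs{\cap_k C_k(\vx;\alpha_k^*-W_f^\prime)}$ pointwise in $\vx$.

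Next I would introduce the plug-in fresh-sample risk $\widetilde{\Lcal}(\alphabold)=\E_{\mX_{n+1}}\bigabs{\cap_k\widehat{C}_k(\mX_{n+1};\alpha_k)}$, with the empirical quantiles held fixed given $\Dcal$, so that $\E_{\mX_{n+1}}[\abs{\widehat{C}^{\mathrm{e}}(\mX_{n+1};\alpha)}]=\widetilde{\Lcal}(\widehat{\alphabold}^{\mathrm{e}})$. I would then telescope $\widetilde{\Lcal}(\widehat{\alphabold}^{\mathrm{e}})-\Lcal(\alphabold^*-W_f^\prime\vone_K)$ as $[\widetilde{\Lcal}(\widehat{\alphabold}^{\mathrm{e}})-\Lcal_n(\widehat{\alphabold}^{\mathrm{e}})]+[\Lcal_n(\widehat{\alphabold}^{\mathrm{e}})-\Lcal_n(\alphabold^*)]+[\Lcal_n(\alphabold^*)-\widetilde{\Lcal}(\alphabold^*)]+[\widetilde{\Lcal}(\alphabold^*)-\Lcal(\alphabold^*-W_f^\prime\vone_K)]$. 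The second bracket is nonpositive by optimality of $\widehat{\alphabold}^{\mathrm{e}}$ for $\Lcal_n$, the fourth is nonpositive by the containment of the previous paragraph, and the first and third are each at most $\sup_{\alphabold}\bigabs{\Lcal_n(\alphabold)-\widetilde{\Lcal}(\alphabold)}$, leaving only an empirical-process term to control.

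Finally, since both $\Lcal_n$ and $\widetilde{\Lcal}$ depend on $\alphabold$ only through the threshold vector $\vt=(\widehat{q}_{1,\alpha_1},\dots,\widehat{q}_{K,\alpha_K})$, setting $G(\vx;\vt)=\sum_{y\in[M]}\prod_k\indicator\{S_k(\vx,y)\leq t_k\}$ gives $\sup_{\alphabold}\bigabs{\Lcal_n(\alphabold)-\widetilde{\Lcal}(\alphabold)}\leq\sup_{\vt}\bigabs{n^{-1}\sum_i G(\mX_i;\vt)-\E G(\mX_{n+1};\vt)}\leq\sum_{y\in[M]}W^{(y)}$, where $W^{(y)}$ is the uniform deviation of the $K$-dimensional empirical CDF of $\{(S_1(\mX_i,y),\dots,S_K(\mX_i,y))\}_{i\in[n]}$. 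I expect the main obstacle to be obtaining the rate with $M$ \emph{outside} the logarithm: a high-probability union bound over the $M$ labels would produce a spurious $\log(M)$, so instead I would bound in expectation, using the multivariate DKW estimate of \cite{naaman2021tight} employed in the proof of Theorem~\ref{thm:valid_cola_e} to get $\E[W^{(y)}]=O(\sqrt{\log(Kn)/n})$ for each $y$, whence $\E[\sum_y W^{(y)}]=M\,O(\sqrt{\log(Kn)/n})$ and Markov's inequality yields $\sum_y W^{(y)}=\Op(M\sqrt{\log(Kn)/n})$. Combining the four brackets gives $\widetilde{\Lcal}(\widehat{\alphabold}^{\mathrm{e}})-\Lcal(\alphabold^*-W_f^\prime\vone_K)\leq 2\sum_y W^{(y)}=\Op(M\sqrt{\log(Kn)/n})$, which is the assertion.
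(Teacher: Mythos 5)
Your proposal is correct and follows essentially the same route as the paper's proof: the same univariate DKW control of the quantile shift (the paper's $W_f$, your $W_f'$), the same containment $\widehat{q}_{k,\alpha_k^*}\leq q_{k,\alpha_k^*-W_f'}$ converting empirical sets into population sets at the shifted allocation, and the same per-label multivariate empirical-process terms $W_y$ bounded in expectation via the multivariate DKW inequality of \cite{naaman2021tight} and summed over $y\in[M]$ to place $M$ outside the logarithm. Your four-term telescoping is just a cleaner rendering of the paper's single chain of inequalities (and is in fact slightly more careful, since it applies the optimality of $\widehat{\alphabold}^{\mathrm{e}}$ to $\Lcal_n$ rather than to the fresh-sample expectation, which is where it genuinely holds); the only omitted technicality is the convention $q_{k,\beta}:=q_{k,0}$ when $\beta<0$, which the paper states explicitly.
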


Under the only i.i.d. assumption, this theorem primarily shows that the expected size of the aggregated conformal prediction set is asymptotically as small as that of the oracle prediction set corresponding to the allocation $\alphabold^* - W_f^\prime \cdot \vone_K$. With some continuity assumption on $\Lcal(\alphabold)$, the approximation error can be made explicit, yielding a result similar to the second result in Theorem~\ref{thm:asymptotic_opt_cola_e}.

\begin{proof}[Proof of Theorem~\ref{thm:classification_cola_e}]
	We begin by introducing two empirical process deviation terms for any fixed $y \in \Ycal$:
	\begin{align*}
		W_y & = \sup_{\vt}\left|\frac{1}{n}\sum_{i\in[n]}\Pi_{k=1}^K \indicator\lrl{S_k(\mX_i,y)\leq t_k}-\E_{\mX_{n+1}}\lrm{\Pi_{k=1}^K \indicator\lrl{S_k(\mX_{n+1},y)\leq t_k}} \right|,\\
		W_f & = \sup_{t \in \real} \sup_{k \in [K]} \left| \frac{1}{n}\sum_{i\in[n]}\indicator\lrl{S_{k,i}\leq t}-\P\lrs{S_k(\mX_{n+1},Y_{n+1})\leq t}\right|.
	\end{align*}
	Let $ \widehat{q}_{k,\alpha} = Q_\alpha(\{S_{k,i}\}_{i \in [n]})$. By the definition of $\widehat{q}_{k,\alpha}^+ $, we have 
	\begin{align}\label{eq:diff_quan_infty}
		\widehat{q}_{k,\alpha}^+ =  \widehat{q}_{k,\frac{n+1}{n}\alpha - \frac{1}{n}}
	\end{align}
	for any $\alpha\in(0,1)$. By the the definition of $W_f$, it follows that
	\begin{equation}
		\begin{aligned}\label{eq:quantile_error}
			\widehat{q}_{k,\alpha} \leq \inf\lrl{t:\P\lrs{S_k(\mX_{n+1},Y_{n+1})\leq t} \geq (1 - \alpha+W_f)\wedge 1}=q_{k,\alpha - W_f}.
		\end{aligned}
	\end{equation}
	Without loss of generality, we define $q_{k, \alpha - W_f} := q_{k,0}$ if $\alpha < W_f$. 
	
	We now analyze the expected size of the estimated conformal set:
	\begin{align*}
		\E_{\mX_{n+1}}\lrm{|\widehat{C}^{\mathrm{e}}(\mX_{n+1};\alpha)|}
		&=\E_{\mX_{n+1}}\lrm{\sum_{y\in[M]}\Pi_{k=1}^K\indicator\lrl{ y\in\widehat{C}_k\lrs{\mX_{n+1};\widehat{\alpha}_k^{\mathrm{e}}}}}\\ 
		&\leq\E_{\mX_{n+1}}\lrm{\sum_{y\in[M]}\Pi_{k=1}^K\indicator\lrl{ y\in\widehat{C}_k\lrs{\mX_{n+1};\alpha^*_k}}}\\       
		&=\sum_{y\in[M]}\E_{\mX_{n+1}}\lrm{\Pi_{k=1}^K\indicator\lrl{ S_k(\mX_{n+1},y)\leq \widehat{q}^+_{k,\alpha^*_k }}}\\
		&=\sum_{y\in[M]}\E_{\mX_{n+1}}\lrm{\Pi_{k=1}^K\indicator\lrl{ S_k(\mX_{n+1},y)\leq \widehat{q}_{k,\frac{n+1}{n}\alpha^*_k - \frac{1}{n} }}}\\
		&\leq\frac{1}{n}\sum_{i\in[n]} \sum_{y\in[M]}\Pi_{k=1}^K \indicator\lrl{S_k(\mX_i,y)\leq \widehat{q}_{k,\frac{n+1}{n} \alpha_k^* - \frac{1}{n}}}+\sum_{y\in[M]}W_y\\
		&\leq \frac{1}{n}\sum_{i\in[n]} \sum_{y\in[M]}\Pi_{k=1}^K \indicator\lrl{S_k(\mX_i,y)\leq q_{k,\frac{n+1}{n} \alpha_k^* - \frac{1}{n}-W_f}}+\sum_{y\in[M]}W_y\\
		&\leq \sum_{y\in[M]}\E_{\mX_{n+1}}\lrm{\Pi_{k=1}^K \indicator\lrl{S_k(\mX_{n+1},y)\leq  q_{k,\frac{n+1}{n} \alpha_k^* - \frac{1}{n}-W_f}}} +2\sum_{y\in[M]}W_y\\
		&= \E_{\mX_{n+1}}\lrm{\left|\cap_{k=1}^K C_k(\mX_{n+1};\alpha_k^{\ast} - \frac{1 - \alpha_k^*}{n} - W_f)\right|}+2\sum_{y\in[M]} W_y,
	\end{align*}
	where the first inequality follows from the optimality of $\widehat{\alphabold}^{\mathrm{e}}$, the second and last inequalities follow from the definition of $W_y$, the third inequality follows from \refeq{eq:quantile_error}, the third equality follows from \refeq{eq:diff_quan_infty}. 	
	Similar to \refeq{eq:bound_on_W}, we have $\E[W_y] = O(\sqrt{\log(Kn)/n})$ and consequently, by Markov’s inequality, $W_y = \Op(\sqrt{\log(Kn)/n})$.
	Combining all bounds, we conclude that
	\[\E_{\mX_{n+1}}\lrm{|\widehat{C}^{\mathrm{e}}(\mX_{n+1};\alpha)|} - \Lcal(\alphabold^* - \mW_f^\prime) \leq \Op\left( M \sqrt{\frac{\log(Kn)}{n}} \right),\]
	where $\mW_f^\prime = \vone_K (W_f + 1/n)  - \alphabold^*/n  = \Op( \sqrt{\log(K)/n}) \vone_K$ by the DKW inequality.
\end{proof}

\newpage
\section{Additional Methodological and Implementation Details}

\subsection{Intuition Behind \texorpdfstring{$\alpha$}{Alpha}-Allocation}
To further illustrate why COLA yields shorter prediction intervals than the selection-based method, we take Case 2 from the simulation study as a representative example. Fixing $n=300$ and keeping all other configurations the same as in Section~\ref{sec:numerical_marginal}, we perform a single random experiment in which COLA-e allocates $\widehat{\alphabold}^{\mathrm{e}} = (0.007,0.073,0.02)$ while EFCP selects the second score. Figure~\ref{fig:intuition} compares the prediction intervals constructed by COLA-e and EFCP for the same randomly drawn test instance, illustrating how each method forms its final prediction set.

\begin{figure}[H]
	\centering
	\includegraphics[width=0.9\linewidth]{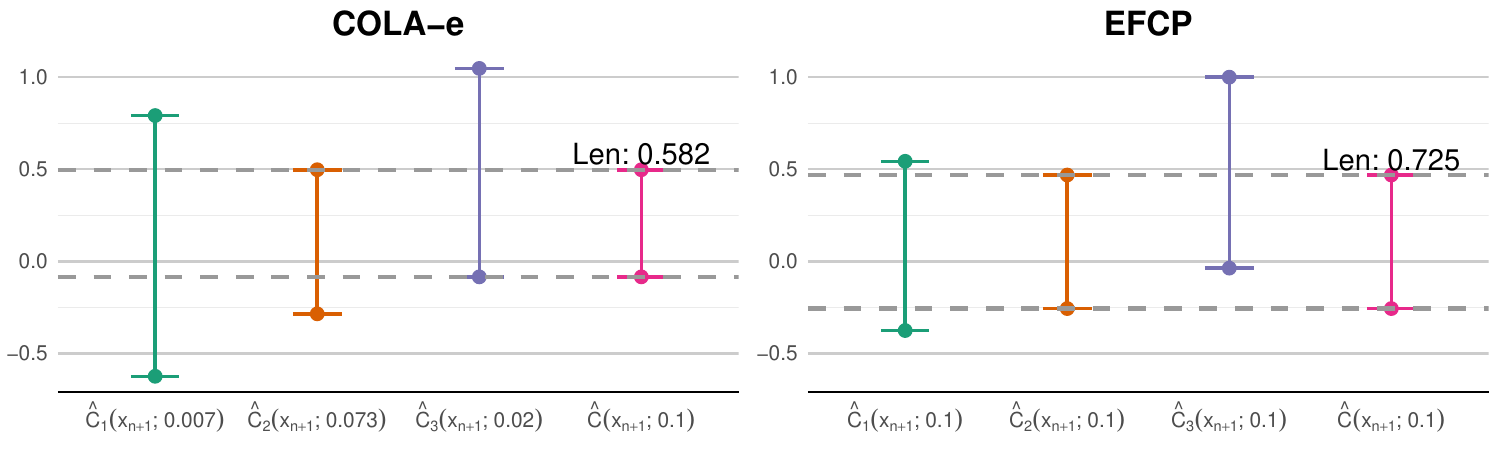}
	\caption{Prediction intervals for the same randomly drawn test instance under COLA-e and EFCP. ``Len'' denotes the length of the corresponding interval.}
	\label{fig:intuition}
\end{figure}

\subsection{Pseudocode of Stepwise Algorithm}\label{sec:stepwise}
For brevity, let $\Lcal_{n,\Scal} = \min_{\substack{\alphabold \in \boldsymbol{\Theta}_{\mathrm{grid}}(\alpha,n)\\ \mathrm{supp}(\alphabold) \subset \Scal}} \Lcal_n(\alphabold)$ denote the minimal value under support constraint.
\begin{algorithm}[H]
	\caption{Stepwise Optimization Algorithm}
	\label{alg:stepwise}
	\begin{spacing}{1}
	\begin{algorithmic}[1]
		\Require Loss function $\Lcal_n(\alphabold)$, miscoverage rate $\alpha\in(0,1)$, $k_{\max} \ge 1$, $T\ge 1$
		\Ensure $\widehat{\alphabold}$ in \refeq{eq:alpha_eff}
		
		\If{$K \le k_{\max}$} \State\Return $\widehat{\alphabold} \gets \arg\min_{\alphabold \in \boldsymbol{\Theta}_{\mathrm{grid}}(\alpha,n)} \Lcal_n(\alphabold)$
		\EndIf
		
		\State Initialize $\Scal^{(0)} \gets \varnothing$, $L^{(0)} \gets \infty$
		
		\For{$t=1$ to $T$}
		\State $(k', L^{(t)}) \gets (\arg\min_{k \in [K]} \Lcal_{n,\Scal^{(t-1)} \cup \{k\}},\min_{k \in [K]} \Lcal_{n,\Scal^{(t-1)} \cup \{k\}})$ \Comment{Forward step}
		\If{$L^{(t)} \ge L^{(t-1)}$} 
		\State \Return $\widehat{\alphabold}^{(t-1)}$
		\EndIf
		\State $\Scal^{(t)} \gets \Scal^{(t-1)} \cup \{k'\}$
		\State $\widehat{\alphabold}^{(t)} \gets \arg\min_{\substack{\alphabold \in \boldsymbol{\Theta}_{\mathrm{grid}}(\alpha,n)\\ \mathrm{supp}(\alphabold) \subset \Scal^{(t)}}} \Lcal_n(\alphabold)$
		\State $\Scal^{(t)} \gets \mathrm{supp}(\widehat{\alphabold}^{(t)})$ \Comment{Backward step}
		\If{$|\Scal^{(t)}| \ge k_{\max}$} \State\Return $\widehat{\alphabold}^{(t)}$
		\EndIf
		\EndFor
		
		\State \Return $\widehat{\alphabold}^{(T)}$
	\end{algorithmic}
	\end{spacing}
\end{algorithm}

\subsection{Summary of COLA Variants}\label{app:cola_summary}
Table~\ref{tab:cola_comparison} summarizes of the four COLA variants in terms of allocation type, coverage, validity, efficiency, and computational cost. Computation time is measured per test point.

\begin{table}[!ht]
	\centering
	\caption{Summary of the four COLA variants.}
	\label{tab:cola_comparison}
	\renewcommand{\arraystretch}{1.5}
	\resizebox{\textwidth}{!}{%
		\begin{tabular}{lccccc}
			\toprule
			\textbf{Method} & \textbf{Allocation} & \textbf{Coverage} & \textbf{Validity} & \textbf{Efficiency} & \textbf{Computation} \\
			\midrule
			$\mathrm{COLA\text{-}e}$ & Average & Marginal       & Asymptotic          & Optimal allocation (asymptotic)         & Moderate \\
			$\mathrm{COLA\text{-}s}$ & Average & Marginal       & Finite-sample       & Reduced (due to splitting)   & Low \\
			$\mathrm{COLA\text{-}f}$ & Average & Marginal       & Finite-sample       & Strong empirical efficiency  & High \\
			$\mathrm{COLA\text{-}l}$ & Individual & Conditional & Asymptotic &  Locally optimal allocation (asymptotic) & Moderate \\
			\bottomrule
		\end{tabular}%
	}
\end{table}

\section{Additional Numerical Results}\label{sec:numerical_supp}
\subsection{Additional Results on Marginal-level Performance}
In this section, the hold-out data size is fixed at $n = 300$, with all other settings kept identical to those in Section~\ref{sec:numerical_marginal}. In Section~\ref{sec:numerical_case1}, under Case 1, when a single conformal prediction set dominates, COLA exhibits performance comparable to EFCP and VFCP, naturally allocating the entire confidence budget to the set identified by EFCP and VFCP methods. Section~\ref{sec:numerical_case2}, under Case 2, compares COLA with alternative approaches, demonstrating its advantages over both score-level combination method \citep{luo2025weighted} and model-level combination method \citep{yang2024selection}. Section~\ref{subsec:comparison_s_e_f} presents a comparison of the performance of the three COLA variants: $\mathrm{COLA\text{-}e}$, $\mathrm{COLA\text{-}s}$, and $\mathrm{COLA\text{-}f}$. Finally, Section~\ref{sec:numerical_alpha} reports results across varying miscoverage levels $\alpha$. Across all scenarios, COLA-e consistently achieves the most favorable performance.

\subsubsection{Case 1: different learning algorithms}\label{sec:numerical_case1}
Figure~\ref{fig:case1} presents the comparison between COLA-e/COLA-s and EFCP/VFCP under Case 1. Both methods substantially outperform the remaining approaches, with EFCP/VFCP exhibiting performance comparable to COLA-e/COLA-s. Further analysis reveals that the score derived from the neural network dominates the others. The COLA-s allocation vector $\widehat{\alphabold}^{\mathrm{s}}$ is then compared with the VFCP selection, encoded as an $\alpha$-scaled one-hot vector $\widehat{\alphabold}^{\mathrm{VFCP}}$ of matching dimensionality. Figure~\ref{fig:alpha_com} displays the frequency distribution of the $\|\widehat{\alphabold}^{\mathrm{s}}-\widehat{\alphabold}^{\mathrm{VFCP}}\|_{\infty}$ showing that COLA-s allocates the confidence budget in a manner consistent with VFCP under this setting.

\begin{figure}[H]
	\centering
	\includegraphics[width=0.75\linewidth]{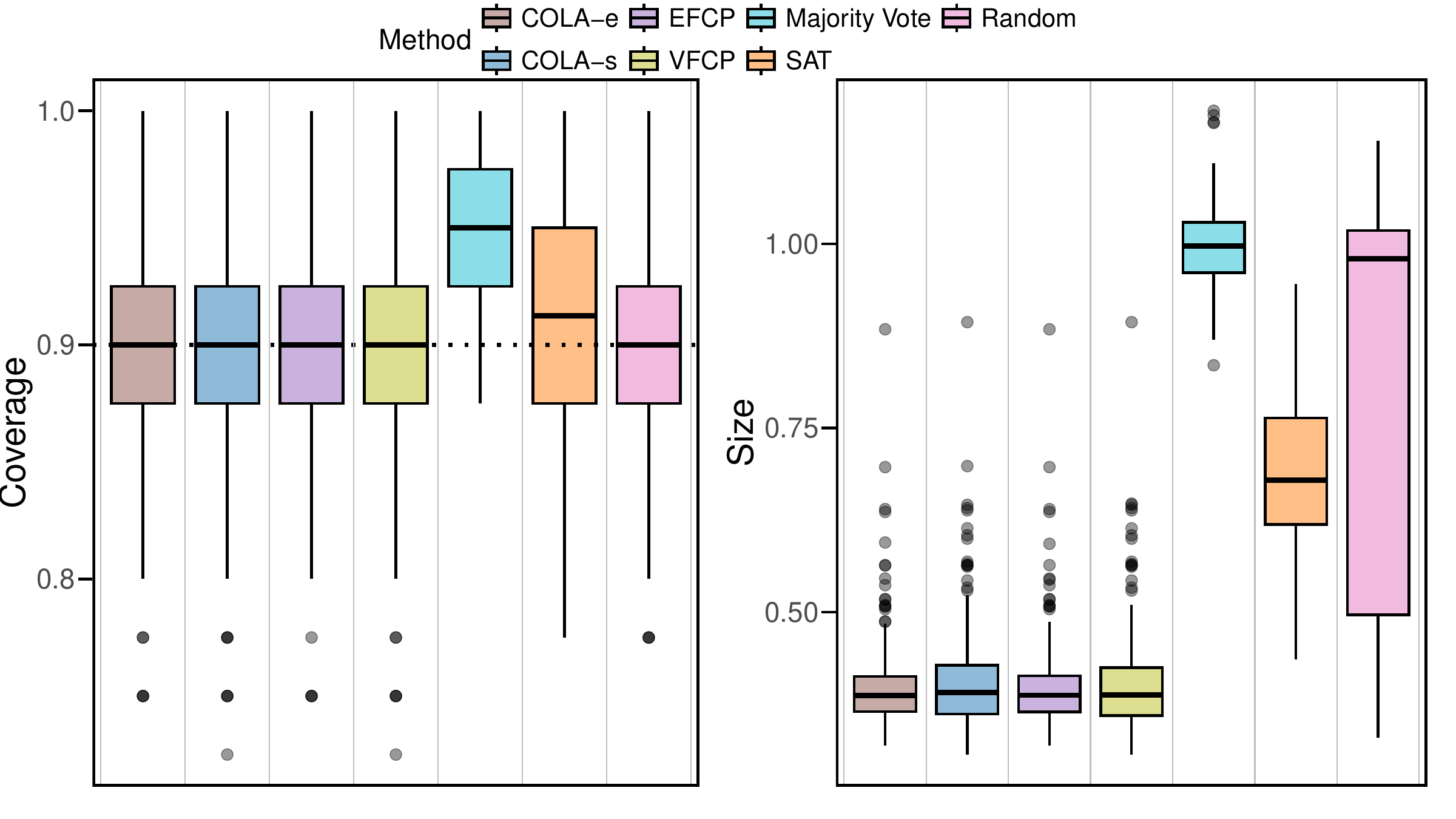}
	\caption{Coverage and size under Case 1.}
	\label{fig:case1}
\end{figure}

\begin{figure}[H]
	\centering
	\includegraphics[width=0.75\linewidth]{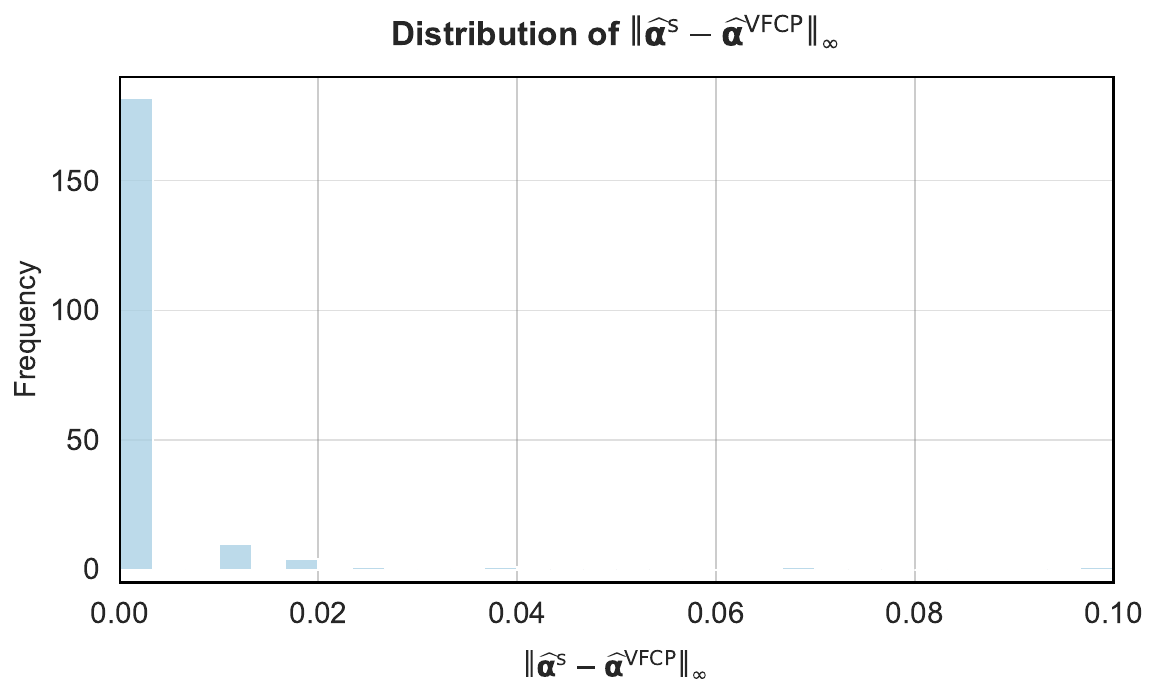}
	\caption{Histogram of $\|\widehat{\alphabold}^{\mathrm{s}}-\widehat{\alphabold}^{\mathrm{VFCP}}\|_{\infty}$ over 200 repetitions.}
	\label{fig:alpha_com}
\end{figure}

\subsubsection{Case 2: different nonconformity measures}\label{sec:numerical_case2}
In this section, under Case 2, COLA is compared with two additional methods, namely score-level combination method (LZ) \citep{luo2025weighted} and model-level combination method (VF-lin) \citep{yang2024selection}. Since VF-lin is limited to combining predictive models, its implementation here is restricted to residual scores obtained from the combination of the MARS and RF models. Figure~\ref{fig:case2} summarizes the results. COLA demonstrates superior performance over competing approaches. The VF-lin method is limited to combining conditional mean estimators, whereas the LZ method underperforms when score scales differ markedly. 

\begin{figure}[H]
	\centering
	\includegraphics[width=0.9\linewidth]{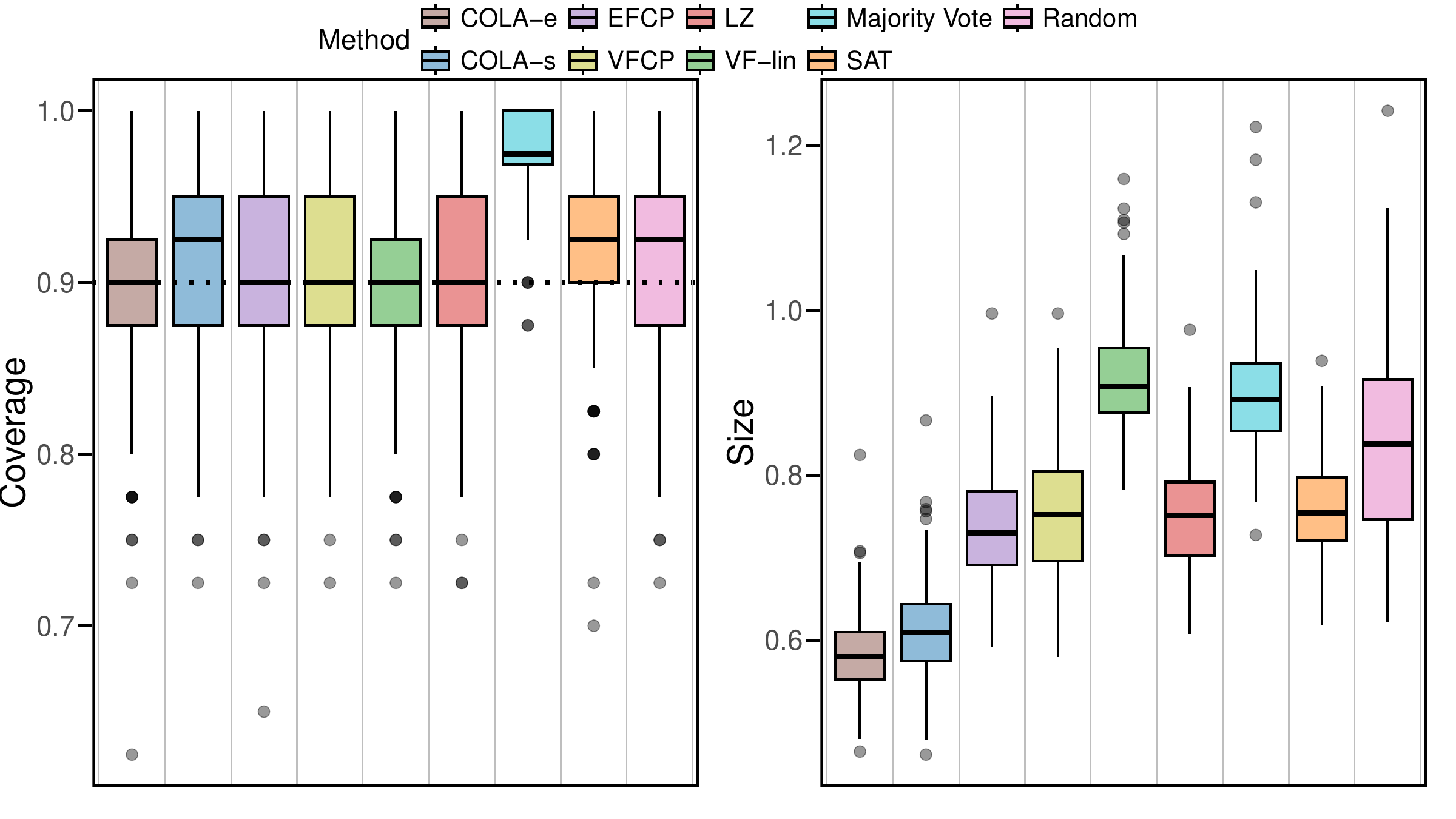}
	\caption{Coverage and size for combining different types of nonconformity scores in Case 2.}
	\label{fig:case2}
\end{figure}

\subsubsection{Comparison of COLA-e, COLA-s and COLA-f}\label{subsec:comparison_s_e_f}
This section adopts the Case 2 setting to compare the performance of $\mathrm{COLA\text{-}e}$, $\mathrm{COLA\text{-}s}$, and $\mathrm{COLA\text{-}f}$ with respect to coverage, prediction set size, and runtime, across varying $n$ and nominal coverage levels $1-\alpha$. The results are summarized in Table~\ref{tab:COLA_s_e_f}. It is observed that $\mathrm{COLA\text{-}s}$ and $\mathrm{COLA\text{-}f}$ guarantee finite-sample coverage, whereas $\mathrm{COLA\text{-}e}$ exhibits slight undercoverage. In terms of prediction set size, $\mathrm{COLA\text{-}s}$ yields the largest sets due to sample splitting. The full conformalization in $\mathrm{COLA\text{-}f}$ results in substantially higher computational cost compared to the other two methods.
\begin{table}[H]
	\caption{Coverage, size and runtime of $\mathrm{COLA\text{-}e}$, $\mathrm{COLA\text{-}s}$, and $\mathrm{COLA\text{-}f}$.}
	\label{tab:COLA_s_e_f}
	\centering
	\renewcommand{\arraystretch}{1.05}
	\setlength{\tabcolsep}{15pt}
	\resizebox{0.8\columnwidth}{!}{%
		\begin{tabular}{lllcc S[table-format=4.2]}
			\hline
			$\alpha$ & $n$ & method & coverage (\%) & size & {runtime (sec)} \\ \hline
			\multirow{6}{*}{0.05} & \multirow{3}{*}{100} & $\mathrm{COLA\text{-}e}$ & 93.85 & 0.70 & 0.05 \\ 
			&  & $\mathrm{COLA\text{-}s}$ & 95.45 & 0.78 & 0.06 \\
			&  & $\mathrm{COLA\text{-}f}$ & 95.55 & 0.73 & 71.45 \\ \cline{2-6} 
			& \multirow{3}{*}{300} & $\mathrm{COLA\text{-}e}$ & 94.65 & 0.69 & 0.23 \\ 
			&  & $\mathrm{COLA\text{-}s}$ & 95.85 & 0.72 & 0.07 \\
			&  & $\mathrm{COLA\text{-}f}$ & 95.40 & 0.71 & 498.87 \\ \hline
			\multirow{6}{*}{0.10} & \multirow{3}{*}{100} & $\mathrm{COLA\text{-}e}$ & 89.30 & 0.59 & 0.11 \\ 
			&  & $\mathrm{COLA\text{-}s}$ & 91.60 & 0.63 & 0.04 \\
			&  & $\mathrm{COLA\text{-}f}$ & 91.20 & 0.62 & 205.85 \\ \cline{2-6} 
			& \multirow{3}{*}{300} & $\mathrm{COLA\text{-}e}$ & 88.70 & 0.59 & 0.89 \\ 
			&  & $\mathrm{COLA\text{-}s}$ & 90.30 & 0.62 & 0.26 \\
			&  & $\mathrm{COLA\text{-}f}$ & 90.15 & 0.61 & 1793.23 \\ \hline
		\end{tabular}%
	}
\end{table}

\subsubsection{Varying \texorpdfstring{$\alpha$}{alpha}}\label{sec:numerical_alpha}
This section presents results for varying miscoverage levels $\alpha$ from 0.025 to 0.02, as shown in Figure~\ref{fig:lineplot_a}.

\begin{figure}[t!]
	\centering
	\begin{minipage}{0.8\textwidth}
		\centering
		\includegraphics[trim=4.9cm 14.4cm 4.9cm 0cm, clip, height=0.4cm]{lineplot_K.pdf}
	\end{minipage}
	\vspace{0.1em}
	\begin{minipage}{0.32\textwidth}
		\includegraphics[trim=1cm 0 7.3cm 0, clip, height=9cm]{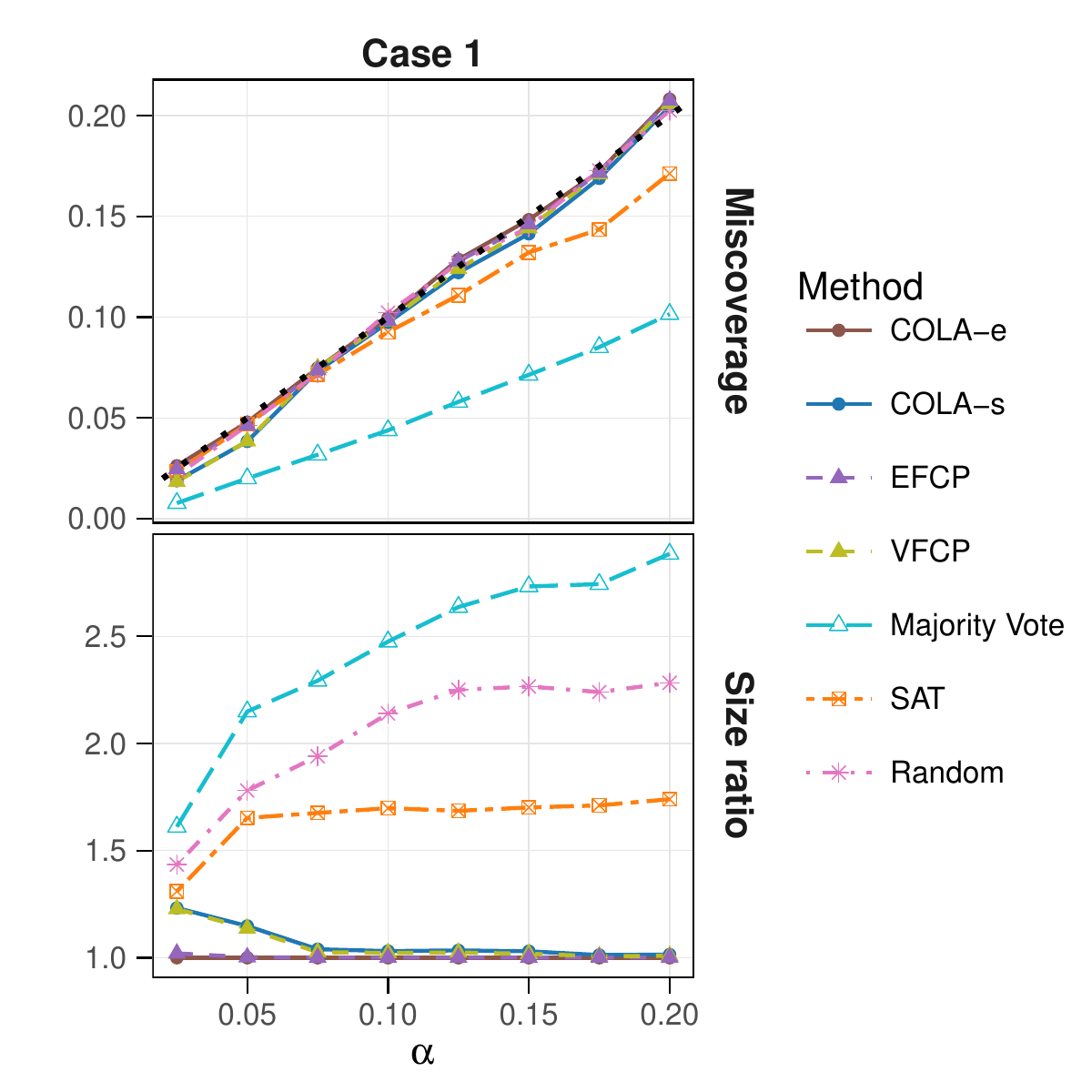}
	\end{minipage}%
	\hfill
	\begin{minipage}{0.32\textwidth}
		\includegraphics[trim=1cm 0 7.3cm 0, clip, height=9cm]{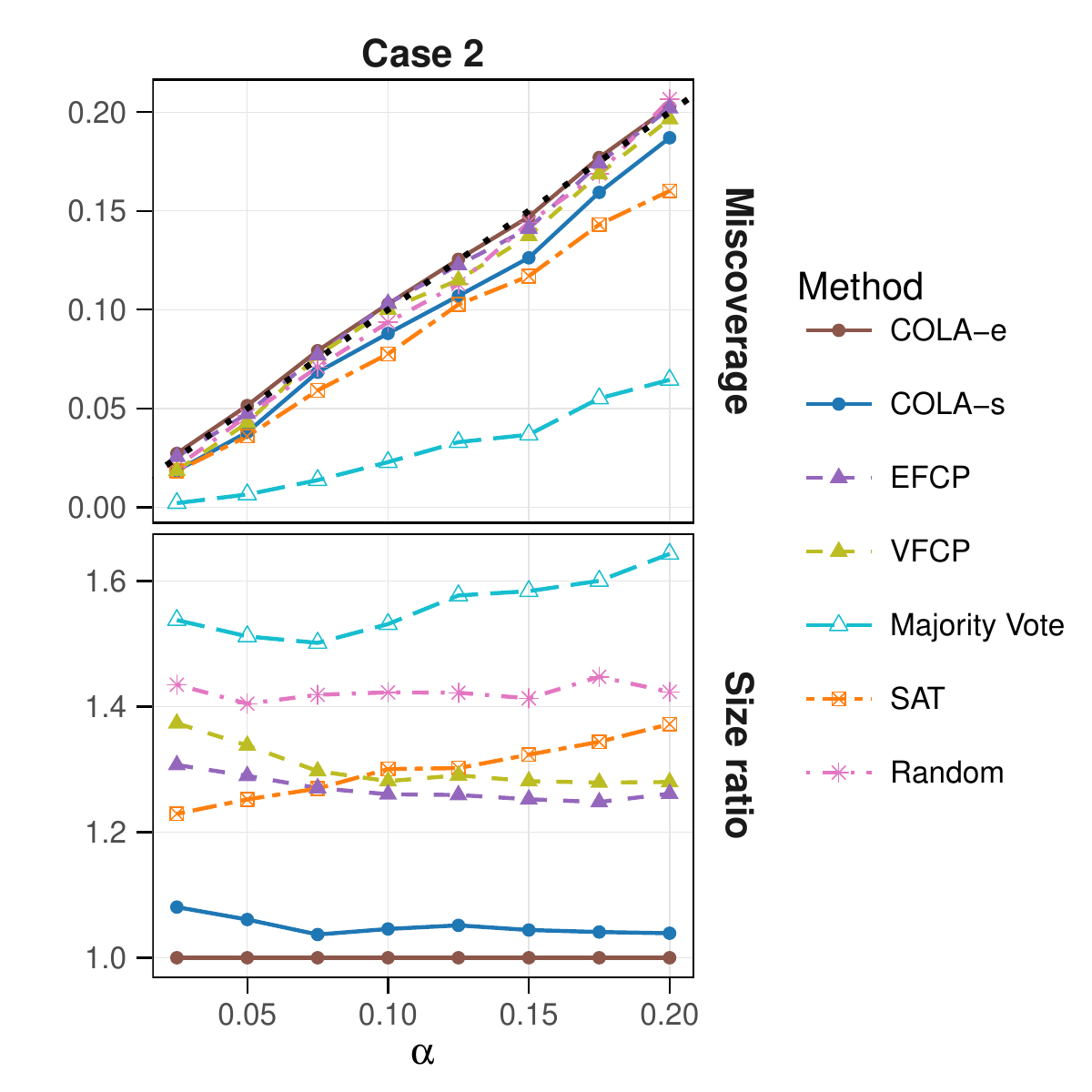}
	\end{minipage}%
	\hfill
	\begin{minipage}{0.34\textwidth}
		\includegraphics[trim=1cm 0 5.5cm 0, clip, height=9cm]{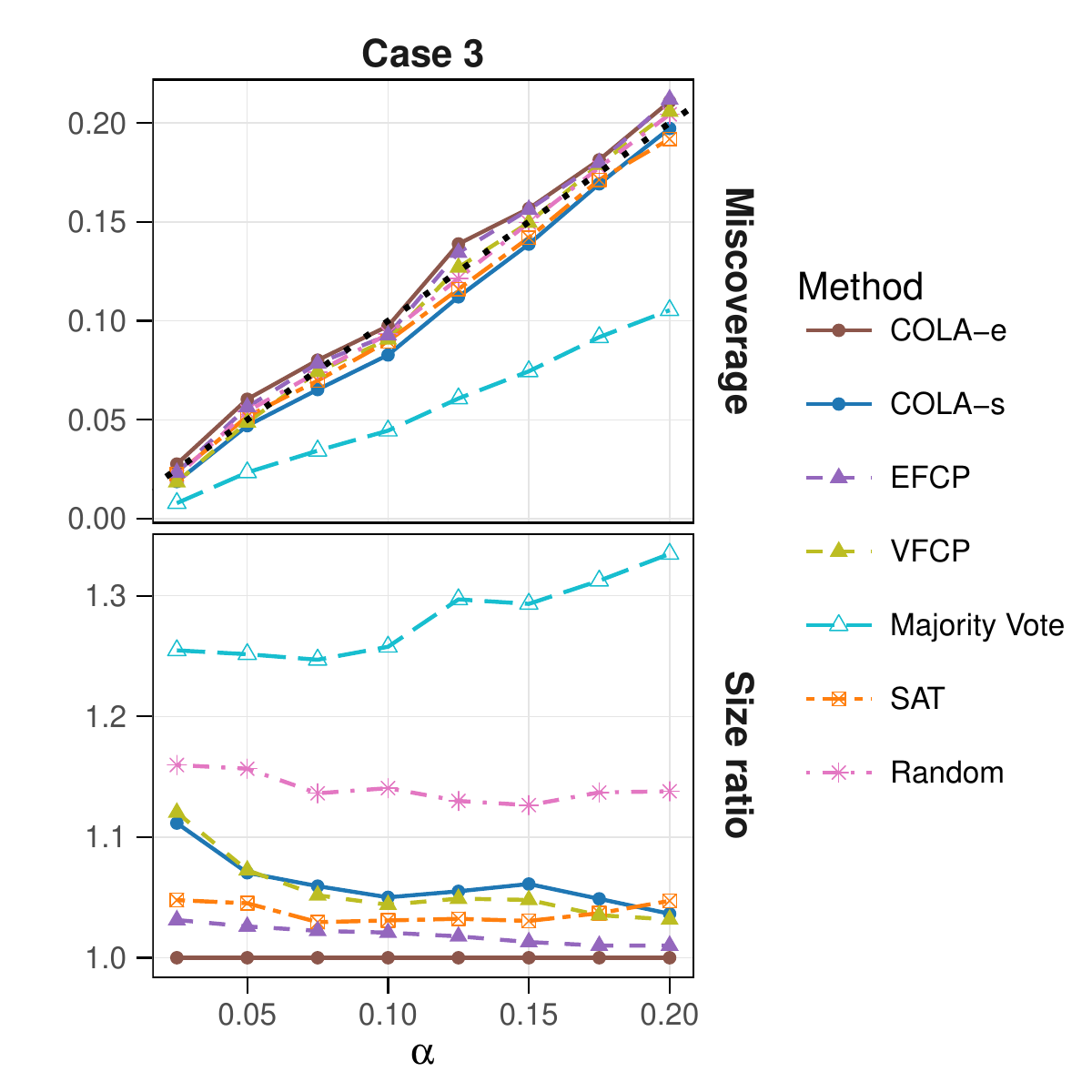}
	\end{minipage}
	\caption{Performance under Cases 1–3 with varying $\alpha$. Top row: coverage; bottom row: prediction set size ratio relative to COLA-e. The black dotted line represents the nominal miscoverage level.}
	\label{fig:lineplot_a}
\end{figure}

\subsection{Smoothing and Stepwise Optimization}\label{sec:optimization_comparision}
\subsubsection{Smoothing Optimization}
Due to the non-smooth nature of $\Lcal_n(\alphabold)$, a standard approach is to approximate its non-smooth components with smooth surrogates (e.g., \cite{xie2024boosted,kiyani2024length}). Suppose $K$ predictive models $\widehat{\mu}_k(\vx)$ are used to construct residual-score-based conformal prediction sets of the form $[\widehat{\mu}_k(\vx) - \widehat{q}_{k,\alpha_k}, \widehat{\mu}_k(\vx) + \widehat{q}_{k,\alpha_k}]$ with $\widehat{q}_{k,\alpha_k} = Q_{\alpha_k}(\{S_{k,i} = \abs{Y_i - \widehat{\mu}_k(\mX_i)}\}_{i \in [n]})$. Other types of interval-valued prediction sets can be smoothed analogously. For any input $\vx$ and allocation $\alphabold$, the prediction set is given by
$[\max_k \{\widehat{\mu}_k(\vx) - \widehat{q}_{k,\alpha_k}\}, \min_k \{\widehat{\mu}_k(\vx) + \widehat{q}_{k,\alpha_k}\}],$
where the non-smoothness arises from operations such as $\min$, $\max$ and indicator functions in empirical quantile definitions.

To approximate $\max$ and $\min$, the log-sum-exp function \citep{boyd2004convex} is used, yielding
\begin{align*}
	\widetilde{\max}(\{v_k\}_{k \in [K]})  = \tau_1^{-1} \log(\sum_{k\in[K]} \exp( \tau_1 v_k) ), ~~~~
	\widetilde{\min}(\{v_k\}_{k \in [K]})  = -\tau_1^{-1} \log(\sum_{k\in[K]} \exp( -\tau_1 v_k) ),
\end{align*}
with smoothing parameter $\tau_1 = 20$ in experiments.

For the quantiles, the empirical distribution is first smoothed as $\widetilde{F}_k(s) := \frac{1}{n} \sum_{i\in[n]} \Phi(\frac{s - S_{k,i}}{\tau_{2,k}})$, where $\Phi$ is the standard normal CDF and $\tau_{2,k} = \widehat{\sigma}_k n^{-1/5}$ is selected via Silverman's rule \citep{silverman2018density}, with $\widehat{\sigma}_k$ the sample standard deviation of $\{S_{k,i}: i \in [n]\}$. The smoothed quantile is defined as the inverse, $\widetilde{q}_{k,\alpha_k} = \widetilde{F}_k^{-1}(\alpha_k)$, computed numerically via binary search. Its gradient is
\[\nabla_{\alpha_k} \widetilde{q}_{k,\alpha_k} = n \tau_{2,k} / \sum_{i\in[n]} \phi(\frac{s - S_{k,i}}{\tau_{2,k}}),\]
where $\phi$ is the standard normal density.

Finally, the smoothed objective is
\[\widetilde{\Lcal}_n(\alphabold):= \frac{1}{n} \sum_{i\in[n]} \Big(\widetilde{\min}\{\widehat{\mu}_k(\mX_i) + \widetilde{q}_{k,\alpha_k}\} - \widetilde{\max}\{\widehat{\mu}_k(\mX_i) - \widetilde{q}_{k,\alpha_k}\}\Big) ,\]
whose gradient can be computed via the chain rule, thereby enabling gradient-based optimization.

\subsubsection{Numerical Comparison}
The smoothing and stepwise approaches are compared in the Case~3 setup in Section~\ref{sec:numerical_marginal}, with $K = 5$ or $K = 100$. The empirical loss $\Lcal_n(\alphabold)$ of the allocations produced by both methods is reported in Figure~\ref{fig:optimization_comparison}. For $K = 5$, grid search is computationally feasible and identifies the global optimum, which is closely approximated by the stepwise solution. The stepwise approach consistently achieves lower empirical loss than the smooth method, with the difference increasing for $K = 100$.
\begin{figure}[H]
	\centering
	\includegraphics[width=0.7\linewidth]{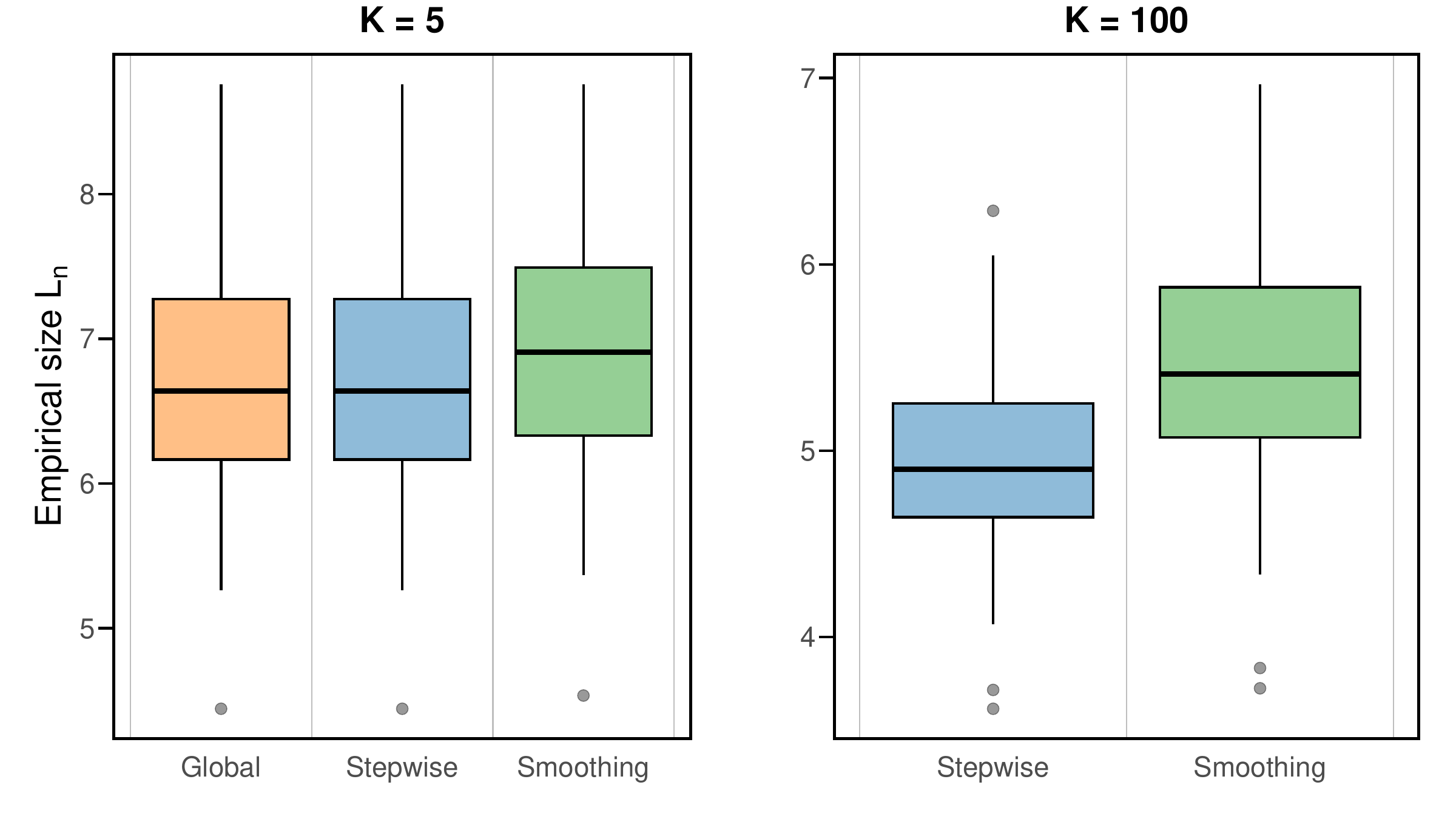}
	\caption{Empirical size $\Lcal_n(\alphabold)$ of the allocations obtained via smoothing and stepwise optimization.}
	\label{fig:optimization_comparison}
\end{figure}

\subsection{Additional Results on Real Datasets}\label{app:real_data}
The results on three additional UCI datasets---\textit{News Popularity, Kernel Performance, and Protein Structure}---are summarized in Table~\ref{tab:uci_remaining}. 
\begin{table}[H]
	\centering
	\caption{Coverage and size of prediction sets on three additional UCI datasets ($\alpha = 0.1$).}
	\label{tab:uci_remaining}
	\begin{tabular}{
			l
			>{\centering\arraybackslash}p{1.5cm}
			>{\centering\arraybackslash}p{1.5cm}
			>{\centering\arraybackslash}p{1.5cm}
			|
			>{\centering\arraybackslash}p{1.5cm}
			>{\centering\arraybackslash}p{1.5cm}
			>{\centering\arraybackslash}p{1.5cm}
		}
		\toprule
		\multirow{2}{*}{\textbf{Method}} 
		& \multicolumn{3}{c|}{\textbf{Coverage (\%)}} 
		& \multicolumn{3}{c}{\textbf{Size}} \\
		& \textit{Protein} & \textit{News} & \textit{Kernel} 
		& \textit{Protein} & \textit{News} & \textit{Kernel} \\
		\midrule
		$\mathrm{COLA\text{-}e}$ & 90.88 & 92.00 & 91.46   & 14.50 & 9570.65  & 372.28 \\
		$\mathrm{COLA\text{-}s}$       & 91.34 & 92.15 & 92.15   & 15.08 & 11429.48 & 407.67 \\
		EFCP    & 90.20 & 90.03 & 90.54   & 15.41 & 10688.04 & 385.58 \\
		VFCP         & 89.93 & 90.07 & 90.76   & 15.60 & 10981.36 & 390.40 \\
		Majority Vote  & 96.55 & 96.54 & 96.94   & 19.08 & 20394.38 & 670.15 \\
		SAT & 91.18 & 92.73 & 95.62   & 14.63 & 15168.10 & 509.44 \\
		Random     & 90.54 & 90.34 & 90.25   & 16.26 & 14995.35 & 606.19 \\
		\bottomrule
	\end{tabular}
\end{table}

\subsection{Description of Real Datasets}
\begin{table}[H]
	\centering
	\caption{The six UCI datasets used in our experiments. 
		$n$ denotes the total number of samples, and $d$ is the feature dimension.}
	\label{tab:dataset_info}
	\begin{tabular}{lccp{11cm}}
		\toprule
		\textbf{Dataset} & \textbf{$n$} & \textbf{$d$} & \textbf{URL (http://archive.ics.uci.edu/ml/datasets/*)} \\
		\midrule
		Blog & 52,397 & 280 & BlogFeedback \\
		Concrete & 1,030 & 8 & Concrete+Compressive+Strength \\
		Conduct & 21,263 & 81 & Superconductivty+Data \\
		Protein & 45,730 & 8 & Physicochemical+Properties+of+Protein+Tertiary+Structure\\
		News & 39,644 & 59 & Online+News+Popularity \\
		Kernel & 241,600 & 14 & SGEMM+GPU+kernel+performance \\
		\bottomrule
	\end{tabular}
\end{table}

\subsection{Predictive Models Implemented in R}
The predictive models employed in the experiments—ridge regression, regression trees, random forest, neural network, multivariate adaptive regression splines, and quantile regression forest—are implemented using the respective \texttt{R} packages: \texttt{glmnet}, \texttt{tree}, \texttt{randomForest}, \texttt{nnet}, \texttt{earth}, and \texttt{quantregForest}.

\end{document}